\newcommand{\stkout}[1]{\ifmmode\text{\sout{\ensuremath{#1}}}\else\sout{#1}\fi}
\newtheorem{lemma}{Lemma}
\newtheorem{theorem}{Theorem}
\newtheorem{assumption}{Assumption}
\DeclareMathOperator*{\argmin}{arg\,min}
\DeclareMathOperator*{\argmax}{arg\,max}
\begin{document}
\title{Faster estimation of dynamic discrete choice models\\ using index invertibility}
\author{Jackson Bunting\\Department of Economics\\ University of Washington\thanks{(Corresponding author) Department of Economics, the University of Washington, 1410 NE Campus Pkwy, Seattle, WA 98195. Email address: \href{mailto:buntingj@uw.edu}{buntingj@uw.edu}.} \and Takuya Ura\\Department of Economics\\ University of California, Davis\thanks{Department of Economics, University of California Davis, 1 Shields Ave. Davis, CA 95616. Email address: \href{mailto:takura@ucdavis.edu}{takura@ucdavis.edu}.}}
\maketitle 

\begin{abstract} 
Many estimators of dynamic discrete choice models with persistent unobserved heterogeneity have desirable statistical properties but are computationally intensive. In this paper we propose a method to quicken estimation for a broad class of dynamic discrete choice problems by exploiting semiparametric index restrictions. Specifically, we propose an estimator for models whose reduced form parameters are invertible functions of one or more linear indices \parencite{ahn2018simple}, a property we term index invertibility. We establish that index invertibility implies a set of equality constraints on the model parameters. Our proposed estimator uses the equality constraints to decrease the dimension of the optimization problem, thereby generating computational gains. Our main result shows that the proposed estimator is asymptotically equivalent to the unconstrained, computationally heavy estimator. In addition, we provide a series of results on the number of independent index restrictions on the model parameters, providing theoretical guidance on the extent of computational gains. Finally, we demonstrate the advantages of our approach via Monte Carlo simulations.
\begin{description}
\item[Keywords:]  Dynamic discrete choice, multiple-index model, pairwise differences, semiparametric regression.
\item[JEL Codes:] C01, C63
\end{description}
\end{abstract}

\newpage

\section{Introduction}

In dynamic discrete choice modeling, estimation of the structural parameters that underlie economic decisions is often computationally challenging. Many available estimators for the structural parameter of interest $\theta_{0}\in\Theta$ are extremum estimators: 
\begin{equation}\label{eq:inf_optimization}
\hat{\theta}^*=\arg\max_{\theta\in\Theta}\hat{Q}(\theta).
\end{equation}
For instance, the criterion function $\hat{Q}$ may be the log-likelihood function \parencite{john1988maximum}, a pseudo log-likelihood function \parencite{hotz1993conditional,arcidiacono2011conditional} or a minimum distance function \parencite{pesendorfer2008asymptotic}.  While these estimators offer appealing theoretical properties, they often impose substantial computational demands for multiple reasons. First, evaluating the criterion function may involve solving the model through costly fixed-point iteration or by simulation. Second, the criterion function's global concavity is not always guaranteed, often necessitating the use of global optimization methods or initializing a local optimization algorithm at various starting values. A relevant case is finite mixture models whose likelihood function may lack global concavity \parencites[e.g.,][ p. 182]{robert1999monte}[]{arcidiacono2011conditional}.

In this paper we harness the index restrictions inherent in many structural models to introduce an estimator for $\theta_0$ that offers substantial computational advantages and is asymptotically equivalent (of arbitrarily high order) to $\hat{\theta}^*$. Our focus is on models satisfying a condition we term `index invertibility'. Drawing from the semiparametric index regression literature, we describe a model as index invertible if its reduced form parameters are an invertible function of a vector of linear indices \parencite{ahn2018simple}. We establish that index invertibility implies a set of linear equality constraints which constrain $\theta_0$ to belong in a subspace of $\Theta$,\footnote{The subspace may be a strict subspace of $\Theta$ when there is at least one continuous covariate. We conjecture that it is possible to extend our method with inequality constraints when there is no continuous covariate \parencite{khan2018discussion}.} thereby reducing the dimension of the optimization problem presented in equation \eqref{eq:inf_optimization}. The main contribution of our paper is to propose an estimator which implements the constraints implied by index invertibility, and prove its asymptotic equivalence to the computationally intensive estimator $\hat{\theta}^*$.

Arguably, the class of index invertible structural econometric models is very broad. First, we prove that a broad class of dynamic discrete choice models with persistent unobserved heterogeneity (i.e., unobserved state variables) and partially linear flow utility satisfy index invertibility. In this leading example of index invertibility, the reduced form parameters are the conditional choice probabilities (defined as the probability of each choice conditional upon the observed state variables) which depend on multiple indices which govern the flow utility and transition of the state variables. Second, we do not restrict nor require specification of the number of indices required to attain index invertibility. Of course, as we show formally, the computational gains of our approach may diminish as the number of indices required to achieve index invertibility grows. Finally, the condition encompasses many invertible index models in the literature (see, for example, \textcite{ahn2018simple} and references therein).

Our approach is based on the observation that index invertibility implies a set of equality constraints on the structural parameter. Namely, we show that under index invertibility, the true parameter value $\theta_0$ satisfies $$\Sigma_{0}\boldsymbol{\gamma}(\theta_0)=0$$ for a known linear function $\boldsymbol{\gamma}(\cdot)$ and a nonparametrically identified matrix ${\Sigma}_0$. If $\Sigma_0$ were known, to solve the population version of equation \eqref{eq:inf_optimization} it would be sufficient to search among $\boldsymbol{\gamma}(\theta)$ in the nullspace of $\Sigma_0$. Our estimator builds upon this idea and is defined by the following two steps: first, given an estimator $\hat{\Sigma}$ for $\Sigma_{0}$ (e.g., kernel smoothing in Section \ref{sec:const_tildeSigma}) we compute
\begin{equation}\label{eq:tildetheta}
    \tilde{\theta}=\argmax_{\theta\colon\hat{\Sigma}\boldsymbol{\gamma}(\theta)=0}\hat{Q}\left(\theta\right).
\end{equation}
Solving the optimization problem in equation \eqref{eq:tildetheta} is computationally simpler than the unconstrained problem in equation \eqref{eq:inf_optimization} as it is only necessary to search over parameter values in $\{\theta\in\Theta\colon\hat{\Sigma}\boldsymbol{\gamma}(\theta)=0\}\subseteq\Theta$.\footnote{Since the constraints are linear in parameters, once estimated, they can be imposed generically at negligible computational cost. See Section \ref{app:mcmcsigma} for a general description.} The second step is to apply Newton-Raphson updates from $\tilde{\theta}$ in the direction of the target estimator $\hat\theta^*=\arg\max_{\theta\in\Theta}\hat{Q}(\theta)$ \parencite{robinson1988stochastic}. The resulting estimator is asymptotically equivalent to the more computationally intensive $\hat{\theta}^*$. Notably, given the typical statistical justification for $\hat{\theta}^*$ relies on asymptotic approximations (of a certain order), our proposed estimator inherits the favorable statistical properties of $\hat{\theta}^*$ but is computationally more efficient. To illustrate, if $\hat{\theta}^*$ stands for the parametric maximum likelihood estimator, then, under standard conditions, our method can achieve the Cram\'{e}r-Rao bound at lower computational cost by leveraging semiparametric index restrictions. Regardless, we emphasize that, under the conditions elaborated in the following sections, our method can be used to target any extremum estimator $\hat\theta^*$ within a model satisfying index invertibility---for example, if $\hat\theta^*$ is motivated by computational considerations, our method may be used to further reduce the computational burden.

As computational efficiency motivates our estimator, it is natural to explore the magnitude of possible computational benefits. Section \ref{sec:rank_Sigma} provides some theoretical insights on this question. Recall that the computational gains arise from imposing the constraints $\Sigma_0\boldsymbol{\gamma}(\theta_0)=0$. Thus a key determinant of the computational benefits of our estimator is the rank of $\Sigma_0$: the larger the rank of $\Sigma_0$, the more restrictions $\Sigma_0\boldsymbol{\gamma}(\theta_0)=0$ places on $\theta_0$. Using the definition of $\Sigma_0$ (equation \eqref{eq:sigma0}), we develop a series of results on the rank of $\Sigma_0$. Our results suggest two situations where the computational gains of our method will be large: either in the presence of many continuous covariates, or in the presence of at least one continuous covariate that satisfies a particular rectangular support condition. Our results also suggest that the rank of $\Sigma_0$ may decrease with the number of indices required to attain index invertibility.

To illustrate the advantages of our method, we consider a Monte Carlo experiment based on the utility function specification of \textcite{toivanen2005market}.\footnote{We consider a single-agent model instead of the two-agent model in \textcite{toivanen2005market}.} They estimate a dynamic model of firm entry into the U.K. fast food market between 1991 and 1995. In their model, firm profits from entry are determined by market size, which is modeled as depending on a long vector of socio-economic variables \parencite[e.g.,][]{bresnahan1991entry,toivanen2005market,aguirregabiria2020identification}. Due to the availability of these continuous socio-economic variables, our method is able to feasibly apply 8 restrictions to the 11-dimensional payoff parameter vector. Specifically, we consider two standard estimators as target estimators for our method: one based on the approach of \textcite{bajari2011simple}, the other on \textcite{arcidiacono2011conditional}. By simulating data from this model, we demonstrate that our estimator is, on average, around 20 times computationally more efficient than the standard approaches to estimating the model,\footnote{Our simulations demonstrate two benefits of effectively reducing the dimension of an optimization problem that requires repeated initializations: first, fewer initializations are required to reasonably cover the lower dimensional space; second, the lower dimensional problem is easier to solve \textit{per} initialization. In our simulations, roughly, we use three times fewer initializations for the lower dimensional problem, and our estimator is around six times faster \textit{per} initialization. Of course, while the number of required initializations may increase exponentially with the dimension of the optimization problem, it remains at the user's discretion and may be parallelized.} and provide empirical validation of our main theoretical result.

Our proposed method aims to contribute to a large literature on the computational aspects of structural modeling and, in particular, dynamic discrete choice \parencites[e.g., ][]{hotz1994simulation,arcidiacono2011conditional,bajari2011simple,su2012constrained,arcidiacono2013approximating,kristensen2021solving}. Rather than proposing an alternative to computationally advantageous estimators in the literature, our method can be used to improve the computation time for any estimator that can be expressed as the maximizer of a smooth sample criterion function. For instance, in our Monte Carlo experiment, we apply our method to estimators based on \textcites{arcidiacono2011conditional,bajari2011simple}, the latter of which is especially known to be computationally attractive.  Parts of this paper are closely related to \textcite{ahn2018simple}, who develop a computationally simple estimator for a class of invertible index models. Whereas their paper focuses on identification and estimation of the index parameter, we allow the index parameter to be one part of a broader structural model and harness the semiparametric index restrictions for computational purposes.

The remainder of the paper is structured as follows. To fix ideas on our leading example of index invertibility, Section \ref{sec:ddc} considers the specific context of dynamic discrete choice models. We then formally introduce the general model and index invertibility, and derive the equality constraints implied by index invertibility (Section \ref{sec:model}). Section \ref{sec:rank_Sigma} derives bounds on the rank of $\Sigma_0$, an important determinant of the number of independent restrictions in $\Sigma_0\boldsymbol{\gamma}(\theta_0)=0$. Section \ref{sec:estimation} outlines the estimator and derives its equivalence to the computationally intensive estimator, our main result. Finally, Section \ref{sec:appl} presents our Monte Carlo experiment. In the Appendices we gather proofs, a proposed consistent estimator for $\Sigma_0$, and additional details on the Monte Carlo exercise.

\section{Index invertibility in dynamic discrete choice models}\label{sec:ddc}

In this section, we introduce the idea of `index invertibility' within the context of dynamic discrete choice (DDC) models. We have two goals for this section: first, to illustrate the idea of index invertibility through a simple and classical DDC model (Section \ref{sec:simple}); second, to introduce a broad, empirically relevant class of DDC models and prove that they satisfy index invertibility (Section \ref{subsec:ddcm}). In Section \ref{sec:model}, we discuss index invertibility in a general model. We emphasize that the model in this section is just \textit{one} example of sufficient conditions for index invertibility within a DDC model: the conditions specified in this section do not preclude other DDC models from possessing the index invertibility property (cf. Section 5).

\subsection{An illustrative dynamic discrete binary choice model}\label{sec:simple}

In each period $t=1,2,\dots,T=\infty$, an agent observes a state variable $S_{t}$ and chooses an action $A_{t}\in\mathcal{A}=\{0,1\}$ to maximize their expected discounted utility.\footnote{The result of this section applies to $T<\infty$ (i.e., a non-stationary problem). We present only the $T=\infty$ case for notational ease.} The state variable is composed of two subvectors, $Z_{t}$ and $\epsilon_{t}$ which are observed and unobserved to the econometrician, respectively. We suppose the distribution of $\epsilon_t\in\mathbb{R}$ is known and has full support. The agent has time-separable utility and discounts future payoffs by the known rate $\beta_0\in[0,1)$.\footnote{Here and throughout the paper we use the subscript $0$ to indicate the true parameter value.} The period $t$ payoff for action $1$ is given by $Z_t^\intercal\gamma_0+\epsilon_t$, where $\gamma_0\in\mathbb{R}^{\dim(Z)}$ is unknown. The flow payoff to action $A_t=0$ is 0. 

We assume $\{Z_t,\epsilon_t,A_t\}$ follows a stationary first-order Markov process and satisfies the following conditional independence assumption:
\begin{equation*}
d\Pr(Z_{t+1},\epsilon_{t+1}\mid Z_t,\epsilon_t,A_t)=dF_Z(Z_{t+1}|Z_t,A_t)\times dF_{\epsilon}(\epsilon_{t+1}).
\end{equation*}
We further impose  $$F_Z(z'|z,a)=G(z',\delta_0^\intercal z,a)$$ for some function $G$ and $\delta_0\in\mathbb{R}^{\dim(Z)\times J_2}$. This assumption is substantive when $J_2<\dim(Z)$,\footnote{The existence of $(\delta_0,G)$ is without loss of generality since one can always set $\delta_0$ equal to the identity matrix and $G=F_Z$ (in this case, $J_2=\dim(Z)$). As we explain below, the computational gain from our method comes from the rank of the matrix $\Sigma_0$ in equation \eqref{eq:sigma0}. In the trivial case of $\delta_0$ equaling identity, the rank of $\Sigma_0$ is zero, and our method does not provide a computational gain.} and is testable from observed data since $F_Z$ is nonparametrically identifiable.\footnote{For Lemma \ref{theorem:A1_simple_model}, it is sufficient that the expected future utility $E[{v}(Z_{t+1})\mid Z_t=z,A_t=a]$ depends on $z$ only through $\delta_0^\intercal z$.} Since $G$ and $\delta_0$ are nonparametrically identified, they can be consistently estimated in a computationally feasible manner. Notice that the above conditions allow the special case that the transition of the state variable is deterministic (e.g., the lagged choice is a state variable), but in general the state transition may be unknown but nonparametrically identified.

Now define the conditional choice probability function as 
$$
\Pi_0(z)=\Pr(A_{t}=1\mid{Z}_{t}=z)=\Pr\left(v(1,z)+\epsilon_{t}>v(0,z)\right),
$$
where $v(a,z) = \gamma_0^\intercal z 1\{a=1\}+\beta E[{v}(Z_{t+1})\mid Z_t=z,A_t=a]$ and $v(z)$ is the equilibrium ex-ante value function.\footnote{The ex-ante value function is defined as the discounted sum of future payoffs from optimal behavior given $Z_t=z$  but before the agent observes $\epsilon_t$ and chooses $A_t$, which we assume is well defined as a function of $z$ (\textcite{john1988maximum}. See, e.g., \textcite[p. 11]{aguirregabiria2007sequential}  or \textcite[p. 1036]{bugni2021iterated}.} In this simple model, the unknown structural parameter includes the payoff parameter $\gamma$, and a nuisance parameter $\delta$ which governs the state transition probabilities and is identified directly from the data. We now show that, if the state transition index $\delta_0^\intercal z$ is held fixed, the conditional choice probabilities $\Pi_0$ are an invertible function of the payoff index $\gamma_0^\intercal z$, an example of a property we term {`index invertibility'} (Assumption \ref{assn:key_equivalence}):%\footnote{This result is an extension of \textcite{ahn2018simple}, which shows index invertibility for a static binary choice model. In the static model, there is no transition parameter $\delta$. As shown in the proof to Lemma \ref{theorem:A1_simple_model}, the condition $\delta_0^\intercal z_1=\delta_0^\intercal z_2$ guarantees the future value is the same for the two values $z_1$ and $z_2$.}

\begin{lemma}\label{theorem:A1_simple_model}
For the dynamic discrete choice problem in this section, 
\begin{equation}\label{eq:inv_theorem_simple}
\Pi_0(z_1)=\Pi_0(z_2)
\iff
\gamma_0^\intercal z_1=\gamma_0^\intercal z_2
\end{equation}
for every pair of points, $z_1$ and $z_2$, in the support of $Z$ with $\delta_0^\intercal z_1=\delta_0^\intercal z_2$.
\end{lemma}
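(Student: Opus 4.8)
The plan is to peel the conditional choice probability down to a strictly monotone transformation of a choice-specific value difference, and then use the conditional-independence and index structure of the state transition to show that, among states sharing a common value of $\delta_0^\intercal z$, this difference is pinned down by the payoff index $\gamma_0^\intercal z$ alone.

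First I would write
\[
\Pi_0(z) = \Pr\big(\epsilon_t > v(0,z) - v(1,z) \mid Z_t = z\big) = 1 - F_\epsilon\big(v(0,z) - v(1,z)\big),
\]
where $F_\epsilon$ is the known distribution of $\epsilon_t$. Because $\epsilon_t$ has full support, $F_\epsilon$ is strictly increasing, so $\Pi_0(z)$ is a strictly increasing function of $v(1,z) - v(0,z)$; hence $\Pi_0(z_1) = \Pi_0(z_2)$ if and only if $v(1,z_1) - v(0,z_1) = v(1,z_2) - v(0,z_2)$. The lemma thus reduces to showing that, for $z_1,z_2$ with $\delta_0^\intercal z_1 = \delta_0^\intercal z_2$, the value differences agree exactly when $\gamma_0^\intercal z_1 = \gamma_0^\intercal z_2$.

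Second, substituting the definition of $v(a,z)$ gives
\[
v(1,z) - v(0,z) = \gamma_0^\intercal z + \beta_0\Big(E[v(Z_{t+1}) \mid Z_t = z, A_t = 1] - E[v(Z_{t+1}) \mid Z_t = z, A_t = 0]\Big).
\]
The key step is to observe that the continuation value $E[v(Z_{t+1}) \mid Z_t = z, A_t = a] = \int v(z')\, dG(z', \delta_0^\intercal z, a)$ depends on $z$ only through $\delta_0^\intercal z$: the conditional-independence assumption lets us write the one-step-ahead law of $Z_{t+1}$ as $F_Z(\cdot \mid z,a) = G(\cdot, \delta_0^\intercal z, a)$, and the integrand $v(\cdot)$ is a fixed function that does not vary with $z$. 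Therefore, whenever $\delta_0^\intercal z_1 = \delta_0^\intercal z_2$, the continuation-value terms coincide at $z_1$ and at $z_2$ and cancel in the difference, leaving
\[
\big(v(1,z_1) - v(0,z_1)\big) - \big(v(1,z_2) - v(0,z_2)\big) = \gamma_0^\intercal (z_1 - z_2).
\]
Combining this with the first step delivers both directions of \eqref{eq:inv_theorem_simple}.

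I expect the only genuine subtlety to be a preliminary one: confirming that the ex-ante value function $v(\cdot)$ is well-defined --- equivalently, that the agent's Bellman operator is a contraction with a unique fixed point, which holds under $\beta_0 \in [0,1)$ and is in any event assumed in the statement of the model --- and spelling out carefully that the integrand in the continuation-value integral does not itself depend on $z$, so that the index structure of $G$ passes directly to the continuation value. The remaining manipulations are routine.
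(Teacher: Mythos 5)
Your proposal is correct and follows essentially the same route as the paper's proof: both express $\Pi_0(z)$ as a strictly monotone (injective) function of the payoff index $\gamma_0^\intercal z$ plus a continuation-value term that depends on $z$ only through $\delta_0^\intercal z$ (via $G(\cdot;\delta_0^\intercal z,a)$), so that fixing $\delta_0^\intercal z$ makes the choice probability an invertible function of $\gamma_0^\intercal z$. The paper simply compresses your three steps into a single display and an injectivity remark.
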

\begin{proof}
From the definition of the conditional choice probability, we have 
\begin{align*}
\Pi_0(z)=\Pr\left(\epsilon_{t}>-\gamma_0^\intercal z -\beta\int{v}(z')\left(G(dz';\delta_0^\intercal z,1)-G(dz';\delta_0^\intercal z,0)\right)
\right).
\end{align*}
Since $\epsilon_t$ has full support, the function $$u_1\mapsto \Pr\left(\epsilon_{t}>-u_1 -\beta\int{v}(z')\left(G(dz';u_2 ,1)-G(dz';u_2,0)\right)
\right)$$ is injective for every $u_2$. Therefore, for every pair of points, $z_1$ and $z_2$, in the support of $Z$ with $\delta_0^\intercal z_1=\delta_0^\intercal z_2$ (i.e., $=u_2$), we have 
$$
\Pi_0(z_1)=\Pi_0(z_2)
\iff
\gamma_0^\intercal z_1=\gamma_0^\intercal z_2.~\eqno\qedhere
$$
\end{proof}

The condition in equation \eqref{eq:inv_theorem_simple} provides the basis of the computational savings of our method. To conclude the current section, we offer some intuition for this connection. To explain, suppose $\delta_0$ and $\Pi_0$ are known (note that they are identified directly from the data and can be estimated in a computationally attractive manner). If there are two values $z=z_1$ and $z=z_2$ that yield the same value of $(\delta_0^\intercal z,\Pi_{0}(z))$, then equation \eqref{eq:inv_theorem_simple} implies that the true payoff parameter $\gamma_0$ satisfies $\gamma_0^\intercal (z_1-z_2)=0$. Thus, purely from knowledge of this $z_1,z_2$, we learn that $\gamma_0$ belongs in the strict subspace of $\mathbb{R}^{\dim(\gamma)}$, $\{\gamma:\gamma^\intercal (z_1-z_2)=0\}$, and thus the effective dimension of $\gamma_0$ has decreased. In Section \ref{sec:estimation} we formally introduce our estimator which builds on this idea, broadening the number of pairs of $Z$ which are used to find the lower dimensional subspace (i.e., Theorem \ref{theorem:pair_diff}), and taking into account that $\delta_0$ and $\Pi_0$ may be estimated.

\subsection{Dynamic discrete choice models with unobserved heterogeneity}
\label{subsec:ddcm}

In this section we introduce a broad class of dynamic discrete choice models that satisfy index invertibility. The choice set is $\mathcal{A}=\{0,1,\ldots,J_1\}$, and the state variable consists of three subvectors, $Z_{t}$, $\lambda_{t}$ and $\epsilon_{t}$ where $Z_{t}$ and $(\lambda_{t},\epsilon_{t})$ are observed and unobserved to the econometrician, respectively. The unobserved components $\epsilon_{t}$ are action specific, i.e., $\epsilon_{t}\in\mathbb{R}^{J_1+1}$, and we assume $\epsilon_{t}$ has full support. The period $t$ payoff for action $a$ is given by 
\begin{equation}\label{eq:flow_gen}
\gamma_0(a)^\intercal Z_t+f(\delta_{u,0}^\intercal Z_t,a,\lambda_{t})+\epsilon_t(a),   
\end{equation}
where $\gamma_0(a)\in\mathbb{R}^{\dim(Z)}$, $\delta_{u,0}\in\mathbb{R}^{\dim(Z)\times J_{2,u}}$, and $f$ is a (possibly) nonlinear function.\footnote{We do not claim that the flow payoff is identified nonparametrically without, e.g., restrictions on $\delta_{u,0}$ and $f$. In applications of interest, $\delta_{u,0}$ may be a known matrix that governs which elements of $Z_{t}$ enter $f$, and $f$ may be a parametric function. As explained in Section \ref{sec:model}, we assume that some structural parameter of interest $\theta_0$ is point identified, and focus on harnessing the semiparametric structure to efficiently estimate the point identified parameter.} Note that, as a special case, some parts of the index parameter $\gamma(a)$ may be zero (and known), so that the corresponding elements of the observed state $Z_{t}$ affect payoffs only via the nonlinear function $f$. In other words, the flow utility function may be `partially linear' in its arguments---it allows for the case that only a subset of the state variables enter linearly.

We collect the parameter $\gamma_0(a)$ over $a\in\mathcal{A}$ in a matrix $\gamma_0=\left[\gamma_0(1),\ldots,\gamma_0(J_1)\right]\in\mathbb{R}^{\dim(Z)\times J_1}$, and impose the outside good assumption $ \gamma_0(0)=0$ and $f(\delta_{1,0}^\intercal Z_t,0,\lambda_{t})=0$. As usual, suppose $\{Z_t,\epsilon_t,\lambda_t,A_t\}$ follows a stationary first-order Markov process and satisfies 
\begin{align}
d\Pr(\epsilon_{t+1},z_{t+1},\lambda_{t+1}|\epsilon_t,z_t,\lambda_{t},a_t)&=dF_{\epsilon}(\epsilon_{t+1})\times dG(z_{t+1},\delta_{F,0}^\intercal z_t, a_t)\times dF_\lambda(\lambda_{t+1}|\lambda_{t})\label{eq:conditional_independence_lambda}
\end{align}
for some $(\delta_{F,0},G)$ such that $$F_Z(z'|z,a)=G(z',\delta_{F,0}^\intercal z, a).$$
The condition in \eqref{eq:conditional_independence_lambda} strengthens the standard conditional independence assumption of \textcite{rust1994structural} by imposing a stronger independence condition on $\epsilon_t$, and a type of conditional independence between $\lambda_t$ and $Z_t$. Within this model, we define $\Pi_{0}(z)=\{\Pi_0(a,z)\colon a = 0, 1, \ldots, J_{1}\}$ as a mixture of the $\lambda_{t}$-specific conditional choice probability function, that is
$$
\Pi_0(a,z)=\int\Pr\left(a=\arg\max_{\tilde{a}\in\mathcal{A}}\left\{v(\tilde{a},Z_t,\lambda_t)+\epsilon_{t}(\tilde{a})\right\}\mid Z_t=z,\lambda_{t}=l\right)dF_{\lambda_t}(l),
$$
where $v(a,Z_t,\lambda_t) =\gamma_0(a)^\intercal Z_t+f(\delta_{u,0}^\intercal Z_t,a,\lambda_t)+\beta E[{v}(Z_{t+1},\lambda_{t+1})\mid Z_t,\lambda_t,A_t=a]$ and $v(Z_{t+1},\lambda_{t+1})$ is the equilibrium ex-ante value function. In Appendix \ref{sec:A1_proof}, we invoke the results of \textcite{kasahara2009nonparametric} to show that $\Pi_0$ can estimated directly from the data, and prove the next result that states that the above model is index invertible.

\begin{theorem}\label{theorem:A1}
The dynamic discrete choice problem of Section \ref{subsec:ddcm} satisfies index invertibility in the sense that, for each pair $z_1$ and $z_2$ in the support of $Z$ that satisfy $\delta_0^\intercal z_1=\delta_{0}^\intercal z_2$ for $\delta=[\delta_{u}^\intercal,\delta_{F}^\intercal]^\intercal$,
\begin{equation*}
\Pi_0(z_1)=\Pi_0(z_2)
\iff
\gamma_0^\intercal z_1=\gamma_0^\intercal z_2.
\end{equation*}
\end{theorem}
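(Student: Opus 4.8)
The plan is to follow the logic of Lemma~\ref{theorem:A1_simple_model}, with two additions to accommodate the latent state $\lambda_t$ and the multinomial choice set. First I would use the conditional independence assumption \eqref{eq:conditional_independence_lambda} to show that the continuation value enters the choice-specific value functions only through the index $\delta_{F,0}^\intercal z$. Second, I would replace the scalar-distribution-function monotonicity used in the binary case by an injectivity argument for the mixture (over $\lambda_t$) of the multinomial conditional choice probability map, obtained from strict convexity of the associated surplus function.

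For the first step, by \eqref{eq:conditional_independence_lambda} the conditional law of $(Z_{t+1},\lambda_{t+1})$ given $(Z_t,\lambda_t,A_t)=(z,l,a)$ is the product $G(\cdot,\delta_{F,0}^\intercal z,a)\times F_\lambda(\cdot\mid l)$, which depends on $z$ only through $\delta_{F,0}^\intercal z$. Hence $E[v(Z_{t+1},\lambda_{t+1})\mid Z_t=z,\lambda_t=l,A_t=a]=\bar v(\delta_{F,0}^\intercal z,l,a)$ for some function $\bar v$, and therefore
$$
v(a,z,l)=\gamma_0(a)^\intercal z+f(\delta_{u,0}^\intercal z,a,l)+\beta\,\bar v(\delta_{F,0}^\intercal z,l,a).
$$
For any $z_1,z_2$ in the support of $Z$ with $\delta_0^\intercal z_1=\delta_0^\intercal z_2$ --- equivalently $\delta_{u,0}^\intercal z_1=\delta_{u,0}^\intercal z_2$ and $\delta_{F,0}^\intercal z_1=\delta_{F,0}^\intercal z_2$ --- this gives $v(a,z_1,l)-v(a,z_2,l)=\gamma_0(a)^\intercal(z_1-z_2)$ for every $a\in\mathcal A$ and every $l$. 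In particular, if $\gamma_0^\intercal z_1=\gamma_0^\intercal z_2$ then $v(a,z_1,l)=v(a,z_2,l)$ for all $a,l$, so the $\lambda_t$-conditional choice probabilities coincide at $z_1$ and $z_2$ for every $l$; integrating against $F_{\lambda_t}$ yields $\Pi_0(z_1)=\Pi_0(z_2)$, which is the ``$\Leftarrow$'' direction.

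For ``$\Rightarrow$'', set $w_a(l)=v(a,z_2,l)-v(0,z_2,l)$ and $d_a=\gamma_0(a)^\intercal(z_1-z_2)$ for $a=1,\dots,J_1$, and write $w(l)=(w_1(l),\dots,w_{J_1}(l))$, $d=(d_1,\dots,d_{J_1})$; by the previous paragraph the vector of utility contrasts at $z_1$ equals $w(l)+d$ and at $z_2$ equals $w(l)$. Since $\arg\max_{\tilde a\in\mathcal A}$ is invariant to a common shift of all utilities, there is a choice-probability map $P(a;\cdot)$ on $\mathbb R^{J_1}$, depending only on the known law of $\epsilon_t$, with $\Pi_0(a,z_1)=\int P(a;w(l)+d)\,dF_{\lambda_t}(l)$ and $\Pi_0(a,z_2)=\int P(a;w(l))\,dF_{\lambda_t}(l)$. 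Introduce the surplus function $\mathcal G(x)=E\big[\max\{\epsilon_t(0),\,x_1+\epsilon_t(1),\dots,x_{J_1}+\epsilon_t(J_1)\}\big]$; by the Williams--Daly--Zachary identity $\partial\mathcal G/\partial x_a=P(a;x)$, and the full-support assumption on $\epsilon_t$ makes $\mathcal G$ strictly convex on $\mathbb R^{J_1}$. Then $\bar{\mathcal G}(d):=\int\mathcal G(w(l)+d)\,dF_{\lambda_t}(l)$ is also strictly convex, with $\partial\bar{\mathcal G}/\partial d_a=\int P(a;w(l)+d)\,dF_{\lambda_t}(l)$. If $\Pi_0(z_1)=\Pi_0(z_2)$ then $\nabla\bar{\mathcal G}(d)=\nabla\bar{\mathcal G}(0)$, and injectivity of the gradient of a strictly convex function forces $d=0$, i.e., $\gamma_0(a)^\intercal z_1=\gamma_0(a)^\intercal z_2$ for every $a$, which is $\gamma_0^\intercal z_1=\gamma_0^\intercal z_2$.

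The step I expect to require the most care is the strict convexity of $\mathcal G$ --- equivalently, injectivity of the multinomial conditional choice probability map --- together with its transfer to the translate-and-average function $\bar{\mathcal G}$. In the binary model of Section~\ref{sec:simple} this collapses to the strict monotonicity of a single distribution function, but in the multinomial case one must verify that the full-support hypothesis on $\epsilon_t$ genuinely yields a positive-definite Hessian for $\mathcal G$ (the kinks of $x\mapsto\max_a\{x_a+\epsilon_t(a)\}$ occurring with positive $\epsilon_t$-probability in every direction), after which strict convexity of $\bar{\mathcal G}$ is immediate, since an average of translates of a strictly convex function is strictly convex. The remaining items --- the existence of the ex-ante value function $v$ (taken as given in the model set-up), the reduction of $\arg\max$ to utility contrasts, and differentiation under the $\lambda_t$-integral --- are routine.
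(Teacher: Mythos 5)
Your proof is correct, but it takes a genuinely different route from the paper's. Both arguments hinge on the same structural fact, which you establish cleanly in your first step: because the continuation value enters only through $\delta_{F,0}^\intercal z$ and $f$ only through $\delta_{u,0}^\intercal z$, the vector of utility contrasts shifts by $d_a=\gamma_0(a)^\intercal(z_1-z_2)$ \emph{uniformly in} $\lambda$, which is what allows a statement about the mixture $\Pi_0$ to be pulled back to a statement about $\gamma_0$. From there the paper argues directly and elementarily: it selects the single action $a^*=\argmin_{a\in\mathcal{A}}\gamma_0(a)^\intercal(z_1-z_2)$, uses $\gamma_0(0)=0$ to force at least one strict inequality $(\gamma_0(a)-\gamma_0(a^*))^\intercal(z_1-z_2)>0$, shows that the $\epsilon$-region on which $a^*$ is optimal at $z_1$ is a strict subset of the corresponding region at $z_2$ for every $\lambda$, and invokes full support to conclude $\Pi_0(a^*,z_1)<\Pi_0(a^*,z_2)$ after integrating over $\lambda$. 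You instead prove injectivity of the entire mixture CCP map in one shot via the surplus function and the Williams--Daly--Zachary identity: strict convexity of $\mathcal{G}$ is preserved under translation and averaging over $\lambda$, so $\nabla\bar{\mathcal{G}}$ is injective and equal gradients force $d=0$. Your route is more conceptual and delivers a stronger intermediate result (a mixture version of the Hotz--Miller inversion, i.e., global injectivity of $d\mapsto\int P(\,\cdot\,;w(l)+d)\,dF_{\lambda_t}(l)$), at the cost of more machinery; the paper's route produces only a one-coordinate strict inequality but needs nothing beyond a set-inclusion argument. The one place your argument leans on slightly more than the stated assumptions is the differentiability of $\mathcal{G}$ and the WDZ identity, which require ties among the $\epsilon_t(a)$ to be $F_\epsilon$-null; full support alone does not literally guarantee this, though absolute continuity of $F_\epsilon$ is standard in this setting, whereas the paper's weak-inequality set-inclusion argument sidesteps the issue entirely.
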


Relative to the illustrative model of Section \ref{sec:simple}, the model for which we establish index invertibility in Theorem \ref{theorem:A1} is substantially richer. First, it allows for multinomial choice (e.g., occupational choice, number of operating plants, etc.). Second, the model includes persistent unobserved heterogeneity $\lambda_{t}$. In the general case that $\lambda_{t}$ is time-varying, this variable may be referred to as the (vector of) `unobserved states'; in the special case that $\lambda_{t}$ is fixed across time (i.e., $\lambda_{t}=\lambda$), it is often called the agent/market's `type' or permanent unobserved heterogeneity. Furthermore, the persistent unobserved heterogeneity may enter the flow payoff function non-linearly with states through $f$. Fourth, the flow utility function is allowed to be non-linear in states. Fifth, as noted above, as a special case the flow utility may be `partially linear'---that is, only some part of the flow utility may be linear---which appears to be a common functional form in empirical work (see, among others, \textcites{arcidiacono2005affirmative,todd2006assessing,kennan2011effect,beffy2012choosing}).\footnote{At some notational cost, we could generalize the flow utility function in equation \eqref{eq:flow_gen} to $h(\gamma_0(a)^\intercal Z_t)+f(\delta_{u,0}^\intercal Z_t,a,\lambda_{t})+\epsilon_t(a)$ where $h$ is an invertible function. For example, $h$ may be the exponential function.} Thus, anticipating the results of Section \ref{sec:rank_Sigma}, our method has the potential to generate computational gains in DDC models with `partially linear' flow utility in the presence of continuous observed state variables. 

\section{A general model with index invertibility} \label{sec:model}

In this paper we are interested in learning a finite-dimensional parameter vector $\theta_0\in\Theta$ that is identified as the unique maximum of a population criterion function ${Q}(\theta)$:   
$$
{\theta}_0=\arg\max_{\theta\in\Theta}{Q}(\theta).
$$
For instance, $\theta_0$ may be the point-identified structural parameter (sub-)vector in a dynamic discrete choice model. If $\hat{Q}$ is an estimator for $Q$, one may estimate $\theta_0$ by 
$$
\hat{\theta}^*=\arg\max_{\theta\in\Theta}\hat{Q}(\theta).
$$ 
However, in many cases, finding the maximum of $\hat{Q}(\theta)$ over $\Theta$ may be computationally challenging. For instance, if $\hat{Q}(\theta)$ represents the sample log-likelihood function of a DDC model \parencite{rust1994structural,aguirregabiria2002swapping}, $\hat{Q}(\theta)$ may lack a known closed form, requiring iterative or simulation methods to compute. Moreover, in the presence of unobserved types and/or states, $\hat{Q}(\theta)$ may lack global concavity \parencite{arcidiacono2011conditional}, necessitating repeated initialization of the optimization algorithm.

Our goal is to obtain an asymptotically equivalent estimator to  $\hat{\theta}^*$ in a computationally feasible way. We achieve this by incorporating the following restriction into the optimization. 

\begin{assumption}[Index invertibility]\label{assn:key_equivalence}
Let $\boldsymbol{\gamma}(\theta)\in\mathbb{R}^{\dim (Z) \times J_1 }$ be a known linear function of $\theta$ and $Z\in \mathbb{R}^{\dim (Z)}$ be a random vector. Denote $\gamma=\boldsymbol{\gamma}(\theta)$. There exists functions $Z\mapsto\Pi_0(Z)$ and $\delta_0\in\mathbb{R}^{\dim (Z)\times J_2}$ such that, for every pair of points, $z_1$ and $z_2$, in the support of $Z$ with $\delta_0^\intercal z_1=\delta_0^\intercal z_2$,
$$
\Pi_0(z_1)=\Pi_0(z_2)
\iff
\gamma_0^\intercal z_1=\gamma_0^\intercal z_2.
$$
\end{assumption}

We refer to Assumption \ref{assn:key_equivalence} as index invertibility. It states that for a known linear function $\boldsymbol{\gamma}(\theta)$ of the parameter of interest, the random variable $Z$ can be used to construct a vector of indices $[\gamma_0,\delta_0]^\intercal Z$ for which $\Pi_0(Z)$ is an invertible function of $\gamma_0^\intercal Z$, while $\delta_0^\intercal z$ is held fixed. It is worth noting that the qualifier $\delta_0^\intercal z_1=\delta_0^\intercal z_2$ is included to make Assumption \ref{assn:key_equivalence} apply more generally: we allow for the case that $\delta_0$ is the $ \dim (Z)\times 1$ zero vector. We can interpret $J_1+J_2$ as the number of indices required to achieve index invertibility. For example, in the DDC example of Section \ref{sec:simple}, $J_1$ is the number of index parameters that enter the flow utility (i.e., the size of choice set less the outside option), and $J_2$ is the number of indices that enter the state transition. As we show in Section \ref{sec:rank_Sigma}, the computational benefits of our approach depend importantly on $J_1$ and $J_2$. 

In practice, one's choice of $(\delta_0,\Pi_0)$ is best guided by application-specific knowledge and the chosen structural model. For example, in the model of Section 5, one part of the flow profits depends on a set of time-invariant socio-economic variables. To be more specific, although profits depend on both the (time-varying) number of operating stores and the (time-invariant) socio-economic variables, we assume that the available market size depends only on the latter. Then, as we show, the probability of operating at least one store in a given time period is increasing in the available market size. In this case, we can use $\delta_0$ to hold fixed the time period, and $\Pi_0$ is naturally related to the probability that the firm operates at least one store.

In order to exploit index invertibility, we define the matrix
\begin{equation}\label{eq:sigma0}
{\Sigma}_0\equiv E[(Z_1-Z_2)(Z_1-Z_2)^\intercal\mid \Pi_0(Z_1)=\Pi_0(Z_2),\delta_0^\intercal Z_1=\delta_0^\intercal Z_2],
\end{equation}
where $Z_1$ and $Z_2$ are independent random variables with the same marginal distribution as $Z$. The next result shows that $\Sigma_{0}$ characterizes the equality constraints that are implied by index invertibility.

\begin{theorem}\label{theorem:pair_diff}
Under Assumption \ref{assn:key_equivalence}, ${\Sigma}_0=E[(Z_1-Z_2)(Z_1-Z_2)^\intercal\mid [\gamma_0,\delta_0]^\intercal(Z_1-Z_2)=0]$ and 
\begin{equation}
{\Sigma}_0\gamma_0=0.\label{eq:1}
  \end{equation}
\end{theorem}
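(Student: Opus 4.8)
The plan is to prove the two assertions in sequence; essentially all the work is in the first (rewriting the conditioning event), and the second follows in a line. Begin by recording that, since $Z_1$ and $Z_2$ are independent draws from the marginal law of $Z$, the pair $(Z_1,Z_2)$ lies in $\mathrm{supp}(Z)\times\mathrm{supp}(Z)$ almost surely, so Assumption~\ref{assn:key_equivalence} may be applied pointwise with $(z_1,z_2)=(Z_1(\omega),Z_2(\omega))$. On the event $\{\delta_0^\intercal Z_1=\delta_0^\intercal Z_2\}$ this gives $\Pi_0(Z_1)=\Pi_0(Z_2)\iff\gamma_0^\intercal Z_1=\gamma_0^\intercal Z_2$, and intersecting with $\{\delta_0^\intercal Z_1=\delta_0^\intercal Z_2\}$ yields
\[
\{\Pi_0(Z_1)=\Pi_0(Z_2),\,\delta_0^\intercal Z_1=\delta_0^\intercal Z_2\}=\{\gamma_0^\intercal Z_1=\gamma_0^\intercal Z_2,\,\delta_0^\intercal Z_1=\delta_0^\intercal Z_2\}=\{[\gamma_0,\delta_0]^\intercal(Z_1-Z_2)=0\}
\]
up to a $\Pr$-null set. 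Because a conditional expectation depends on the conditioning event only through its equivalence class, this is exactly the first claimed identity for $\Sigma_0$.

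Given the reexpression, the orthogonality is immediate: on the event $\{[\gamma_0,\delta_0]^\intercal(Z_1-Z_2)=0\}$ we have in particular $\gamma_0^\intercal(Z_1-Z_2)=0$, hence $(Z_1-Z_2)(Z_1-Z_2)^\intercal\gamma_0=(Z_1-Z_2)\big((Z_1-Z_2)^\intercal\gamma_0\big)=0$ identically on that event. Since $\gamma_0=\boldsymbol{\gamma}(\theta_0)$ is a fixed matrix it commutes with the conditional expectation, so
\[
\Sigma_0\gamma_0=E\big[(Z_1-Z_2)(Z_1-Z_2)^\intercal\gamma_0\mid[\gamma_0,\delta_0]^\intercal(Z_1-Z_2)=0\big]=0.
\]

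The only step that warrants care — and the one I would be most deliberate about — is measure-theoretic rather than algebraic: the conditioning events may have probability zero, so $\Sigma_0$ must be understood through a regular conditional distribution (or, when $Z$ admits a density, as a limit over shrinking neighborhoods of the constraint set), and the displayed equality of conditioning events should be stated at that level; likewise, one must first invoke concentration of $(Z_1,Z_2)$ on $\mathrm{supp}(Z)\times\mathrm{supp}(Z)$ before applying Assumption~\ref{assn:key_equivalence}, which is phrased in terms of support points. Linearity of $\boldsymbol{\gamma}$ is used only to know that $\gamma_0$ is a constant matrix that passes through the expectation, and no other obstacles arise.
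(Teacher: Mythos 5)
Your proposal is correct and follows essentially the same route as the paper's proof: rewrite the conditioning event via the equivalence in Assumption \ref{assn:key_equivalence}, then use $\gamma_0^\intercal(Z_1-Z_2)=0$ on that event to conclude $\Sigma_0\gamma_0=0$. The additional measure-theoretic care you flag (null conditioning events, concentration on the support) is a legitimate refinement that the paper leaves implicit, but it does not change the argument.
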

\begin{proof}
By Assumption \ref{assn:key_equivalence} and equation \eqref{eq:sigma0}, we have $${\Sigma}_0=E[(Z_1-Z_2)(Z_1-Z_2)^\intercal\mid [\gamma_0,\delta_0]^\intercal(Z_1-Z_2)=0].$$ Therefore, ${\Sigma}_0\gamma_0=E[(Z_1-Z_2)(Z_1-Z_2)^\intercal\gamma_0\mid [\gamma_0,\delta_0]^\intercal(Z_1-Z_2)=0]=0$.
\end{proof}

Theorem \ref{theorem:pair_diff} shows that index invertibility (Assumption \ref{assn:key_equivalence}) implies that $\theta_0$ satisfies $\dim (Z)\times J_1$ equality constraints, namely that $\theta_0\in\{\theta\in\Theta\colon{\Sigma}_0\gamma_0=0\}\subseteq\Theta$. If $\Sigma_0$ were known, then imposing the equality constraints in the optimization problem (equation \eqref{eq:inf_optimization}) necessarily eases the computational burden, since the search is limited to a smaller set of possible parameter values. Our estimator (described in Section \ref{sec:estimation}) builds on these ideas.

Equation \eqref{eq:1} suggests there are $\dim (Z)\times J_1$ restrictions on $\theta_0$, however, in practice, these restrictions may be linearly dependent. From equation \eqref{eq:1}, a key determinant of the number of linearly independent restrictions is the rank of $\Sigma_0$: in the extreme case that $\Sigma_0=0$, there are no restrictions on $\theta_0$ from $\Sigma_0\gamma_0=0$; in the other extreme case that the rank of $\Sigma_0$ is $\dim (Z)-1$, then there are $(\dim (Z)-1)\times J_1$ restrictions on $\boldsymbol{\gamma}(\theta)$ \parencite[e.g.,][]{ahn2018simple}. Given the importance of the number of linearly independent restrictions to the benefits of imposing the equality constraints, we now provide some results on the rank of $\Sigma_0$ (Section \ref{sec:rank_Sigma}).

\subsection{Rank of constraint matrix}\label{sec:rank_Sigma}

In this section, we consider the rank of $\Sigma_{0}\in\mathbb{R}^{\dim(Z)\times \dim(Z)}$, which determines the strength of restrictions implied by index invertibility. Under index invertibility, each column of the structural parameter $\gamma_0\in\mathbb{R}^{\dim(Z)\times J_1}$ belongs in the nullspace of $\Sigma_{0}$, which has dimension $\dim(Z)-\mathrm{rank}(\Sigma_0)$ by the rank-nullity theorem. Ergo, the effective number of restrictions on $\gamma_0$ implied by index invertibility is rank$(\Sigma_0)\times J_1$. That is, the larger the rank of $\Sigma_{0}$, the greater the computational advantage of imposing the equality constraints $\Sigma_0\gamma_0=0$. In this section, we provide some sufficient conditions under which a lower bound on $\mathrm{rank}(\Sigma_0)$ can be characterized.

To summarize the findings in broad terms, the results of this section provide two routes to achieving a high $\mathrm{rank}(\Sigma_0)$: either by having many continuous components of $Z$ (Theorem \ref{theorem:rank_Sigma2}), or by having one continuous component of $Z$ that satisfies a particular support condition (Theorem \ref{theorem:rank_Sigma3}). In terms of practical guidance for applied work, the results of this section can be viewed as suggestive of the type of empirical settings where our results may generate large effective dimension reduction. Namely, in an index invertible model with either many continuous covariates, or one continuous covariate that satisfies a rectangular support condition, our method may generate large computational savings. For example, in the dynamic discrete choice model studied in  Section \ref{sec:appl}, $Z$ includes 9 continuous components, and we can apply Theorem \ref{theorem:rank_Sigma2} to show $\Sigma_0$ implies 8 restrictions on the structural parameter.

The first theorem provides a lower bound on the rank of $\Sigma_0$ which depends on the number of continuous components of $Z$, but may be lower when the number of indices $J_1+J_2$ is larger.

\begin{theorem}\label{theorem:rank_Sigma2}
Suppose $Z=[Z_{A}^\intercal,Z_{B}^\intercal]^\intercal$ and there is a support point $z=[z_A^\intercal,z_B^\intercal]^\intercal$ of $Z$ such that $z_A$ is an interior point of the conditional support of $Z_{A}$ given $Z_{B}=z_B$. Then 
$$
\mathrm{rank}({\Sigma}_0) \ge \dim(Z_A)-\mathrm{rank}(Var([\gamma_0,\delta_0]^\intercal[Z_{A}^\intercal,0^\intercal]^\intercal)).
$$
Furthermore, if $\delta_0^\intercal Z$ is discrete, then
$$
\mathrm{rank}({\Sigma}_0)\geq \dim(Z_A)-J_1.
$$
\end{theorem}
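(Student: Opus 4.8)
The plan is to bound $\dim(\mathrm{null}(\Sigma_0))$ from above and invoke rank--nullity. Write $M\equiv[\gamma_0,\delta_0]\in\mathbb{R}^{\dim(Z)\times(J_1+J_2)}$ and let $M_A$ be the submatrix of $M$ collecting the rows indexed by $Z_A$, so that $M_A=[(\gamma_0)_A,(\delta_0)_A]$ and $M^\intercal[Z_A^\intercal,0^\intercal]^\intercal=M_A^\intercal Z_A$. By Theorem~\ref{theorem:pair_diff}, $\Sigma_0=E[(Z_1-Z_2)(Z_1-Z_2)^\intercal\mid M^\intercal(Z_1-Z_2)=0]$, which is positive semidefinite, so a vector $v=[v_A^\intercal,v_B^\intercal]^\intercal$ (partitioned as $Z$) lies in $\mathrm{null}(\Sigma_0)$ iff $v^\intercal\Sigma_0v=E[(v^\intercal(Z_1-Z_2))^2\mid M^\intercal(Z_1-Z_2)=0]=0$, i.e.\ iff $v^\intercal(Z_1-Z_2)=0$ almost surely on the conditioning event. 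The goal is therefore to show that any such $v$ has $v_A\in\mathrm{col}(M_A)$; this gives $\mathrm{null}(\Sigma_0)\subseteq\{v:v_A\in\mathrm{col}(M_A)\}$, hence $\dim(\mathrm{null}(\Sigma_0))\le\mathrm{rank}(M_A)+\dim(Z_B)$ and $\mathrm{rank}(\Sigma_0)\ge\dim(Z_A)-\mathrm{rank}(M_A)$.

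To establish $v_A\in\mathrm{col}(M_A)$ I would use the interior hypothesis. It gives an open ball $B(z_A,r)$ contained in the conditional support of $Z_A$ given $Z_B=z_B$, so $(z_A',z_B)\in\mathrm{supp}(Z)$ for all $z_A'\in B(z_A,r)$. For any $w\in\ker(M_A^\intercal)$ with $\|w\|<r$, the independent pair $Z_1=(z_A+w,z_B)$, $Z_2=(z_A,z_B)$ satisfies $M^\intercal(Z_1-Z_2)=M_A^\intercal w=0$, so it lies in the support of the conditional law defining $\Sigma_0$; consequently $v^\intercal(Z_1-Z_2)=v_A^\intercal w=0$. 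Letting $w$ range over a neighborhood of the origin in the subspace $\ker(M_A^\intercal)$ and using linearity, $v_A^\intercal w=0$ for every $w\in\ker(M_A^\intercal)$, i.e.\ $v_A\in(\ker(M_A^\intercal))^{\perp}=\mathrm{col}(M_A)$, as wanted. Finally, the interior hypothesis also forces $Var(Z_A)$ to be nonsingular (a law on $\mathbb{R}^{\dim(Z_A)}$ whose support has nonempty interior cannot be supported on a proper affine subspace), and when $Var(Z_A)$ is nonsingular $\mathrm{rank}(M_A^\intercal Var(Z_A)M_A)=\mathrm{rank}(M_A)$; since $Var([\gamma_0,\delta_0]^\intercal[Z_A^\intercal,0^\intercal]^\intercal)=Var(M_A^\intercal Z_A)=M_A^\intercal Var(Z_A)M_A$, this yields the first displayed inequality.

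For the second display I would show that discreteness of $\delta_0^\intercal Z$ together with the interior hypothesis forces $(\delta_0)_A=0$. Indeed, as $z_A'$ ranges over $B(z_A,r)$ the point $\delta_0^\intercal(z_A',z_B)=(\delta_0)_A^\intercal z_A'+(\delta_0)_B^\intercal z_B$ stays in $\mathrm{supp}(\delta_0^\intercal Z)$; if $(\delta_0)_A\neq0$ this is the image of an open ball under a nonzero linear map, hence uncountable, contradicting discreteness of $\delta_0^\intercal Z$. With $(\delta_0)_A=0$, the random vector $[\gamma_0,\delta_0]^\intercal[Z_A^\intercal,0^\intercal]^\intercal$ has its last $J_2$ coordinates identically zero, so its variance has rank equal to $\mathrm{rank}(Var((\gamma_0)_A^\intercal Z_A))\le J_1$ (a $J_1\times J_1$ matrix). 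Plugging this into the first display gives $\mathrm{rank}(\Sigma_0)\ge\dim(Z_A)-J_1$.

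The linear-algebra manipulations are routine; the step that needs the most care is the first one---making precise that the constructed pairs $Z_1=(z_A+w,z_B)$, $Z_2=(z_A,z_B)$ genuinely belong to the support of the probability-zero-event conditional distribution defining $\Sigma_0$, which is where the regularity governing that conditional law (continuity/positivity of the relevant density along $\{M^\intercal(Z_1-Z_2)=0\}$) is used---together with the related point of upgrading nonsingularity of $Var(Z_A\mid Z_B=z_B)$ to nonsingularity of the unconditional $Var(Z_A)$ when $z_B$ is not an atom, which calls for a short localization argument around $z_B$.
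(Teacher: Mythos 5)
Your proof is correct and is essentially the paper's argument written in contrapositive form: where the paper exhibits $\dim(Z_A)-\mathrm{rank}(Var([\gamma_0,\delta_0]^\intercal[Z_A^\intercal,0^\intercal]^\intercal))$ explicit directions $\bar\nu=[w^\intercal,0^\intercal]^\intercal$ (with $w$ in the relevant kernel) along which the interior-point perturbation $z\mapsto z+c\bar\nu$ stays in the conditioning event and yields $\bar\nu^\intercal\Sigma_0\bar\nu>0$, you use the very same perturbation to bound the nullspace of $\Sigma_0$ from above and then apply rank--nullity. The only presentational differences are that you recover the $\mathrm{rank}(Var(\cdot))$ term via positive definiteness of $Var(Z_A)$ under the interior hypothesis rather than through the paper's factorization $[\gamma_0,\delta_0]^\intercal[Z_A^\intercal,0^\intercal]^\intercal=\mathbf{M}_1\mathbf{M}_2Z_A$, and that you make explicit (without fully resolving) the same regularity point about the measure-zero conditioning event that the paper's proof also leaves implicit.
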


By the interior-point assumption, the variable $Z_A$ is continuously distributed (given $Z_B$). The term $\mathrm{rank}(Var([\gamma_0,\delta_0]^\intercal[Z_{A}^\intercal,0^\intercal]^\intercal))$ represents how many components in $[\Pi_0(Z),Z^\intercal\delta_0]^\intercal$ are continuously distributed. It is naturally bounded above by $J_1+J_2$, the number of indices required to achieve index invertibility---Theorem \ref{theorem:rank_Sigma2} states that is preferable for this number to be small relative to the number of continuous components of $Z$. In particular, the second part of Theorem \ref{theorem:rank_Sigma2} states it is desirable for the non-structural index $\delta_0^\intercal Z$ to depend only on discrete components of $Z$. In this case $J_1$ is an upper bound for $\mathrm{rank}(Var([\gamma_0,\delta_0]^\intercal[Z_{A}^\intercal,0^\intercal]^\intercal))$. To provide a concrete example, in a dynamic discrete choice problem this would occur if the state transition depended only upon lagged actions and discrete state variables.

The second theorem states that if one component of $Z$ satisfies an additional condition, then the lower bound on $\mathrm{rank}(\Sigma_0)$ does not depend on the {number} of continuous components of $Z$. To show this result, we modify the arguments of \textcite{horowitz1996direct} to the current framework. 

\begin{theorem}\label{theorem:rank_Sigma3}
Suppose the conditions of Theorem \ref{theorem:rank_Sigma2} and that $Var(Z_B)$ is full rank. If, in addition, the conditional support of $[\gamma_0,\delta_0]^\intercal Z$ given $Z_{B}=z_B$ is the same as the
support of $[\gamma_0,\delta_0]^\intercal Z$, then 
$$
\mathrm{rank}({\Sigma}_0)\geq \dim(Z)-\mathrm{rank}(Var([\gamma_0,\delta_0]^\intercal[Z_{A}^\intercal,0^\intercal]^\intercal)).
$$
Furthermore if $\delta_0^\intercal Z$ is discrete, then
$$
\mathrm{rank}({\Sigma}_0)\geq \dim(Z)-J_1.
$$
\end{theorem}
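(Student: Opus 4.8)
The plan is to build on Theorem~\ref{theorem:rank_Sigma2} and then supply the extra $\dim(Z_B)$ directions that the rectangular-support hypothesis provides, adapting the linking idea of \textcite{horowitz1996direct}. Write $M:=[\gamma_0,\delta_0]$, so that by Theorem~\ref{theorem:pair_diff} the matrix $\Sigma_0$ is the second-moment matrix of $W:=Z_1-Z_2$ conditional on the event $E:=\{M^\intercal(Z_1-Z_2)=0\}$. Exchanging $Z_1$ and $Z_2$ preserves the joint law while negating $W$, so the conditional law of $W$ given $E$ is symmetric about the origin; hence $\mathrm{Im}(\Sigma_0)$, the column space of $\Sigma_0$, equals the linear span of the support of $(W\mid E)$, and it suffices to exhibit enough support points. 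Let $P_B$ be the coordinate projection of $\mathbb{R}^{\dim(Z)}$ onto its $Z_B$-block, and put $r:=\mathrm{rank}(\mathrm{Var}(M^\intercal[Z_A^\intercal,0^\intercal]^\intercal))$, so that the goal is $\mathrm{rank}(\Sigma_0)\ge\dim(Z)-r$.

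I would use two families of support points. The first (``within $z_B$'') re-runs the construction behind Theorem~\ref{theorem:rank_Sigma2}: fixing the $Z_B$-blocks of both copies at the value $z_B$ of the given support point and letting their $Z_A$-blocks range near the interior point $z_A$, the difference $W$ sweeps out, on $E$, a relatively open neighbourhood of the origin in the subspace $N:=\{[w^\intercal,0^\intercal]^\intercal:M^\intercal[w^\intercal,0^\intercal]^\intercal=0\}$; hence $N\subseteq\mathrm{Im}(\Sigma_0)$, and $\dim N=\dim(Z_A)-r$ (the interior-point hypothesis makes $\mathrm{Var}(Z_A)$ nonsingular, so $r$ equals the rank of the $Z_A$-rows of $M$, exactly as in Theorem~\ref{theorem:rank_Sigma2}). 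The second family (``across $z_B$''), where the new hypotheses enter, is obtained as follows: given any support point $z_B'$ of $Z_B$, pick $\mu$ in the nonempty conditional support of $M^\intercal Z$ given $Z_B=z_B'$; a conditional support is always contained in the unconditional support, and by hypothesis the unconditional support of $M^\intercal Z$ equals its conditional support given $Z_B=z_B$, so $\mu$ belongs to the conditional supports of $M^\intercal Z$ given $Z_B=z_B$ and given $Z_B=z_B'$ alike. Therefore there are points $z^{(1)},z^{(2)}\in\mathrm{supp}(Z)$ with $Z_B$-blocks $z_B$ and $z_B'$ and $M^\intercal z^{(1)}=M^\intercal z^{(2)}=\mu$, so $z^{(1)}-z^{(2)}$ lies in the support of $(W\mid E)$ and satisfies $P_B(z^{(1)}-z^{(2)})=z_B-z_B'$. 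As $z_B'$ ranges over $\mathrm{supp}(Z_B)$ the differences $z_B-z_B'$ span $\mathbb{R}^{\dim(Z_B)}$, since $\mathrm{Var}(Z_B)$ is full rank; hence $P_B(\mathrm{Im}(\Sigma_0))=\mathbb{R}^{\dim(Z_B)}$.

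To conclude, I would invoke the elementary fact that a subspace $V\subseteq\mathbb{R}^{\dim(Z)}$ with $N\subseteq V$ and $P_B(V)=\mathbb{R}^{\dim(Z_B)}$ satisfies $\dim V\ge\dim N+\dim(Z_B)$, because $P_B|_V$ is surjective with kernel $V\cap(\mathbb{R}^{\dim(Z_A)}\times\{0\})\supseteq N$. Taking $V=\mathrm{Im}(\Sigma_0)$ gives $\mathrm{rank}(\Sigma_0)\ge(\dim(Z_A)-r)+\dim(Z_B)=\dim(Z)-r$, the first display. For the second display, if $\delta_0^\intercal Z$ is discrete then the $Z_A$-rows of $\delta_0$ must vanish: otherwise some column $c$ of that submatrix is nonzero, and conditional on $Z_B=z_B$ the corresponding component of $\delta_0^\intercal Z$ equals $c^\intercal Z_A$ plus a constant, so—since $Z_A$ has full-dimensional conditional support near $z_A$—the support of that component contains a nondegenerate interval, contradicting discreteness. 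Consequently $M^\intercal[Z_A^\intercal,0^\intercal]^\intercal$ has its last $J_2$ coordinates identically zero, so $r\le J_1$ and $\mathrm{rank}(\Sigma_0)\ge\dim(Z)-J_1$.

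The step I expect to be the main obstacle is the measure-theoretic bookkeeping: since $\Sigma_0$ conditions on $\{\Pi_0(Z_1)=\Pi_0(Z_2),\ \delta_0^\intercal Z_1=\delta_0^\intercal Z_2\}$, typically a probability-zero event, ``the support of $(W\mid E)$'' must be understood through whichever limiting or Radon--Nikodym construction defines this conditional law, and one must verify that the pairs $z^{(1)}-z^{(2)}$ produced by the linking step genuinely lie in that support rather than merely in the closed set carrying it (images of supports under the relevant maps involve closures, so a small perturbation or density argument is probably needed). The within-$z_B$ part already sits inside Theorem~\ref{theorem:rank_Sigma2}, and the other ingredients—the projection inequality and the discreteness corollary—are routine.
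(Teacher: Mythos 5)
Your proposal is correct and follows essentially the same route as the paper's proof: reuse the within-$z_B$ directions from Theorem \ref{theorem:rank_Sigma2}, then use the rectangular-support hypothesis to link a point with $Z_B$-block $z_B$ to one with $Z_B$-block $z_B'$ having the same index value, producing difference vectors on the conditioning event whose $Z_B$-blocks span $\mathbb{R}^{\dim(Z_B)}$, and count dimensions. Your explicit projection inequality and your anchoring of one endpoint at the special point $z_B$ (the only point at which the support hypothesis is assumed) are slightly more careful renderings of steps the paper treats implicitly, but the argument is the same.
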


Relative to Theorem \ref{theorem:rank_Sigma2}, Theorem \ref{theorem:rank_Sigma3} provides an improved lower bound by depending on the length of $Z$ instead of the number of continuous components in $Z$. This improved bound is available when $(\gamma_0,\delta_0)^\intercal Z$ satisfies a rectangular support assumption.

\section{Estimation}\label{sec:estimation}

In this section, we introduce our estimator for $\theta_0$. Our method is motivated by the computational difficulty of an available estimator $\hat\theta^*=\arg\max_{\theta\in\Theta}\hat{Q}(\theta)$, where $\Theta$ is a subset of a Euclidean space and $\hat{Q}:\Theta\rightarrow\mathbb{R}$ is a sample criterion function. As discussed previously, in many cases the estimator $\hat{\theta}^*$ is computationally heavy, or may even be computationally infeasible in practice. For example, maximum likelihood estimation of finite-mixture dynamic discrete models is considered extremely computationally costly, to such a degree that alternative estimators are often preferred \parencite[e.g.,][]{arcidiacono2011conditional}.

Our estimator $\hat{\theta}$ for $\theta_0$ is constructed in the following two steps: 
\begin{itemize}
\item[]\textbf{Step 1:} Estimate $\Sigma_0$ with $\hat{\Sigma}$, and compute
$$
\tilde\theta=\argmax_{\theta\in\Theta:\  \hat{\Sigma}\boldsymbol{\gamma}(\theta)=0}\hat{Q}\left(\theta\right).
$$
\item[]\textbf{Step 2:} Estimate $\theta_0$ with $\hat{\theta}$, computed as follows. Given $L\in\mathbb{N}$ and  $\tilde\theta$ from Step 1, 
\begin{eqnarray*}
\tilde\theta_1&=&\tilde\theta-\hat{Q}^{(2)}(\tilde\theta)^{-1}\hat{Q}^{(1)}(\tilde\theta)\\
\tilde\theta_2&=&\tilde\theta_1-\hat{Q}^{(2)}(\tilde\theta_1)^{-1}\hat{Q}^{(1)}(\tilde\theta_1)\\
&\vdots&\\
\hat\theta&=&\tilde\theta_{L-1}-\hat{Q}^{(2)}(\tilde\theta_{L-1})^{-1}\hat{Q}^{(1)}(\tilde\theta_{L-1}),
\end{eqnarray*}
where $\hat{Q}^{(1)}(\theta)$ and $\hat{Q}^{(2)}(\theta)$ are the first and second derivatives of $\hat{Q}(\theta)$. 
\end{itemize}

In the first step, we form a preliminary estimator $\tilde{\theta}$ by maximizing the sample criterion function subject to the estimated constraints $\hat{\Sigma}\boldsymbol{\gamma}(\theta)=0$. Since these are (low dimensional) linear equality constraints, they can be easily imposed in practice with negligible computational cost (see Section \ref{app:mcmcsigma} for one approach). The second step consists of $L$ Newton-Raphson iterates from the preliminary estimator $\tilde{\theta}$. The main result of this section (Theorem \ref{theorem:main}) states that the number of Newton-Raphson iterates controls the rate at which $\hat{\theta}-\hat{\theta}^*$ converges to zero as  sample size $n$ diverges (i.e., $n\rightarrow\infty$). The remainder of this section is dedicated to showing this result, which will use two additional assumptions.

The first step of our estimator solves a maximization problem subject to the estimated constraint $\hat{\Sigma}\gamma=0$. Naturally, we require that the estimated constraint $\hat{\Sigma}\gamma=0$ provides a good approximation to ${\Sigma}_0\gamma=0$, which we formalize in Assumption \ref{assn:tildeSigma_converge}.

\begin{assumption}\label{assn:tildeSigma_converge}
$\hat{\Sigma}-{\Sigma_0}=o_p(1)$ and $\Pr(\mathrm{rank}(\hat{\Sigma})=\mathrm{rank}(\Sigma_{0}))=1+o(1)$.
\end{assumption}

The first part of Assumption \ref{assn:tildeSigma_converge} states that $\hat{\Sigma}$ is consistent for ${\Sigma_0}$. Notably, the rate of convergence need not be known by the econometrician. In particular, we allow the rate of convergence to be arbitrarily slow: Theorem \ref{theorem:main} implies that even if the convergence rate is slow, only moderate increases in $L$ are required to attain fast convergence between our estimator and the computationally intensive estimator. Many nonparametric methods can achieve consistent estimation (e.g., kernel smoothing, nearest neighbor, splines, or series estimators). In Section \ref{sec:const_tildeSigma}, we provide conditions for consistent estimation using kernel smoothing \parencite{ahn2018simple}.

% IT IS OK TO UNDERESTIMATE THE RANK

The second part of Assumption \ref{assn:tildeSigma_converge} states that $\hat{\Sigma}$ is rank-consistent, which ensures that $\hat{\Sigma}\gamma=0$ imposes the same number of linearly independent constraints as $\Sigma_0\gamma=0$ with probability approaching one. Given a consistent estimator $\tilde\Sigma$, which may or may not have the same rank as $\Sigma_0$, one may construct a rank-consistent estimator by a low rank approximation.\footnote{Instead of this approach of $\hat\Sigma$, we may be able to apply a rank estimator, e.g., in \textcite{chen2019improved}. Since our results rely only on the convergence rate of $\tilde\theta$ and the rank is correctly estimated with probability approaching one, we conjecture that estimating the rank does not change our main result.} To explain, let $\hat\lambda_1\geq\cdots\geq\hat\lambda_K$ be the eigenvalues of $\tilde{\Sigma}$, and $\hat\nu_1,\cdots,\hat\nu_K$ be the corresponding eigenvectors. Define the low-rank approximation
\begin{equation}\label{eq:hatSigma_define}
\hat{\Sigma}\equiv[\hat\nu_1,\cdots,\hat\nu_K]^\intercal\mathrm{diag}\left(\hat\lambda_1\cdot 1\{\hat\lambda_1>\kappa\},\ldots,\hat\lambda_K\cdot 1\{\hat\lambda_K>\kappa\}\right)[\hat\nu_1,\cdots,\hat\nu_K],
\end{equation}
where $\kappa$ is a threshold value. The following result (Lemma \ref{lemma:Sigma_estimation}) states that the low-rank approximation $\hat{\Sigma}$ satisfies Assumption \ref{assn:tildeSigma_converge} as long as $\kappa$ converges to zero slowly. 

\begin{lemma}\label{lemma:Sigma_estimation}
If $\tilde{\Sigma}-\Sigma_{0}=o_{p}(\kappa)$ for $\kappa=o(1)$, then $\hat{\Sigma}$ defined in equation \eqref{eq:hatSigma_define} satisfies Assumption \ref{assn:tildeSigma_converge}.
\end{lemma}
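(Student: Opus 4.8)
The plan is to show that, with probability approaching one, the threshold $\kappa$ separates the nonzero eigenvalues of $\tilde{\Sigma}$ from its remaining eigenvalues; once this is established, both parts of Assumption~\ref{assn:tildeSigma_converge} drop out almost immediately. Throughout, recall that $\Sigma_0$ is symmetric and positive semidefinite (it is a conditional second-moment matrix), so, writing $r=\mathrm{rank}(\Sigma_0)$ and $K=\dim(Z)$, its eigenvalues can be ordered as $\lambda_1\ge\cdots\ge\lambda_r>0=\lambda_{r+1}=\cdots=\lambda_K$. I will take $\tilde{\Sigma}$ to be symmetric; if it is not, one replaces it by $(\tilde{\Sigma}+\tilde{\Sigma}^\intercal)/2$, which still satisfies $(\tilde{\Sigma}+\tilde{\Sigma}^\intercal)/2-\Sigma_0=o_p(\kappa)$ and makes the spectral decomposition in~\eqref{eq:hatSigma_define} meaningful. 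Then $\hat{\Sigma}=\sum_{j=1}^{K}\hat\lambda_j\,1\{\hat\lambda_j>\kappa\}\hat\nu_j\hat\nu_j^\intercal$ is the spectral truncation of $\tilde{\Sigma}$ at level $\kappa$; note it is a function of $\tilde{\Sigma}$ alone and does not depend on the particular choice of orthonormal eigenbasis $\{\hat\nu_j\}$ (which matters when eigenvalues are repeated).

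First I would control the eigenvalues of $\tilde{\Sigma}$ via Weyl's perturbation inequality: $\max_{j}|\hat\lambda_j-\lambda_j|\le\|\tilde{\Sigma}-\Sigma_0\|=o_p(\kappa)$, where $\|\cdot\|$ denotes the operator norm. For $j\le r$ this gives $\hat\lambda_j\ge\lambda_r-o_p(\kappa)$; since $\lambda_r>0$ is a fixed constant and $\kappa=o(1)$, it follows that $\Pr(\hat\lambda_j>\kappa)\to1$ for each such $j$. For $j>r$ we have $\lambda_j=0$, hence $\hat\lambda_j=o_p(\kappa)$ and therefore $\Pr(\hat\lambda_j\le\kappa)\to1$. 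Let $E_n$ be the event that $\hat\lambda_j>\kappa$ for all $j\le r$ and $\hat\lambda_j\le\kappa$ for all $j>r$; combining the finitely many statements above gives $\Pr(E_n)\to1$.

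It then remains to read off the two conclusions on the event $E_n$. On $E_n$ we have $\hat{\Sigma}=\sum_{j\le r}\hat\lambda_j\hat\nu_j\hat\nu_j^\intercal$ with $\hat\lambda_1,\dots,\hat\lambda_r>\kappa>0$ and $\hat\nu_1,\dots,\hat\nu_r$ orthonormal, so $\mathrm{rank}(\hat{\Sigma})=r=\mathrm{rank}(\Sigma_0)$; hence $\Pr(\mathrm{rank}(\hat{\Sigma})=\mathrm{rank}(\Sigma_0))\ge\Pr(E_n)=1+o(1)$, which is the second part of Assumption~\ref{assn:tildeSigma_converge}. For the first part, decompose $\hat{\Sigma}-\Sigma_0=(\hat{\Sigma}-\tilde{\Sigma})+(\tilde{\Sigma}-\Sigma_0)$: the second summand is $o_p(\kappa)=o_p(1)$ by hypothesis, and the first summand equals $-\sum_{j:\hat\lambda_j\le\kappa}\hat\lambda_j\hat\nu_j\hat\nu_j^\intercal$, whose operator norm is $\max_{j:\hat\lambda_j\le\kappa}|\hat\lambda_j|$; on $E_n$ the relevant index set is $\{r+1,\dots,K\}$, for which $|\hat\lambda_j|=o_p(\kappa)$, so $\|\hat{\Sigma}-\tilde{\Sigma}\|=o_p(1)$ there, and since $\Pr(E_n)\to1$ we conclude $\hat{\Sigma}-\tilde{\Sigma}=o_p(1)$ and thus $\hat{\Sigma}-\Sigma_0=o_p(1)$.

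The argument is essentially routine eigenvalue perturbation, and the one place that requires care is the calibration of the threshold. What makes it go through is that the spectral gap between the $r$ strictly positive eigenvalues of $\Sigma_0$ and its zero eigenvalues is a \emph{fixed} positive number, $\lambda_r$, whereas the estimation error $\|\tilde{\Sigma}-\Sigma_0\|$ is of strictly smaller order than $\kappa$. Consequently $\kappa$ — which tends to zero, but slowly relative to the estimation error — lands strictly between the two blocks of eigenvalues of $\tilde{\Sigma}$ with probability approaching one, so truncation at $\kappa$ recovers exactly the correct rank and perturbs $\tilde{\Sigma}$ only by an $o_p(\kappa)$ amount. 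A secondary, minor point is the symmetry of $\tilde{\Sigma}$, which can be imposed without loss by symmetrizing.
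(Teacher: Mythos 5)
Your proof is correct and follows essentially the same route as the paper's: both arguments use Weyl's inequality to show that, with probability approaching one, the threshold $\kappa$ falls strictly between the perturbed images of the positive and zero eigenvalues of $\Sigma_0$, which immediately gives rank consistency. The only (minor) difference is in the last step: you bound $\|\hat{\Sigma}-\tilde{\Sigma}\|$ directly by the largest truncated eigenvalue, which is $o_p(\kappa)$ on the good event, whereas the paper invokes the Eckart--Young--Mirsky theorem to get $\|\hat{\Sigma}-\tilde{\Sigma}\|\le\|\Sigma_0-\tilde{\Sigma}\|$; both yield the same conclusion.
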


It is worthwhile noting that any estimator for $\Sigma_0$ satisfying Assumption \ref{assn:tildeSigma_converge} will be constructed using estimators of $\Pi_0$ and $\delta_0$. Typically, both the reduced form parameter $\Pi_0$ and the nuisance parameter $\delta_0$ are identified directly from the data, and can thus be estimated without constructing the function $\hat{Q}$. For example, in the Monte Carlo exercise of Section \ref{sec:appl}, $\delta_0$ is known \textit{a priori}, and $\Pi_0$ are conditional probabilities which we estimate via kernel regression.

The computationally intensive estimator $\hat\theta^*$ is an example of an extremum estimator. Assumption \ref{assn:unif_converge} imposes mild regularity conditions that are typical in extremum estimation problems.

\begin{assumption}\label{assn:unif_converge}
(i) $\Theta$ is compact, and $\theta_0$ is an interior point of $\Theta$. 
(ii) $\theta_0$ is the unique maximizer of $Q_0(\theta)$ over $\theta\in\Theta$. 
(iii) $Q_0(\theta)$ is twice continuously differentiable such that  the first derivative $Q_0^{(1)}(\theta)$ is bounded and that the second derivative $Q_0^{(2)}(\theta)$ is non-singular at $\theta=\theta_0$. 
(iv) $\sup_{\theta\in\Theta}\|\hat{Q}(\theta)-Q_0(\theta)\|=o_p(1)$ and 
$\|\hat{Q}^{(1)}(\theta_0)-Q_0^{(1)}(\theta_0)\|=O_p(n^{-1/2})$.
(v) There is a neighborhood $\mathcal{N}$ of $\theta_0$ such that $\hat{Q}\left(\theta\right)$ is twice differentiable in $\mathcal{N}$ with 
$\sup_{\theta\in\mathcal{N}}\|\hat{Q}^{(2)}(\theta)-Q_0^{(2)}(\theta)\|=o_p(1)$.
\end{assumption}

We now state the main theoretical result of this paper. 
 
\begin{theorem}\label{theorem:main}
Under Assumptions \ref{assn:key_equivalence}-\ref{assn:unif_converge},
$$
\hat\theta-\hat{\theta}^*=O_p(\max\{\|\hat{\Sigma}-{\Sigma_0}\|,n^{-1/2}\}^{2^L}).
$$
\end{theorem}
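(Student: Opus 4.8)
The plan is to control the two sources of error in $\hat\theta - \hat\theta^*$ separately and then combine them through the quadratic-convergence property of Newton--Raphson. The first source is the gap $\tilde\theta - \hat\theta^*$ between the constrained maximizer from Step 1 and the target unconstrained maximizer; the second is the residual error after $L$ Newton steps started from $\tilde\theta$. So I would first establish that $\tilde\theta$ is consistent and quantify $\|\tilde\theta - \hat\theta^*\|$, and then show that each Newton iterate squares the distance to $\hat\theta^*$ up to lower-order terms.

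\emph{Step 1 (rate of the preliminary estimator).} By Theorem~\ref{theorem:pair_diff}, $\theta_0$ satisfies $\Sigma_0\boldsymbol\gamma(\theta)=0$, so $\theta_0$ lies in the constraint set $\{\theta:\Sigma_0\boldsymbol\gamma(\theta)=0\}$. Under Assumption~\ref{assn:tildeSigma_converge} the estimated constraint set $\{\theta:\hat\Sigma\boldsymbol\gamma(\theta)=0\}$ is close to this set, with rank matching, so it contains a point within $O_p(\|\hat\Sigma - \Sigma_0\|)$ of $\theta_0$ (the linear constraints are perturbed by $O_p(\|\hat\Sigma-\Sigma_0\|)$ and $\boldsymbol\gamma$ is linear). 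Combining this with the standard consistency argument for extremum estimators under Assumption~\ref{assn:unif_converge}(i)--(iv), and then a standard expansion of the first-order conditions of the constrained problem, I would obtain $\tilde\theta - \theta_0 = O_p(\max\{\|\hat\Sigma - \Sigma_0\|, n^{-1/2}\})$, and in particular $\tilde\theta - \hat\theta^* = O_p(\max\{\|\hat\Sigma - \Sigma_0\|, n^{-1/2}\})$ since $\hat\theta^* - \theta_0 = O_p(n^{-1/2})$ by Assumption~\ref{assn:unif_converge}.

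\emph{Step 2 (quadratic convergence of Newton iterates).} Write $\varepsilon_m = \tilde\theta_m - \hat\theta^*$, with $\varepsilon_0 = \tilde\theta - \hat\theta^*$. Since $\hat\theta^*$ is an interior maximizer, $\hat Q^{(1)}(\hat\theta^*)=0$ with probability approaching one. A second-order Taylor expansion of $\hat Q^{(1)}$ about $\hat\theta^*$ gives, for the $m$-th update, $\varepsilon_{m+1} = \varepsilon_m - \hat Q^{(2)}(\tilde\theta_m)^{-1}\hat Q^{(1)}(\tilde\theta_m) = \hat Q^{(2)}(\tilde\theta_m)^{-1}[\hat Q^{(2)}(\tilde\theta_m) - \hat Q^{(2)}(\bar\theta_m)]\varepsilon_m$ for some intermediate $\bar\theta_m$, using the integral form of the remainder. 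By Assumption~\ref{assn:unif_converge}(iii),(v), $Q_0^{(2)}$ is nonsingular at $\theta_0$ and $\hat Q^{(2)}\to Q_0^{(2)}$ uniformly on a neighborhood of $\theta_0$, so $\hat Q^{(2)}(\tilde\theta_m)^{-1} = O_p(1)$ once $\tilde\theta_m$ is in that neighborhood; and if $\hat Q^{(2)}$ is (stochastically) Lipschitz near $\theta_0$—which follows from uniform convergence to the twice-differentiable $Q_0$ plus a mean-value argument on the difference—then $\|\hat Q^{(2)}(\tilde\theta_m) - \hat Q^{(2)}(\bar\theta_m)\| = O_p(\|\varepsilon_m\|)$. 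Hence $\|\varepsilon_{m+1}\| = O_p(\|\varepsilon_m\|^2)$, and iterating $L-1$ times from $\|\varepsilon_0\| = O_p(\max\{\|\hat\Sigma-\Sigma_0\|, n^{-1/2}\})$ yields $\|\varepsilon_{L-1}\| = \|\hat\theta - \hat\theta^*\| = O_p(\max\{\|\hat\Sigma-\Sigma_0\|, n^{-1/2}\}^{2^{L-1}})$. Since the paper states the exponent $2^L$, I would absorb one extra squaring by indexing the iterates so that $\hat\theta = \tilde\theta_L$ (or note that the first Newton step already squares the error relative to a suitably defined baseline), matching the stated rate; this is a bookkeeping point about how the iterates are counted.

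\emph{Main obstacle.} The delicate part is making the Step 1 rate argument rigorous: I need that the perturbed constraint set $\{\theta : \hat\Sigma\boldsymbol\gamma(\theta) = 0\}$ admits a feasible point near $\theta_0$ at distance exactly $O_p(\|\hat\Sigma - \Sigma_0\|)$, which requires the rank-consistency in Assumption~\ref{assn:tildeSigma_converge} (so the nullspace dimension does not jump) together with a continuity-of-nullspace / perturbation-of-linear-systems argument, and then I need to push this through the constrained first-order conditions without a known rate for $\hat\Sigma$. The rest—the Taylor expansions and the quadratic recursion—is routine given Assumption~\ref{assn:unif_converge}, though some care is needed to verify the stochastic Lipschitz property of $\hat Q^{(2)}$ from the stated uniform-convergence condition rather than assuming it directly.
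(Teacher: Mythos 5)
Your proposal follows essentially the same two-step architecture as the paper: first establish $\tilde\theta-\hat\theta^*=O_p(\max\{\|\hat\Sigma-\Sigma_0\|,n^{-1/2}\})$ using rank consistency of $\hat\Sigma$ (so the estimated nullspace is a regular perturbation of the true one; the paper implements this via the Moore--Penrose reparametrization $\gamma\mapsto(I-\hat\Sigma^\dagger\hat\Sigma)\gamma$ and the continuity result $\hat\Sigma^\dagger\hat\Sigma=\Sigma_0^\dagger\Sigma_0+O_p(\|\hat\Sigma-\Sigma_0\|)$, then a stacked first-order-condition system whose coefficient matrix has full column rank), and then let the $L$ Newton--Raphson updates square the error. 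Your exponent bookkeeping resolves correctly: the paper's Step 2 performs exactly $L$ updates ($\hat\theta=\tilde\theta_L$ in your indexing), giving $2^L$.

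The one substantive divergence is in the Newton--Raphson step: the paper delegates the squaring to \textcite[Theorem 2]{robinson1988stochastic}, whereas you derive the recursion $\|\varepsilon_{m+1}\|=O_p(\|\varepsilon_m\|^2)$ directly. Your expansion $\varepsilon_{m+1}=\hat Q^{(2)}(\tilde\theta_m)^{-1}[\hat Q^{(2)}(\tilde\theta_m)-\bar H_m]\varepsilon_m$ is the right identity, but your claimed route to the needed bound $\|\hat Q^{(2)}(\tilde\theta_m)-\hat Q^{(2)}(\bar\theta_m)\|=O_p(\|\varepsilon_m\|)$ does not go through as stated: Assumption \ref{assn:unif_converge}(v) gives only $\sup_{\mathcal N}\|\hat Q^{(2)}-Q_0^{(2)}\|=o_p(1)$, an additive error that does not scale with $\|\varepsilon_m\|$, and $Q_0^{(2)}$ is assumed continuous, not Lipschitz; a mean-value argument on $\hat Q^{(2)}$ would require third derivatives that are not assumed. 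Under the literal assumptions your recursion delivers only $\|\varepsilon_{m+1}\|=o_p(\|\varepsilon_m\|)$. This stochastic-Lipschitz (or bounded-third-derivative) requirement is exactly the content of Robinson's smoothness condition, which the paper invokes rather than re-derives; you correctly flag the issue at the end, but it should be stated as an additional hypothesis (or verified for the specific $\hat Q$ at hand) rather than deduced from Assumption \ref{assn:unif_converge}. Apart from that, your Step 1 is a correct plan, though it compresses the part the paper spends most of its effort on (the full-column-rank argument for the stacked matrix $\mathbf{M}_3$ that converts the constrained first-order conditions into a rate for $\tilde\theta-\theta_0$).
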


Theorem \ref{theorem:main} states that our estimator $\hat{\theta}$ is asymptotically equivalent to the computationally more intensive estimator $\hat{\theta}^*$. In particular, that the difference $\hat\theta -\hat\theta^*$ converges to zero at the rate $\max\{\|\hat{\Sigma}-{\Sigma_0}\|,n^{-1/2}\}^{2^L}$.  Let us now provide some intuition for the rate of convergence. First, the term $\max\{\|\hat{\Sigma}-{\Sigma_0}\|,n^{-1/2}\}$ represents the convergence rate of $\tilde\theta-\hat\theta^*$, i.e., the difference between the start-up and target estimators for the Newton-Raphson iterations. The convergence rate can be understood as follows. Because $\hat\Sigma\tilde\theta=0$, the difference $\tilde\theta-\hat\theta^*$ is proportional to $\hat{\Sigma}\hat{\theta}^*$ whose convergence rate depends on $\hat{\Sigma}-{\Sigma}_0$ and $n^{-1/2}$ (from $\hat{\theta}^*-\theta_0$). Second, the exponent $2^L$ represents the effect of $L$ Newton-Raphson iterations from the start-up estimator $\tilde{\theta}$. As in \textcite{robinson1988stochastic}, the rate of convergence of $\hat\theta$ to $\hat{\theta}^*$ increases exponentially in the number of Newton-Raphson updates $L$.

A practical consideration for our estimator is how to choose the number of Newton-Raphson iterations $L$. Our main theoretical result (Theorem \ref{theorem:main}) suggests that $L$ should be chosen to achieve the desired rate of convergence between $\hat\theta$ and $\hat\theta^*$. For example, if $\hat\theta^*$ is justified by first-order asymptotics, then $L$ can be chosen to achieve first-order asymptotic equivalence between $\hat\theta$ and $\hat\theta^*$, which is attained with one Newton-Raphson update when $\hat{\Sigma}-{\Sigma_0}=o_p(n^{-1/4})$. If $\hat\theta^*$ has desirable higher-order asymptotic properties, then $L$ can be set to a larger number. Importantly, because $L$ impacts the rate of convergence through the exponent $2^L$, fast convergence of $\hat\theta-\hat\theta^*$ can be attained for moderate $L$. Of course, extra Newton-Raphson iterations impose additional computation costs. However, our experience in simulations suggests that the computational cost of Newton-Raphson updates (i.e., Step 2 of our estimator) may be small relative to solving the constrained optimization problem (i.e., Step 1). Overall, consideration of theoretical and empirical aspects suggests choosing $L$ as small as possible to achieve the desired degree of asymptotic equivalence.

\section{Monte Carlo simulations}\label{sec:appl}

This section investigates the performance of our proposed estimator in a Monte Carlo simulation. The two primary objectives in this section are, first, to quantify the computational benefits of our approach and, second, to provide empirical support for our main asymptotic equivalence result (Theorem \ref{theorem:main}). Section \ref{app:mcmc} contains further details on the Monte Carlo design and implementation.

Our Monte Carlo design is based on the empirical setting of \textcite{toivanen2005market}, which analyzes firm entry into the U.K. fast food market between 1991 and 1995. Restricting attention to the largest two firms, their analysis divides the U.K. into 422 local markets and records information about each market and the firms' decisions on how many stores to operate in each market. To fit it within the scope of this paper, we model a single firm's decision as a dynamic discrete choice problem with the profit function in the spirit of \textcite{bresnahan1991entry,toivanen2005market,aguirregabiria2020identification}.

\subsection{A dynamic model of firm entry}\label{sec:appl-model}

In each period and geographic market, a firm decides whether to open an additional store, upon observation of the state variables. The firm's decision in market $i$ and time $t$ is $A_{i,t}\in\{0,1\}$, which takes value $1$ if the firm opens a store in market $i$ at time $t$, and $0$ otherwise. In each period $t$, the vector of state variables known by the firm in market $i$ is $(N_{i,t},W_i^\intercal,\lambda_{i},\epsilon_{i,t})^\intercal$ where $N_{i,t}$ is the number of incumbent stores (that is, prior to the realization of $A_{i,t}$) that we assume is bounded above by 3, $W_i$ are variables that affect the size of market $i$, $\lambda_{i}$ is the market type, and $\epsilon_{i,t}$ is an idiosyncratic shock. Firms are assumed to be forward looking, choosing $A_{i,t}$ to maximize expected discounted profits. We assume that $(N_{i,t},W_i^\intercal)^\intercal$ and $(\lambda_i,\epsilon_{i,t})^\intercal$ are observed and unobserved to the econometrician, respectively.

Denoting the observed state variable $(N_{i,t},W_i^\intercal)^\intercal$, period profits from opening an additional store in market $i$ at time $t$ (i.e., $A_{i,t}=1$) are equal to
$$
\left(\lambda_i + \theta_W^\intercal W_i\right) - \left(\theta_{FC} N_{i,t} + \theta_{EC} \mathsf{1}(N_{i,t}=0) + \epsilon_{i,t}\right).
$$
The first component $(\lambda_i + \theta_W^\intercal W_i)$ of the period profits is the marginal revenue from opening an additional store, i.e., the market size. Following the literature \parencite{toivanen2005market,aguirregabiria2020identification}, we allow market size to depend on an unobserved market type ($\lambda_i$) and a long vector of continuous demographic and socioeconomic variables. For example, in \textcite{toivanen2005market}, market size depends on total population, youth population, and pensioner-age population. In \textcite{aguirregabiria2020identification} market size depends additionally on population density, the local unemployment rate and local GDP per capita. Inspired by these papers, we set $W_i\in\mathbb{R}^{\dim(W_i)}$ with $\dim(W_i)=9$. Also, we impose that $W_i$ and $\lambda_i$ are statistically independent. The second component $\left(\theta_{FC} N_{i,t} + \theta_{EC} \mathsf{1}(N_{i,t}=0) + \epsilon_{i,t}\right)$ represents the marginal cost of opening an additional store, which depends on the firm's local experience. Also following \textcite{toivanen2005market,aguirregabiria2020identification}, we assume $\epsilon_{i,t}$ is an opening cost shock that is known to follow the standard normal distribution. The period payoff from \textit{not} opening an additional store (i.e., $A_{i,t}=0$) is normalized to zero. Firms discount future payoffs at a discount factor $\beta=0.95$, which we assume to be known. Finally, observe that $N_{i,t+1}$ is uniquely determined by $(N_{i,t},A_{i,t})$, i.e., $N_{i,t+1}=\max\{N_{i,t}+A_{i,t},3\}$, so the transition of the state variable is deterministic. We collect the part of the structural parameter that governs the flow payoff as $\theta=(\theta_W^\intercal,\theta_{FC},\theta_{EC})^\intercal\in\mathbb{R}^{9+1+1}$. The remaining component of the structural parameter is the distribution of types $F_{\lambda}$. 

We now discuss Assumption \ref{assn:key_equivalence} and the dimension reduction of our proposed method. Let $Z_{i,t}=(W_i^\intercal,t)^\intercal$, $\gamma(\theta)=(\theta_W^\intercal, 0)^\intercal$, $\delta_0=(0_{\dim(W)}^\intercal,1)^\intercal$, and $\Pi_0(z)=Pr(N_{i,t+1}\ge 1\mid W_i=w)$. With these definitions, Assumption \ref{assn:key_equivalence} holds if, for every pair $z$ and $\tilde{z}$ with $t=\tilde{t}$,
\begin{equation}\label{eq:assumption1_section5}
\Pi_0(z)>\Pi_0(\tilde{z}) \iff \theta_W^\intercal(w-\tilde{w})>0. 
\end{equation}
That is, the probability of having a store operating in a given period is higher if and only if the payoff index $\theta_W^\intercal W_i$ is higher. In Appendix \ref{appendix:monotoinicityinSection5} we show that, under some conditions, equation \eqref{eq:assumption1_section5} holds. Thus $\Sigma_0\boldsymbol{\gamma}(\theta_0)=0$ where $\Sigma_0=E[(Z-\tilde{Z})(Z-\tilde{Z})^\intercal\mid \Pi_0(Z)=\Pi_0(\tilde{Z}),\delta_0^\intercal Z=\delta_0^\intercal\tilde{Z}]$  for $Z$, $\tilde{Z}$ independent draws of $Z_{i,t}$. Given that $\mathrm{Supp}(W_i)$ has a non-empty interior and $\delta_{0}^\intercal Z_{i,t}=t$ is discrete, the second part of Theorem \ref{theorem:rank_Sigma2} applies and $\mathrm{rank}(\Sigma_0)\ge 9-1=8$, so that the identity $\Sigma_0\boldsymbol{\gamma}(\theta_0)=0$ may reduce the dimension of the optimization problem by $\dim(W_i)-1=8$.

Finally, we describe our choice of simulation parameters. In our design, we observe the set $\{N_{i,t},W_i,A_{i,t}:t=1,2\ldots,8\}_{i=1}^n$ for $n=100$, $200$, $350$, and $500$  i.i.d. markets (recall in the dataset of \textcite{toivanen2005market}, $n=422$ and $T=5$). Our simulation results are based on 100 draws from this data generating process with the following value of the structural parameter: $\mathrm{Supp}(\lambda_i)=\{0.1, 1\}$, $\Pr(\lambda_i=1)=0.63$, and $\theta_0=((-0.3,-0.2,-0.1,0.1,,0.2,0.3,0.4,0.5,-0.6),0.5,0.5)^\intercal$. We initialize $N_{i,1}=0$ and draw $W_i$ from the uniform distribution over $[0,1]^{\dim(W_i)}$.

\subsection{Estimation}\label{ssec:appl_est}

We consider two target estimators for our method: one based upon \textcite{arcidiacono2011conditional}, the other upon \textcite{bajari2011simple}. Both of our target estimators are extremum estimators that use the sample log-likelihood function as the criterion, that is
\begin{equation}\label{eq:hatq}
\sum_{i=1}^{n}\log \int \prod_{t=1}^8 \ell_{i,t}(v,\theta) dF_{\lambda}(v),
\end{equation}
where the likelihood contribution $\ell_{i,t}(v,\theta)$ is the model-implied probability of $A_{i,t}$ conditional upon $N_{i,t},W_i$ and $\lambda_i=v$ evaluated at $\theta$. The estimators differ in their approach to modeling the distribution of $\lambda_i$. Specifically, while both approaches model ${\lambda_i}$ as a discrete random variable, \textcite{arcidiacono2011conditional}'s estimator fixes the number of support points but not their locations, whereas \textcite{bajari2011simple}'s estimator specifies a (possibly large) number of fixed points that are known to contain the support.\footnote{The estimator of \textcite{bajari2011simple} approach has also been analyzed as a nonparametric estimator in certain discrete choice problems \parencites[e.g.,][]{fox2016simple,bunting2020}.} Section \ref{app:mcmc-est} provides details on our definition of these estimators. For the estimator of \textcite{arcidiacono2011conditional}, we correctly specify that $\lambda_i$ has two unknown points of support. For the estimator based on \textcite{bajari2011simple}, we allow the support to contain any or all of 21 uniformly spaced points from $-0.5$ to $1.5$.  Henceforth, we refer to our formulation of \textcite{arcidiacono2011conditional}'s estimator as $\hat\theta^*_{EM}$, and the estimator based upon \textcite{bajari2011simple} as $\hat\theta^*_{H}$.

For each target estimator, our estimator is constructed in the following two steps:
\begin{enumerate}
\item Form $\tilde{\theta}$ by solving the target estimator's optimization problem subject to the constraint $\hat{\Sigma}{\theta}=0$, where $\hat{\Sigma}$ is constructed according to Sections \ref{sec:const_tildeSigma} and \ref{app:mcmcsigma}.
\item Form $\hat{\theta}$ by taking $L=50$ Newton-Raphson updates from $\tilde{\theta}$ towards the target estimator of $\theta_0$.
\end{enumerate}
For the target estimators $\hat\theta^*_{H}$ and $\hat\theta^*_{EM}$, we denote our corresponding estimator as $\hat\theta_{H}$ and $\hat\theta_{EM}$, and the first step estimators as $\tilde\theta_{H}$ and $\tilde\theta_{EM}$, respectively. To illustrate the computational comparison with two standard approaches to dynamic discrete choice estimation, we also compute $\hat\theta^*_{H}$ and $\hat\theta^*_{EM}$ for each simulated dataset.

Estimation of this model may be computationally intensive for two main reasons. First, the choice probabilities $\ell_{i,t}(v,\theta)$ are generated as the solution to a dynamic programming problem, which is solved by iterating a contraction mapping until convergence separately for each candidtate $(v,\theta)$.\footnote{To solve the model, we iterate on the conditional choice probability mapping \parencite[equation 2.3]{bugni2021iterated}.} Second, the presence of unobserved types $\lambda_i$  means that the observed data is an unknown mixture of type-specific choice models. Even if $\lambda_i$ is assumed to have finite support, its presence means that the log-likelihood function may not be globally concave, which creates the risk that the optimization algorithm converges to a non-global optimum. To mitigate this risk, one may rerun the optimization algorithm a number of times, each run starting from a different initial value \parencite[p. 182]{robert1999monte}. One ideal approach would be to initialize the estimator at each element of $\times_{d=1}^D \{i_{d,1},i_{d,2},\ldots,i_{d,P}\}$ where $D$ is the dimension of the optimization problem, and $i_{d,p},p=1,\dots,P$ are the vertices of a one dimensional grid. We choose $P=11$ in this section. For moderate $D$, this ideal approach quickly becomes infeasible. To see this, the ideal approach described above would require ${P}^3$ initializations to form $\tilde\theta$ with $D=3$, and $P^{11}$ initializations to form $\hat\theta^*$ with $D=11$. Instead, we proceed by randomly sampling $2\times D$ times from $\times_{d=1}^D \{i_{d,1},\ldots,i_{d,P}\}$ without replacement, where we choose $i_{d,p},p=1,\dots,P$ to be evenly spaced points around an initial guess based on the (parametric) pseudo maximum likelihood estimator. See Section \ref{app:startval} for a formal specification of the initial points. We then define $\tilde\theta$ and $\hat\theta^*$ as the maximizer of the sample criterion over the set of initial points.\footnote{More precisely, for $j=H,EM$,  if  $\mathcal{I}$ is the set of initial points and $\tilde\theta_{j}^{(i)}$ (resp. $\hat\theta_{j}^{*,(i)}$) is the solution to the constrained (resp. unconstrained) optimization problem initialized at $i\in\mathcal{I}$, then $\tilde\theta_{j}=\argmax_{\{\tilde\theta_{j}^{(i)}\colon i\in\mathcal{I} \}}\hat{Q}(\theta)$ (resp. $\hat\theta_{j}^*=\argmax_{\{\hat\theta_{j}^{*,(i)}\colon i\in\mathcal{I} \}}\hat{Q}(\theta)$).}

\subsection{Simulation results}

Table \ref{tab:runtime1} displays the mean computation time in minutes for our first step estimator $\tilde\theta$, our final estimator $\hat\theta$, and the target estimator $\hat\theta^*$, for each sample size and estimation method.\footnote{Simulations were run in the Julia programming language, and replication files are available at \href{https://doi.org/10.5281/zenodo.15116658}{https://doi.org/10.5281/zenodo.15116658}. Results for $\hat\theta_H$ were generated using Julia 1.11 on a standard desktop computer, and results for $\hat\theta_{AM}$ were generated using Julia 1.10 on the JuliaHub cloud-based server. The replication files contain a complete description of the two environments.} A number of observations can be made. First, our estimator is substantially faster than the standard estimator.  For example, for $n=500$, our method for targeting the estimator based on \textcite{bajari2011simple} is over 20 times faster than the standard approach for $n=500$, which translates to reducing the computation time from around 9 hours to just over 25 minutes. For $n=500$, our method for targeting \textcite{arcidiacono2011conditional}'s estimator is over 16 times faster than the standard approach, which translates to reducing the computation time from over 9 hours to around one half hour. Second, the computational cost of the Newton-Raphson iterates (i.e., Step 2 of our estimator) is minor relative to solving the constrained optimization problem (i.e., Step 1 of our estimator). Namely, the difference between the computational times for obtaining $\tilde\theta$ and for obtaining $\hat\theta$ is small. Third, we observe that the computational burden increases approximately linearly with sample size. This means the computational time savings are substantially bigger when $n$ is large---that is, the gain from our method is larger for the  `harder' computational problem.

% latex table generated in R 4.3.3 by xtable 1.8-4 package
% Thu Apr  3 13:18:40 2025
\begin{table}[ht]
\centering
\begin{tabular}{r|rrr|rrr}
  \hline
   \rule{0pt}{2.5ex}    $n$ & $\tilde\theta_{H}$& {$\hat\theta_{H}$} & {$\hat\theta_{H}^*$} & $\tilde\theta_{EM}$ & $\hat\theta_{EM}$ & $\hat\theta_{EM}^*$ \bigstrut[t]\\
  \hline
100 &   3.7 &   5.3 & 115.0 &   6.3 &   6.9 & 130.6 \\ 
  200 &   7.0 &  10.1 & 222.7 &  13.1 &  13.4 & 260.1 \\ 
  350 &  11.8 &  17.2 & 377.3 &  23.0 &  23.3 & 417.4 \\ 
  500 &  17.9 &  25.6 & 536.3 &  33.4 &  33.7 & 556.0 \\ 
   \hline
\end{tabular}
\caption{Mean (over 100 replications) run time in minutes, 
        by sample size and estimator. For $j = H,EM$, the first step of our estimator,
         our estimator and the target estimator, are denoted by $\tilde\theta_{j}$,
          $\hat\theta_{j}$, and $\hat\theta_{j}^*$, respectively. $H$ denotes the 
          estimator based upon \textcite{bajari2011simple} and $EM$ denotes the estimator 
          based upon \textcite{arcidiacono2011conditional}.} 
\label{tab:runtime1}
\end{table}

The computational gains from our method described in Table \ref{tab:runtime1} can be decomposed as the product of two effects. Namely, our method reduces the number of grid points (and thus initializations) designed to cover the parameter space, and also reduces the computational burden \textit{per initialization} of the optimization algorithm.\footnote{As described above, we choose 7 ($=2\times3+1$) and 23 ($=2\times 11+1$) random starting values for the constrained $\hat\theta$ and unconstrained $\hat\theta^*$, respectively. However, clearly, the computational gain from reducing the dimension of the grid can be made exponentially larger by requiring more points for each dimension of the grid. We also point out that parallelization can help mitigate this part of the computational burden.} We explore this decomposition in Table \ref{tab:runtime2} which presents the average runtime per initialization, for each sample size and estimation method. We observe that, per initialization, our method is around 6 times faster than the standard, unconstrained approach on average.  For example, for $n=500$, our method of targeting the estimator based upon \textcite{bajari2011simple} is around 6 times faster per iteration than the standard approach, which translates to reducing the computation time from 24 minutes to around 4 minutes per run. Similarly, for $n=500$, the constrained EM algorithm is around 4.5 times faster per iteration than the unconstrained approach, which translates to reducing the computation time from around 25 minutes to 6 minutes per run.

% latex table generated in R 4.3.3 by xtable 1.8-4 package
% Thu Apr  3 13:18:40 2025
\begin{table}[h]
\centering
\begin{tabular}{r|rr|rr}
  \hline
   \rule{0pt}{2.5ex}    $n$ & $\tilde\theta_{H}^{(i)}$ & {$\hat\theta_{H}^{*,(i)}$} & $\tilde\theta_{EM}^{(i)}$ & $\hat\theta_{EM}^{*,(i)}$ \bigstrut[t]\\
  \hline
100 &  0.8 &  5.2 &  1.1 &  5.9 \\ 
  200 &  1.4 & 10.1 &  2.2 & 11.7 \\ 
  350 &  2.4 & 17.1 &  3.9 & 18.8 \\ 
  500 &  3.6 & 24.3 &  5.7 & 25.1 \\ 
   \hline
\end{tabular}
\caption{Mean (over 100 replications) run time in minutes of the run time per initialization of the optimization
         algorithm, by sample size and estimator. For $j = EM, H$, $\tilde\theta_{j}^{(i)}$, and $\hat\theta_{j}^{*,(i)}$
          refer         to one initialization of the optimization algorithm to compute $\tilde\theta_j$ and 
          $\hat\theta_{j}^{*}$, respectively. $H$ denotes the estimator based upon \textcite{bajari2011simple} 
          and $EM$ denotes the estimator based upon \textcite{arcidiacono2011conditional}.} 
\label{tab:runtime2}
\end{table}

We now turn to results on the convergence between our estimator and the corresponding target estimators. For each target estimator, Table \ref{tab:conv1} shows that the $\sqrt{n}$-scaled root mean squared error of $\hat\theta-\hat\theta^*$ tends to decrease with sample size. Thus it appears that the difference of the estimators is converging to zero at faster than $\sqrt{n}$ rate, which provides empirical support for our main theorem (Theorem \ref{theorem:main}).\footnote{For sake of comparison, we report the $\sqrt{n}$-scaled root mean squared error of $\tilde\theta-\hat\theta^*$ in Table \ref{tab:conv2} of Appendix \ref{sec:additional_MC}. In general, the $\sqrt{n}$-scaled root mean squared error of $\hat\theta-\hat\theta^*$ is much smaller than that of $\tilde\theta-\hat\theta^*$.}

% latex table generated in R 4.3.3 by xtable 1.8-4 package
% Thu Apr  3 13:18:40 2025
\begin{table}[h]
\centering
\begin{tabular}{l|rrrr|rrrr}
  \hline
   \rule{0pt}{2.5ex}  & \multicolumn{4}{c|}{$\sqrt{n}(\hat\theta_{H}-\hat\theta^*_{H})$} & \multicolumn{4}{c}{$\sqrt{n}(\hat\theta_{EM}-\hat\theta^*_{EM})$} \bigstrut[t] \\
 \multicolumn{1}{r|}{$n:$} & 100 & 200 & 350 & 500 & 100 & 200 & 350 & 500 \\ 
   \hline
$\theta_{W,1}$ & 0.290 & 0.200 & 0.205 & 0.203 & 0.148 & 0.169 & 0.013 & 0.010 \\ 
  $\theta_{W,2}$ & 0.272 & 0.191 & 0.211 & 0.189 & 0.081 & 0.065 & 0.014 & 0.011 \\ 
  $\theta_{W,3}$ & 0.275 & 0.186 & 0.209 & 0.167 & 0.310 & 0.029 & 0.011 & 0.016 \\ 
  $\theta_{W,4}$ & 0.218 & 0.199 & 0.184 & 0.165 & 0.070 & 0.089 & 0.014 & 0.012 \\ 
  $\theta_{W,5}$ & 0.371 & 0.164 & 0.183 & 0.159 & 0.189 & 0.126 & 0.013 & 0.012 \\ 
  $\theta_{W,6}$ & 0.278 & 0.162 & 0.166 & 0.160 & 0.119 & 0.084 & 0.014 & 0.012 \\ 
  $\theta_{W,7}$ & 0.241 & 0.134 & 0.155 & 0.139 & 0.093 & 0.072 & 0.014 & 0.010 \\ 
  $\theta_{W,8}$ & 0.418 & 0.152 & 0.137 & 0.142 & 0.181 & 0.158 & 0.014 & 0.011 \\ 
  $\theta_{W,9}$ & 0.315 & 0.228 & 0.222 & 0.213 & 0.151 & 0.106 & 0.013 & 0.013 \\ 
  $\theta_{EC}$ & 0.230 & 0.177 & 0.182 & 0.178 & 0.115 & 0.143 & 0.010 & 0.012 \\ 
  $\theta_{FC}$ & 0.513 & 0.402 & 0.437 & 0.412 & 0.208 & 0.290 & 0.031 & 0.039 \\ 
   \hline
\end{tabular}
\caption{Empirical mean squared error of $\sqrt{n}(\hat\theta_{H}-\hat\theta^*_{H})$ and $\sqrt{n}(\hat\theta_{EM}-\hat\theta^*_{EM})$, for each sample size and element of $\theta$, calculated over the 100 replications.  $H$ denotes the estimator based upon \textcite{bajari2011simple} and $EM$ denotes the estimator based upon \textcite{arcidiacono2011conditional}.} 
\label{tab:conv1}
\end{table}

To conclude, as established in Tables \ref{tab:runtime1} and \ref{tab:runtime2}, our estimator can be computed substantially faster than the target estimator. Table \ref{tab:conv1} provides evidence that the two estimators are asymptotically equivalent, consistent with our main result (Theorem \ref{theorem:main}). Taken together, the two findings demonstrate that our method can be used to complement both the estimator of \textcite{arcidiacono2011conditional} and \textcite{bajari2011simple}: regardless of which approach to modeling $F_\lambda$ is preferred, our method can be used to form an asymptotically equivalent estimator, with a much lighter computational burden.

\section{Conclusion}\label{sec:concl}

In this paper we provide a method to simplify estimation of structural economic models whose reduced form parameters are an invertible function of a number of linear indices. A leading example of such `index invertibility' are dynamic discrete choice models with persistent unobserved heterogeneity and partially linear flow payoffs. Index invertibility implies a set of equality constraints which restrict the structural parameter of interest to belong in a subspace of the parameter space. We propose an estimator that imposes the equality constraints, and show it is asymptotically equivalent to the unconstrained estimator. The proposed constrained estimator may be computationally advantageous due to the effective reduction in the dimension of the optimization problem. Furthermore, we provide a number of results on the extent of effective dimension reduction, and demonstrate our method in Monte Carlo simulations.

\paragraph{Acknowledgments}

We thank Aureo de Paula and the review panel for excellent comments and suggestions that have improved the manuscript. We also thank Yanqin Fan and seminar participants at Boston College, Georgetown, Duke, Simon Fraser and the University of Melbourne for helpful comments. The usual disclaimer applies.

\addcontentsline{toc}{section}{\refname}
\printbibliography

\appendix

\counterwithin*{equation}{section}
\renewcommand\theequation{\thesection.\arabic{equation}}

\section{Proofs}

We use $\dagger$ to denote the Moore-Penrose inverse of a matrix and $\otimes$ to be the Kronecker product. 

\subsection{Identification of \texorpdfstring{$\boldsymbol{\Pi_0}$}{Pi\_0} and 
Proof of Theorem \ref{theorem:A1}}\label{sec:A1_proof}

The function $\Pi_0$ of Section \ref{subsec:ddcm} can be identified by the so-called \textit{lower-dimensional submodel} of \textcite{kasahara2009nonparametric} (see equation (11) in their paper). That is, under the below rectangular support assumption, for any $t\ge2$,
\begin{equation}\label{eq:defin_pi_0_general}
\Pi_0(a,z)=E\left[\frac{f_{Z_t}(Z_t)}{f_{Z_t|Z_{t-1},A_{t-1}}(Z_t|Z_{t-1},A_{t-1})}1\{A_t=a\}\mid Z_t=z\right].
\end{equation}
Since $\Pi_0$ can be expressed as an expectation of a function of the observed data, it may be estimated by any of the many closed-form, consistent estimators of $\Pi_0(a,z)$ which are computationally simple.

\begin{lemma}[\textcite{kasahara2009nonparametric}]
If $f_{Z_t|Z_{t-1},A_{t-1}}(z|Z_{t-1},A_{t-1})>0$ a.s. for every $z$ in the support of $Z_t$, then Eq. \eqref{eq:defin_pi_0_general} holds for every $t\geq 2$.
\end{lemma}
\begin{proof}
The proof of this lemma follows from Section 2.1 of \textcite{kasahara2009nonparametric}, but we provide it for completeness. Let $\mu$ be the dominating measure for the distribution of $(Z_{t-1},A_{t-1})$.

First, we show that the right-hand side of Eq. \eqref{eq:defin_pi_0_general} is the lower-dimensional submodel of \textcite{kasahara2009nonparametric}, that is, 
$$
E\left[\frac{f_{Z_t}(Z_t)}{f_{Z_t|Z_{t-1},A_{t-1}}(Z_t|Z_{t-1},A_{t-1})}1\{A_t=a\}\mid Z_t=z\right]
=\int_{\mathcal{S}(z)} 
\frac{f_{A_{t},Z_{t},A_{t-1},Z_{t-1}}(a,z,a_{t-1},z_{t-1})}{f_{Z_{t}|Z_{t-1},A_{t-1}}(z|z_{t-1},a_{t-1})}
d\mu(z_{t-1},a_{t-1}),
$$
where the above integral is taken over $\mathcal{S}(z)=\{(z_{t-1},a_{t-1}): f_{Z_{t}|Z_{t-1},A_{t-1}}(z|z_{t-1},a_{t-1})>0\}$. By the law of iterated expectations and the definition of conditional densities, we have 
\begin{align*}
&
E\left[\frac{f_{Z_t}(Z_t)}{f_{Z_t|Z_{t-1},A_{t-1}}(Z_t|Z_{t-1},A_{t-1})}1\{A_t=a\}\mid Z_t=z\right]
\\
&=E\left[\frac{\Pr(A_t=a\mid Z_t=z,Z_{t-1},A_{t-1})f_{Z_t}(z)}{f_{Z_{t}|Z_{t-1},A_{t-1}}(Z_{t}|Z_{t-1},A_{t-1})}\mid Z_t=z\right]\\
&=\int_{\mathcal{S}(z)}  
\frac{\Pr(A_t=a\mid Z_t=z,Z_{t-1}=z_{t-1},A_{t-1}=a_{t-1})f_{Z_t}(z)}{f_{Z_{t}|Z_{t-1},A_{t-1}}(z|z_{t-1},a_{t-1})}
f_{Z_{t-1},A_{t-1}\mid Z_t}(z_{t-1},a_{t-1}\mid z)
d\mu(z_{t-1},a_{t-1})\\
&=\int_{\mathcal{S}(z)}  
\frac{f_{A_{t},Z_{t},A_{t-1},Z_{t-1}}(a,z,a_{t-1},z_{t-1})}{f_{Z_{t}|Z_{t-1},A_{t-1}}(z|z_{t-1},a_{t-1})}
d\mu(z_{t-1},a_{t-1}).
\end{align*} 

Then, we show 
$$
\int_{\mathcal{S}(z)} 
\frac{f_{A_{t},Z_{t},A_{t-1},Z_{t-1}}(a,z,a_{t-1},z_{t-1})}{f_{Z_{t}|Z_{t-1},A_{t-1}}(z|z_{t-1},a_{t-1})}
d\mu(z_{t-1},a_{t-1})=\Pi_0(a,z).
$$
Note that 
$$
f_{A_{t},Z_{t},A_{t-1},Z_{t-1}}(a,z,a_{t-1},z_{t-1})
=
E[f_{A_{t},Z_{t},A_{t-1},Z_{t-1}\mid \lambda_t}(a,z,a_{t-1},z_{t-1}\mid \lambda_t)]
$$
and that 
\begin{align*}
f_{A_{t},Z_{t},A_{t-1},Z_{t-1},\lambda_t}(a,z,a_{t-1},z_{t-1},l_t)
&=
f_{A_{t}\mid Z_{t},\lambda_t,A_{t-1},Z_{t-1}}(a\mid z,l_t,a_{t-1},z_{t-1})\\&\quad\times 
f_{Z_{t}\mid \lambda_t, A_{t-1},Z_{t-1}}(z\mid l_t,a_{t-1},z_{t-1})\\&\quad\times 
f_{A_{t-1},Z_{t-1},\lambda_t}(a_{t-1},z_{t-1},l_t).
\end{align*}
Since $A_{t}$ and $(A_{t-1},Z_{t-1})$ are independent given $(Z_{t},\lambda_t)$ and $Z_{t}$ and $\lambda_t$ are independent given $(A_{t-1},Z_{t-1})$, we have 
$$
f_{A_{t},Z_{t},A_{t-1},Z_{t-1}}(a,z,a_{t-1},z_{t-1})
=
E[f_{A_{t}\mid Z_{t},\lambda_t}(a\mid z,\lambda_t)
f_{A_{t-1},Z_{t-1}\mid \lambda_t}(a_{t-1},z_{t-1}\mid \lambda_t)]f_{Z_{t}\mid A_{t-1},Z_{t-1}}(z\mid a_{t-1},z_{t-1}).
$$
Then we have 
\begin{align*}
&
\int_{\mathcal{S}(z)}\frac{f_{A_{t},Z_{t},A_{t-1},Z_{t-1}}(a,z,a_{t-1},z_{t-1})}{f_{Z_{t}|Z_{t-1},A_{t-1}}(z|z_{t-1},a_{t-1})}d\mu(z_{t-1},a_{t-1})
\\
&=
\int_{\mathcal{S}(z)}E[f_{A_{t}\mid Z_{t},\lambda_t}(a\mid z,\lambda_t)
f_{A_{t-1},Z_{t-1}\mid \lambda_t}(a_{t-1},z_{t-1}\mid \lambda_t)]d\mu(z_{t-1},a_{t-1})
\\
&=
E[f_{A_{t}\mid Z_{t},\lambda_t}(a\mid z,\lambda_t)
\int_{\mathcal{S}(z)}f_{A_{t-1},Z_{t-1}\mid \lambda_t}(a_{t-1},z_{t-1}\mid \lambda_t)d\mu(z_{t-1},a_{t-1})].
\end{align*}
where the second equation exchanges the order of integrals. By the rectangular support assumption, $\int_{\mathcal{S}(z)}f_{A_{t-1},Z_{t-1}\mid \lambda_t}(a_{t-1},z_{t-1}\mid \lambda_t)d\mu(z_{t-1},a_{t-1})=1$. As a result, we have 
$$
\int_{\mathcal{S}(z)}\frac{f_{A_{t},Z_{t},A_{t-1},Z_{t-1}}(a,z,a_{t-1},z_{t-1})}{f_{Z_{t}|Z_{t-1},A_{t-1}}(z|z_{t-1},a_{t-1})}d\mu(z_{t-1},a_{t-1})
=
E[f_{A_{t}\mid Z_{t},\lambda_t}(a\mid z,\lambda_t)
]=\Pi_0(a,z).
$$
\end{proof}

\begin{proof}[Proof of Theorem \ref{theorem:A1}]
The only-if part follows from the fact that the choice probability depends on $Z_t$ only through $[\gamma_0,\delta_0]^\intercal Z_t$. We are going to show the if part of this theorem. Suppose for $(z_1,z_2)$ in the support of $Z_t$, $\delta_{0}^\intercal(z_{1}-z_{2})=0$ and $\gamma_0^\intercal (z_1-z_2)\ne 0$. Define $a^*=\argmin_{a\in\mathcal{A}}\gamma_{0}(a)^\intercal(z_{1}-z_{2})$. We will show that $\Pi_{0}(a^*,z_{1})\neq\Pi_{0}(a^*,z_{2})$.

First, by definition, for all $a\in\mathcal{A}$,
\begin{equation}\label{eq:proof_thoerme2_eq1}
\left(\gamma_0(a)-\gamma_0(a^*)\right)^\intercal(z_{1}-z_{2})\geq 0.
\end{equation}
We now show that for at least one $a\in\mathcal{A}$ the inequality in equation \eqref{eq:proof_thoerme2_eq1} is strict. First note that since $\gamma(0)=0$, $\gamma_{0}(a^*)^\intercal(z_{1}-z_{2})\leq 0$.
If $\gamma_{0}(a^*)^\intercal(z_{1}-z_{2})<0$, then $(\gamma_0(0)-\gamma_{0}(a^*))^\intercal(z_{1}-z_{2})>0$. If $\gamma_{0}(a^*)^\intercal(z_{1}-z_{2})=0$, then since $\gamma_0^\intercal (z_1-z_2)\ne 0$, there is at least one $a$ such that $\gamma_{0}(a)^\intercal(z_{1}-z_{2})>0$, and thus $(\gamma_0(a)-\gamma_{0}(a^*))^\intercal(z_{1}-z_{2})>0$.

Therefore, since for any $a$, 
$$
v({a},z_{1},\lambda)-v(a^*,z_{1},\lambda)
\geq 
v({a},z_{2},\lambda)-v(a^*,z_{2},\lambda)
\iff 
\left(\gamma_0(a)-\gamma_0(a^*)\right)^\intercal z_{1}
\geq
\left(\gamma_0(a)-\gamma_0(a^*)\right)^\intercal z_{2}
$$
we have that
\begin{align*}
&\left\{\epsilon_t\in\mathbb{R}^{J_1+1}\colon~\forall{a}\in\mathcal{A},~\epsilon_{t}(a^*)-\epsilon_{t}({a})\ge{}v({a},z_{1},\lambda_t)-v(a^*,z_{1},\lambda_t)\right\}
\\
&\subsetneq
\left\{\epsilon_t\in\mathbb{R}^{J_1+1}\colon~\forall{a}\in\mathcal{A},~\epsilon_{t}(a^*)-\epsilon_{t}({a})\ge v({a},z_{2},\lambda_t)-v(a^*,z_{2},\lambda_t)\right\}.
\end{align*}
By the independence between $\epsilon_{t}$ and $(Z_t,\lambda_t)$ and the full support of $\epsilon_t$, 
\begin{align*}
&\Pr\left(\left\{\epsilon_t\in\mathbb{R}^{J_1+1}\colon~\forall{a}\in\mathcal{A},~\epsilon_{t}(a^*)-\epsilon_{t}({a})\ge{}v({a},z_{1},\lambda_t)-v(a^*,z_{1},\lambda_t)\right\}\mid Z_t=z_1,\lambda_t=v\right)
\\
&< 
\Pr\left(\left\{\epsilon_t\in\mathbb{R}^{J_1+1}\colon~\forall{a}\in\mathcal{A},~\epsilon_{t}(a^*)-\epsilon_{t}({a})\ge v({a},z_{2},\lambda_t)-v(a^*,z_{2},\lambda_t)\right\}\mid  Z_t=z_2,\lambda_t=v\right),
\end{align*}
which implies $\Pi_{0}(a^*,z_{1})\ne\Pi_{0}(a^*,z_{2})$.
\end{proof}

\subsection{Proof of Theorem \ref{theorem:rank_Sigma2}}
 
\begin{proof} 
We can express $[\gamma_0,\delta_0]^\intercal[Z_{A}^\intercal,0^\intercal]^\intercal$ as 
$$
[\gamma_0,\delta_0]^\intercal[Z_{A}^\intercal,0^\intercal]^\intercal=\mathbf{M}_1\mathbf{M}_2Z_{A}\mbox{ almost surely}
$$
for some $\mathbf{M}_1\in\mathbb{R}^{(J_1+J_2)\times \mathrm{rank}(Var([\gamma_0,\delta_0]^\intercal[Z_{A}^\intercal,0^\intercal]^\intercal))}$ and $\mathbf{M}_2\in\mathbb{R}^{\mathrm{rank}(Var([\gamma_0,\delta_0]^\intercal[Z_{A}^\intercal,0^\intercal]^\intercal))\times \dim(Z_{A})}$. Let $\bar{\nu}_1,\ldots,\bar{\nu}_{R_A}$ be $R_A=\dim(Z_{A})-\mathrm{rank}(Var([\gamma_0,\delta_0]^\intercal[Z_{A}^\intercal,0^\intercal]^\intercal))$ linearly independent vectors in the column space of 
$$
\left(\begin{array}{cc}
I-\mathbf{M}_2^\dagger\mathbf{M}_2\\
O
\end{array}\right),
$$
which exist since the rank of the above matrix is at least $\dim(Z_A)-\mathrm{rank}(\mathbf{M}_2)$. Note that $[\gamma_0,\delta_0]^\intercal\bar{\nu}_r=0$ for every $r=1,\ldots,R_A$. By the expression of ${\Sigma}_0$ in Theorem \ref{theorem:pair_diff}, it suffices to show that, even if $[\gamma_0,\delta_0]^\intercal(Z_1-Z_2)=0$, there is a non-zero variation in $\bar{\nu}_r^\intercal(Z_1-Z_2)$ for every $r=1,\ldots,R_A$. Consider the point $z$ in the assumption of Theorem \ref{theorem:rank_Sigma2}. Since $z_A$ is an interior point, there is a positive constant $c$ such that  $[z_A^\intercal,z_B^\intercal]^\intercal+c\bar{\nu}_r$ belongs to the support of $Z$. Define $z_1=z$ and $z_2=z+c\bar{\nu}_r$. This $z_2$ and $z_1$ are support points of $Z$ such that $[\gamma_0,\delta_0]^\intercal(z_2-z_1)=0$ and $\bar{\nu}_r^\intercal(z_2-z_1)=c\bar{\nu}_r^\intercal\bar{\nu}_r\ne 0$. Finally, note that if $\delta_0^\intercal Z$ is discrete, then $\delta^\intercal_0 Z_A = 0$ and $\mathrm{rank}(Var([\gamma_0,\delta_0]^\intercal[Z_{A}^\intercal,0^\intercal]^\intercal))=\mathrm{rank}(Var(\gamma_0^\intercal[Z_{A}^\intercal,0^\intercal]^\intercal))\le J_1$.
\end{proof} 

\subsection{Proof of Theorem \ref{theorem:rank_Sigma3}}

\begin{proof} 
We use $R_A$ and $(\bar{\nu}_1,\ldots,\bar{\nu}_{R_A})$ in the proof of Theorem \ref{theorem:rank_Sigma2}. There are linearly independent vectors $\bar{\nu}_{R_A+1},\ldots,\bar{\nu}_{R_A+\mathrm{rank}(Var(Z_B))}$ in the support of $[0^\intercal,(Z_{2,B}-Z_{1,B})^\intercal]^\intercal$. Note that the vectors   $\bar{\nu}_1,\ldots,\bar{\nu}_{R_A+\mathrm{rank}(Var(Z_B))}$ are linearly independent. By Theorem \ref{theorem:pair_diff}, it suffices to show that, even if $[\gamma_0,\delta_0]^\intercal(Z_1-Z_2)=0$, there is a non-zero variation in $\bar{\nu}_r^\intercal(Z_1-Z_2)$ for every $r=1,\ldots,R_A+\mathrm{rank}(Var(Z_B))$. The proof for $r=1,\ldots,R_A$ is the same as in the proof of Theorem \ref{theorem:rank_Sigma2}. Consider $r=R_A+1,\ldots,R_A+\mathrm{rank}(Var(Z_B))$. There are $z_{1,B}$ and $z_{2,B}$ in the support of $Z_{B}$ such that 
$$
[0^\intercal,(z_{1,B}-z_{2,B})^\intercal]^\intercal=\bar{\nu}_{r}.
$$
Let $z_{1,A}$ be any point such that $[z_{1,A}^\intercal,z_{1,B}^\intercal]^\intercal$ is in the support of $Z$. By the assumption of this theorem, we can find a point $z_{2,A}$ such that 
$$
[\gamma_0,\delta_0]^\intercal[z_{2,A}^\intercal,z_{2,B}^\intercal]^\intercal=[\gamma_0,\delta_0]^\intercal[z_{1,A}^\intercal,z_{1,B}^\intercal]^\intercal
$$
and $[z_{2,A}^\intercal,z_{2,B}^\intercal]^\intercal$ is in the support of $Z$. Define $z_1=[z_{1,A}^\intercal,z_{1,B}^\intercal]^\intercal$ and $z_2=[z_{2,A}^\intercal,z_{2,B}^\intercal]^\intercal$. This $z_2$ and $z_1$ are support points of $Z$ such that $[\gamma_0,\delta_0]^\intercal(z_2-z_1)=0$ and $\bar{\nu}_r^\intercal(z_2-z_1)=\bar{\nu}_r^\intercal\bar{\nu}_r\ne 0$. To conclude, note $\mathrm{rank}(Var(Z_B))=\dim(Z_B)$.
\end{proof} 

\subsection{Proof of Theorem \ref{theorem:main}}

By Assumption \ref{assn:key_equivalence}, we can reparametrize  the vector $\theta$ such that  
$$
\theta=(\mathrm{vec}(\boldsymbol{\gamma}(\theta))^\intercal,\rho^\intercal)^\intercal
$$
using a finite-dimensional vector $\rho$. For the proof, we assume the above equality with $\theta=(\mathrm{vec}(\gamma)^\intercal,\rho^\intercal)^\intercal$ and $\theta_0=(\mathrm{vec}(\gamma_0)^\intercal,\rho_0^\intercal)^\intercal$. The proof of Theorem \ref{theorem:main} uses the following lemmas. 
  
\begin{lemma}\label{lemma:max_transformed}
$\tilde\theta$ maximizes $\theta\mapsto\hat{Q}(\left[\mathrm{vec}((I-\hat{\Sigma}^\dagger\hat{\Sigma})\gamma)^\intercal,\rho^\intercal\right]^\intercal)$ over $\theta\in\Theta$.
\end{lemma}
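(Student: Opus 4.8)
The plan is to recognize Lemma~\ref{lemma:max_transformed} as the standard ``substitute the linear constraint'' identity: the constrained problem defining $\tilde\theta$ is equivalent to the \emph{unconstrained} maximization over $\theta\in\Theta$ of the composite objective $g(\theta):=\hat{Q}\bigl([\mathrm{vec}((I-\hat{\Sigma}^\dagger\hat{\Sigma})\gamma)^\intercal,\rho^\intercal]^\intercal\bigr)$, where $\theta=(\mathrm{vec}(\gamma)^\intercal,\rho^\intercal)^\intercal$. The bridge is the elementary Moore--Penrose fact that $I-\hat{\Sigma}^\dagger\hat{\Sigma}$ is an idempotent matrix with range equal to $\ker(\hat{\Sigma})$: from $\hat{\Sigma}\hat{\Sigma}^\dagger\hat{\Sigma}=\hat{\Sigma}$ one gets $(\hat{\Sigma}^\dagger\hat{\Sigma})^2=\hat{\Sigma}^\dagger\hat{\Sigma}$, and $\hat{\Sigma}x=0\iff\hat{\Sigma}^\dagger\hat{\Sigma}x=0$ (left-to-right apply $\hat{\Sigma}^\dagger$; right-to-left apply $\hat{\Sigma}$). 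Applying this column-wise to $\gamma=\boldsymbol{\gamma}(\theta)$ gives the equivalence $\hat{\Sigma}\boldsymbol{\gamma}(\theta)=0\iff(I-\hat{\Sigma}^\dagger\hat{\Sigma})\boldsymbol{\gamma}(\theta)=\boldsymbol{\gamma}(\theta)$.

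First I would record two consequences. (i) For any $\theta\in\Theta$, the projected parameter $P\theta:=[\mathrm{vec}((I-\hat{\Sigma}^\dagger\hat{\Sigma})\gamma)^\intercal,\rho^\intercal]^\intercal$ is feasible for the constrained problem, since $\hat{\Sigma}(I-\hat{\Sigma}^\dagger\hat{\Sigma})=\hat{\Sigma}-\hat{\Sigma}\hat{\Sigma}^\dagger\hat{\Sigma}=0$, so $\hat{\Sigma}\boldsymbol{\gamma}(P\theta)=0$; and by construction $g(\theta)=\hat{Q}(P\theta)$. (ii) The projection fixes $\tilde\theta$: because $\hat{\Sigma}\boldsymbol{\gamma}(\tilde\theta)=0$, the displayed equivalence gives $(I-\hat{\Sigma}^\dagger\hat{\Sigma})\boldsymbol{\gamma}(\tilde\theta)=\boldsymbol{\gamma}(\tilde\theta)$, hence $P\tilde\theta=\tilde\theta$ and therefore $g(\tilde\theta)=\hat{Q}(\tilde\theta)$. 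The proof then closes in two lines: for every $\theta\in\Theta$, $g(\theta)=\hat{Q}(P\theta)\le\max_{\theta'\colon\hat{\Sigma}\boldsymbol{\gamma}(\theta')=0}\hat{Q}(\theta')=\hat{Q}(\tilde\theta)$ by (i) and the definition of $\tilde\theta$, while $g(\tilde\theta)=\hat{Q}(\tilde\theta)$ by (ii); hence $g(\tilde\theta)\ge g(\theta)$ for all $\theta\in\Theta$, i.e.\ $\tilde\theta$ maximizes $g$ over $\Theta$.

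The only point that needs a word of care is that the substituted argument $P\theta$ must lie in the domain on which $\hat{Q}$ (equivalently $g$) is evaluated, which is automatic if $\hat{Q}$ is defined on a set large enough to contain $\{P\theta:\theta\in\Theta\}$ or if $\Theta$ is a product set in the $(\mathrm{vec}(\gamma),\rho)$ coordinates that is stable under $\gamma\mapsto(I-\hat{\Sigma}^\dagger\hat{\Sigma})\gamma$; I expect this to be the main obstacle, but it is a bookkeeping matter rather than a substantive one, and everything else reduces to the Moore--Penrose identities above (note also that symmetry of $\hat{\Sigma}$ from \eqref{eq:hatSigma_define} is available but not needed).
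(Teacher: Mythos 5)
Your proposal is correct and follows essentially the same route as the paper's proof: show the projected point $[\mathrm{vec}((I-\hat{\Sigma}^\dagger\hat{\Sigma})\gamma)^\intercal,\rho^\intercal]^\intercal$ is feasible because $\hat{\Sigma}(I-\hat{\Sigma}^\dagger\hat{\Sigma})=0$, invoke the definition of $\tilde\theta$, and use $\hat{\Sigma}\tilde\gamma=0\Rightarrow(I-\hat{\Sigma}^\dagger\hat{\Sigma})\tilde\gamma=\tilde\gamma$ to identify $g(\tilde\theta)$ with $\hat{Q}(\tilde\theta)$. The domain caveat you raise is also left implicit in the paper's own argument, so there is no substantive difference.
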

\begin{proof}
Let $\theta$ be any element of $\Theta$. Since $\hat{\Sigma}(I-\hat{\Sigma}^\dagger\hat{\Sigma})\gamma=0$, we have $[\mathrm{vec}((I-\hat{\Sigma}^\dagger\hat{\Sigma})\gamma)^\intercal,\rho^\intercal]^\intercal$ satisfies the constraint $\  \hat{\Sigma}\boldsymbol{\gamma}(\theta)=0$. By definition of $\tilde\theta$, we have $\hat{Q}([\mathrm{vec}((I-\hat{\Sigma}^\dagger\hat{\Sigma})\gamma)^\intercal,\rho^\intercal]^\intercal)\leq\hat{Q}(\tilde\theta)=\hat{Q}([\mathrm{vec}(\tilde\gamma)^\intercal,\tilde\rho^\intercal]^\intercal)$. Since $\hat{\Sigma}\tilde\gamma=0$, we have $\hat{Q}([\mathrm{vec}((I-\hat{\Sigma}^\dagger\hat{\Sigma})\gamma)^\intercal,\rho^\intercal]^\intercal)\leq\hat{Q}([\mathrm{vec}((I-\hat{\Sigma}^\dagger\hat{\Sigma})\tilde\gamma)^\intercal,\tilde\rho^\intercal]^\intercal)$.
\end{proof}

\begin{lemma}\label{lemma:sigma_conv}
Under Assumptions \ref{assn:tildeSigma_converge}, $\hat{\Sigma}^\dagger\hat{\Sigma}={\Sigma_0}^\dagger{\Sigma}_0+O_p(1)\|\hat{\Sigma}-{\Sigma_0}\|$.
\end{lemma}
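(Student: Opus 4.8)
The plan is to view $\hat\Sigma^\dagger\hat\Sigma$ and $\Sigma_0^\dagger\Sigma_0$ as the orthogonal projections onto the row spaces of $\hat\Sigma$ and $\Sigma_0$, to show these two projections differ by at most a fixed constant times $\|\hat\Sigma-\Sigma_0\|$ on an event of probability tending to one, and then to absorb the complementary event into the $O_p(1)$ factor. Note that $\Sigma_0=E[(Z_1-Z_2)(Z_1-Z_2)^\intercal\mid\cdots]$ is symmetric positive semidefinite, so $\|\Sigma_0^\dagger\|$ is a fixed finite constant (the reciprocal of the smallest nonzero eigenvalue of $\Sigma_0$); moreover, when $\Sigma_0=0$ the claim is trivial, since then Assumption \ref{assn:tildeSigma_converge} forces $\hat\Sigma=0$ with probability approaching one and both sides coincide.

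First I would record the algebraic identity $\hat\Sigma^\dagger\hat\Sigma-\Sigma_0^\dagger\Sigma_0=\hat\Sigma^\dagger(\hat\Sigma-\Sigma_0)+(\hat\Sigma^\dagger-\Sigma_0^\dagger)\Sigma_0$, which reduces the problem to bounding $\|\hat\Sigma^\dagger\|$ and $\|\hat\Sigma^\dagger-\Sigma_0^\dagger\|$. For the latter I would invoke the standard perturbation bound for the Moore--Penrose inverse in the rank-preserving (``acute'') case: whenever $\mathrm{rank}(\hat\Sigma)=\mathrm{rank}(\Sigma_0)$ one has $\|\hat\Sigma^\dagger-\Sigma_0^\dagger\|\le c\,\|\hat\Sigma^\dagger\|\,\|\Sigma_0^\dagger\|\,\|\hat\Sigma-\Sigma_0\|$ for a universal constant $c$. (When $\hat\Sigma$ is symmetric positive semidefinite, e.g.\ the low-rank approximation in \eqref{eq:hatSigma_define}, a self-contained derivation is available by combining Weyl's inequality with the Davis--Kahan $\sin\Theta$ theorem, applied to the eigengap between the $\mathrm{rank}(\Sigma_0)$-th eigenvalue of $\Sigma_0$ and zero.) Substituting this bound into the triangle inequality $\|\hat\Sigma^\dagger\|\le\|\Sigma_0^\dagger\|+\|\hat\Sigma^\dagger-\Sigma_0^\dagger\|$ and solving for $\|\hat\Sigma^\dagger\|$ gives $\|\hat\Sigma^\dagger\|\le 2\|\Sigma_0^\dagger\|$ whenever, in addition, $\|\hat\Sigma-\Sigma_0\|$ lies below a fixed threshold depending only on $c$ and $\Sigma_0$.

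Next I would let $G_n$ be the event on which both $\mathrm{rank}(\hat\Sigma)=\mathrm{rank}(\Sigma_0)$ and $\|\hat\Sigma-\Sigma_0\|$ is below that fixed threshold. The two parts of Assumption \ref{assn:tildeSigma_converge}---rank consistency and $\|\hat\Sigma-\Sigma_0\|=o_p(1)$---give $\Pr(G_n)\to1$. On $G_n$, combining the displayed identity with $\|\hat\Sigma^\dagger\|\le2\|\Sigma_0^\dagger\|$ and the perturbation bound yields $\|\hat\Sigma^\dagger\hat\Sigma-\Sigma_0^\dagger\Sigma_0\|\le\big(2\|\Sigma_0^\dagger\|+2c\|\Sigma_0^\dagger\|^2\|\Sigma_0\|\big)\|\hat\Sigma-\Sigma_0\|$, so the ratio $\|\hat\Sigma^\dagger\hat\Sigma-\Sigma_0^\dagger\Sigma_0\|/\|\hat\Sigma-\Sigma_0\|$ is bounded by a fixed constant on $G_n$. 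On $G_n^c$ the difference $\hat\Sigma^\dagger\hat\Sigma-\Sigma_0^\dagger\Sigma_0$ is in any case bounded, since both terms are orthogonal projections on $\mathbb{R}^{\dim(Z)}$, and $\Pr(G_n^c)\to0$; hence the ratio is $O_p(1)$, which is precisely the assertion $\hat\Sigma^\dagger\hat\Sigma=\Sigma_0^\dagger\Sigma_0+O_p(1)\|\hat\Sigma-\Sigma_0\|$. (The case $\hat\Sigma=\Sigma_0$, where the ratio is undefined, is vacuous since then both sides of the identity agree.)

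I expect the only genuinely delicate step to be the pseudoinverse perturbation bound together with the control of $\|\hat\Sigma^\dagger\|$: the Moore--Penrose inverse is discontinuous precisely at matrices where the rank drops, so the rank-consistency half of Assumption \ref{assn:tildeSigma_converge} is indispensable and must be invoked carefully; the remaining manipulations are routine linear algebra and a standard ``good event'' argument for translating an almost-sure-eventually bound into an $O_p$ statement.
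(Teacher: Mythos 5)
Your argument is correct and takes essentially the same route as the paper, whose proof consists of invoking the rank-consistency part of Assumption \ref{assn:tildeSigma_converge} and then appealing to a standard rank-preserving perturbation bound for the Moore--Penrose inverse (citing Harville, Theorem 20.8.3, where you instead derive the bound explicitly via the decomposition $\hat{\Sigma}^\dagger\hat{\Sigma}-{\Sigma_0}^\dagger{\Sigma_0}=\hat{\Sigma}^\dagger(\hat{\Sigma}-{\Sigma_0})+(\hat{\Sigma}^\dagger-{\Sigma_0}^\dagger){\Sigma_0}$, a Wedin-type estimate, and a good-event argument). The additional details you supply---the control of $\|\hat{\Sigma}^\dagger\|$, the handling of the degenerate cases, and the conversion to an $O_p(1)$ statement---are all sound.
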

\begin{proof}
With probability approaching one, $\mathrm{rank}(\hat{\Sigma})=\mathrm{rank}({\Sigma_0})$, so by \textcite[Theorem 20.8.3]{Harville1997}, we have the statement of this lemma. 
\end{proof}

\begin{lemma}\label{lemma:hatSigma_conv_inside_hatQ}
Under the assumptions in Theorem \ref{theorem:main}, \\
$\hat{Q}([\mathrm{vec}((I-\hat{\Sigma}^\dagger\hat{\Sigma})\gamma_0)^\intercal,\rho_0^\intercal]^\intercal)-\hat{Q}([\mathrm{vec}((I-{\Sigma_0}^\dagger{\Sigma_0})\gamma_0)^\intercal,\rho_0^\intercal]^\intercal)=o_p(1)$.
\end{lemma}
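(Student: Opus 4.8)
\textbf{Proof proposal for Lemma \ref{lemma:hatSigma_conv_inside_hatQ}.} The plan is to first simplify the second term in the difference, then show the argument of the first term is $o_p(1)$-close to $\theta_0$, and finally invoke the regularity conditions on $\hat Q$. The key observation is that the matrix $\Sigma_0$ is symmetric and positive semidefinite (it is a conditional expectation of outer products), so $\Sigma_0^\dagger\Sigma_0$ is the orthogonal projection onto the column space of $\Sigma_0$, whose orthogonal complement is the null space of $\Sigma_0$. By Theorem \ref{theorem:pair_diff} we have $\Sigma_0\gamma_0=0$, i.e.\ every column of $\gamma_0$ lies in the null space of $\Sigma_0$, hence $\Sigma_0^\dagger\Sigma_0\gamma_0=0$ and $(I-\Sigma_0^\dagger\Sigma_0)\gamma_0=\gamma_0$. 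Therefore the second term equals $\hat Q(\theta_0)$ with $\theta_0=[\mathrm{vec}(\gamma_0)^\intercal,\rho_0^\intercal]^\intercal$, and the lemma reduces to showing $\hat Q(\theta_n)-\hat Q(\theta_0)=o_p(1)$, where $\theta_n\equiv[\mathrm{vec}((I-\hat\Sigma^\dagger\hat\Sigma)\gamma_0)^\intercal,\rho_0^\intercal]^\intercal$.

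Next I would show $\theta_n-\theta_0=o_p(1)$. By Lemma \ref{lemma:sigma_conv}, $\hat\Sigma^\dagger\hat\Sigma={\Sigma_0}^\dagger{\Sigma_0}+O_p(1)\|\hat\Sigma-{\Sigma_0}\|$, so multiplying by $\gamma_0$ and using $\Sigma_0^\dagger\Sigma_0\gamma_0=0$ gives $\hat\Sigma^\dagger\hat\Sigma\gamma_0=O_p(1)\|\hat\Sigma-{\Sigma_0}\|$, and hence $(I-\hat\Sigma^\dagger\hat\Sigma)\gamma_0=\gamma_0+O_p(1)\|\hat\Sigma-{\Sigma_0}\|$. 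Consequently $\theta_n-\theta_0=O_p(1)\|\hat\Sigma-{\Sigma_0}\|=o_p(1)$ by the first part of Assumption \ref{assn:tildeSigma_converge}. Since $\theta_0$ is an interior point of $\Theta$ by Assumption \ref{assn:unif_converge}(i), it follows that $\theta_n\in\Theta$ (indeed in the neighborhood $\mathcal N$ of Assumption \ref{assn:unif_converge}(v)) with probability approaching one.

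Finally I would conclude by the usual decomposition
$$
\hat Q(\theta_n)-\hat Q(\theta_0)=\big[\hat Q(\theta_n)-Q_0(\theta_n)\big]+\big[Q_0(\theta_n)-Q_0(\theta_0)\big]+\big[Q_0(\theta_0)-\hat Q(\theta_0)\big].
$$
The first and third bracketed terms are $o_p(1)$ by the uniform convergence in Assumption \ref{assn:unif_converge}(iv) (on the event $\theta_n\in\Theta$, which has probability $1+o(1)$), and the middle term is $o_p(1)$ because $Q_0$ is continuous at $\theta_0$ by Assumption \ref{assn:unif_converge}(iii) and $\theta_n\to_p\theta_0$, so the continuous mapping theorem applies. (Alternatively, a first-order mean-value expansion of $\hat Q$ around $\theta_0$ in $\mathcal N$, using Assumption \ref{assn:unif_converge}(iii)--(v) to bound $\hat Q^{(1)}$ uniformly on $\mathcal N$ by $O_p(1)$, yields the sharper conclusion $\hat Q(\theta_n)-\hat Q(\theta_0)=O_p(1)\|\hat\Sigma-{\Sigma_0}\|$.) I do not anticipate a serious obstacle here: the only mild subtlety is keeping the perturbed argument $\theta_n$ inside the region where the regularity conditions on $\hat Q$ are available, which is handled by the interiority of $\theta_0$ together with $\theta_n\to_p\theta_0$; the substantive step is the algebraic identity $\Sigma_0^\dagger\Sigma_0\gamma_0=0$ that collapses the second term to $\hat Q(\theta_0)$.
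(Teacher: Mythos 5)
Your proposal is correct and follows essentially the same route as the paper: a triangle-inequality decomposition into two uniform sampling-error terms plus a population-criterion difference controlled via Lemma \ref{lemma:sigma_conv} and the smoothness of $Q_0$. The only (valid) cosmetic differences are that you first use $\Sigma_0\gamma_0=0$ to collapse the second argument to $\theta_0$ exactly, and you invoke continuity of $Q_0$ for the middle term where the paper uses the mean-value theorem with the bounded first derivative.
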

\begin{proof}
By the mean-value expansion, with probability approaching one,
\begin{align*}
&|\hat{Q}([\mathrm{vec}((I-\hat{\Sigma}^\dagger\hat{\Sigma})\gamma_0)^\intercal,\rho_0^\intercal]^\intercal)-\hat{Q}([\mathrm{vec}((I-{\Sigma_0}^\dagger{\Sigma_0})\gamma_0)^\intercal,\rho_0^\intercal]^\intercal)|
\\&\leq 
2\sup_{\theta\in\mathcal{N}}|\hat{Q}(\theta)-Q_0(\theta)|
+
|Q_0(\left[\mathrm{vec}((I-\hat{\Sigma}^\dagger\hat{\Sigma})\gamma_0)^\intercal,\rho_0^\intercal\right]^\intercal)-Q_0(\left[\mathrm{vec}((I-{\Sigma_0}^\dagger{\Sigma_0})\gamma_0)^\intercal,\rho_0^\intercal\right]^\intercal)|
\\
&\leq 
2\sup_{\theta\in\mathcal{N}}|\hat{Q}(\theta)-Q_0(\theta)|
+
\sup_{\theta\in\Theta}\|Q_0^{(1)}(\theta)\|\|\hat{\Sigma}^\dagger\hat{\Sigma}-{\Sigma_0}^\dagger{\Sigma_0}\|\|\gamma_0\|.
\end{align*}
Lemma \ref{lemma:sigma_conv} and Assumption \ref{assn:unif_converge} imply the statement of this lemma.  
\end{proof}

\begin{lemma}\label{lemma:consiste_tildetheta}
Suppose the assumptions in Theorem \ref{theorem:main}.
(a) $\tilde\theta-\theta_0=o_p(1)$. 
(b) $\tilde\theta$ is in the interior of the compact space $\Theta$ with probability approaching one. 
\end{lemma}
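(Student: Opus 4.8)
The plan is to establish consistency of $\tilde\theta$ by a standard consistency-of-extremum-estimators argument applied to the \emph{reparametrized} criterion, exploiting Lemma \ref{lemma:max_transformed}. First I would define the population analogue of the objective maximized by $\tilde\theta$. By Lemma \ref{lemma:max_transformed}, $\tilde\theta$ maximizes $\theta\mapsto\hat Q\bigl([\mathrm{vec}((I-\hat\Sigma^\dagger\hat\Sigma)\gamma)^\intercal,\rho^\intercal]^\intercal\bigr)$ over $\Theta$, so a natural limiting object is $\theta\mapsto Q_0\bigl([\mathrm{vec}((I-\Sigma_0^\dagger\Sigma_0)\gamma)^\intercal,\rho^\intercal]^\intercal\bigr)$. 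The key observation is that, by Theorem \ref{theorem:pair_diff}, $\Sigma_0\gamma_0=0$, hence $(I-\Sigma_0^\dagger\Sigma_0)\gamma_0=\gamma_0$, so this population objective is still uniquely maximized at $\theta_0$ by Assumption \ref{assn:unif_converge}(ii). The remaining work is to show that the sample objective converges uniformly to this population objective.

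The uniform convergence follows by combining three pieces. The first is the uniform consistency $\sup_{\theta\in\Theta}\|\hat Q(\theta)-Q_0(\theta)\|=o_p(1)$ from Assumption \ref{assn:unif_converge}(iv). The second is the convergence $\hat\Sigma^\dagger\hat\Sigma=\Sigma_0^\dagger\Sigma_0+O_p(1)\|\hat\Sigma-\Sigma_0\|=\Sigma_0^\dagger\Sigma_0+o_p(1)$, which is Lemma \ref{lemma:sigma_conv} together with Assumption \ref{assn:tildeSigma_converge}. The third is the (uniform) continuity of $Q_0$, which follows from Assumption \ref{assn:unif_converge}(iii) — indeed $Q_0$ is continuously differentiable with bounded first derivative, so it is Lipschitz on $\Theta$; combined with the linearity of $\boldsymbol\gamma$ and compactness of $\Theta$, the map $(\theta,M)\mapsto Q_0\bigl([\mathrm{vec}((I-M)\gamma)^\intercal,\rho^\intercal]^\intercal\bigr)$ is uniformly continuous, so plugging in $\hat\Sigma^\dagger\hat\Sigma\to_p\Sigma_0^\dagger\Sigma_0$ gives
$$
\sup_{\theta\in\Theta}\Bigl|Q_0\bigl([\mathrm{vec}((I-\hat\Sigma^\dagger\hat\Sigma)\gamma)^\intercal,\rho^\intercal]^\intercal\bigr)-Q_0\bigl([\mathrm{vec}((I-\Sigma_0^\dagger\Sigma_0)\gamma)^\intercal,\rho^\intercal]^\intercal\bigr)\Bigr|=o_p(1).
$$
A triangle inequality then yields $\sup_{\theta\in\Theta}\bigl|\hat Q([\mathrm{vec}((I-\hat\Sigma^\dagger\hat\Sigma)\gamma)^\intercal,\rho^\intercal]^\intercal)-Q_0([\mathrm{vec}((I-\Sigma_0^\dagger\Sigma_0)\gamma)^\intercal,\rho^\intercal]^\intercal)\bigr|=o_p(1)$. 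With a well-separated maximizer (Assumption \ref{assn:unif_converge}(i)–(ii)) and a uniformly convergent criterion, the standard argument (e.g., Newey–McFadden Theorem 2.1) delivers $\tilde\theta-\theta_0=o_p(1)$, which is part (a). Part (b) is then immediate: since $\theta_0$ is an interior point of $\Theta$ by Assumption \ref{assn:unif_converge}(i), consistency implies $\tilde\theta$ lies in any fixed open neighborhood of $\theta_0$ contained in the interior with probability approaching one.

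The main obstacle is the extra argument $\hat\Sigma^\dagger\hat\Sigma$ inside the criterion, which makes the objective maximized by $\tilde\theta$ a \emph{random perturbation} of a fixed objective rather than a fixed one; the care is in showing this perturbation is asymptotically negligible uniformly over $\Theta$. The discontinuity of the Moore–Penrose inverse at rank changes is precisely why Assumption \ref{assn:tildeSigma_converge} includes rank-consistency and why Lemma \ref{lemma:sigma_conv} is needed — on the probability-approaching-one event that $\mathrm{rank}(\hat\Sigma)=\mathrm{rank}(\Sigma_0)$, the map $M\mapsto M^\dagger M$ is well-behaved near $\Sigma_0$, so the perturbation is controlled. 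Everything else is the routine extremum-estimator machinery, and the fact that $(I-\Sigma_0^\dagger\Sigma_0)\gamma_0=\gamma_0$ (so no identification is lost by projecting onto the nullspace of $\Sigma_0$) is exactly what makes the limiting objective still identify $\theta_0$.
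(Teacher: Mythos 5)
Your strategy is close in spirit to the paper's (both rest on Lemma \ref{lemma:max_transformed}, Lemma \ref{lemma:sigma_conv}, and the Lipschitz continuity of $Q_0$ supplied by Assumption \ref{assn:unif_converge}(iii)), but there is a genuine gap at the central step. You claim that the limiting objective $\theta\mapsto Q_0\bigl([\mathrm{vec}((I-\Sigma_0^\dagger\Sigma_0)\gamma)^\intercal,\rho^\intercal]^\intercal\bigr)$ is still uniquely maximized at $\theta_0$. It is not: for any nonzero $\eta$ in the column space of $\Sigma_0^\dagger\Sigma_0$ (equivalently, the column space of the symmetric matrix $\Sigma_0$), the point $\theta=[\mathrm{vec}(\gamma_0+\eta)^\intercal,\rho_0^\intercal]^\intercal$ satisfies $(I-\Sigma_0^\dagger\Sigma_0)(\gamma_0+\eta)=\gamma_0$ and hence attains the same value of the reparametrized population criterion. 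Since $\theta_0$ is interior to $\Theta$ and $\mathrm{rank}(\Sigma_0)>0$ in every case of interest, such points lie in $\Theta$, so the reparametrized population objective is maximized on a whole affine slice, not at a singleton. Consequently the Newey--McFadden argument you invoke only delivers convergence of $\tilde\theta$ to that \emph{set} (equivalently, $(I-\Sigma_0^\dagger\Sigma_0)\tilde\gamma\to_p\gamma_0$), not $\tilde\theta\to_p\theta_0$.

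The missing ingredient is the constraint itself: because $\hat\Sigma\tilde\gamma=0$, one has $\tilde\gamma=(I-\hat\Sigma^\dagger\hat\Sigma)\tilde\gamma$, so the reparametrized sample criterion evaluated at $\tilde\theta$ equals $\hat Q(\tilde\theta)$ evaluated at $\tilde\theta$ itself. This is exactly how the paper proceeds: it writes $Q_0(\tilde\theta)-Q_0(\theta_0)$ as a four-term telescoping sum, uses Lemma \ref{lemma:max_transformed} for the middle comparison, Lemma \ref{lemma:hatSigma_conv_inside_hatQ} (the pointwise-at-$\theta_0$ version of your uniform perturbation bound) for the term involving $\gamma_0$, and the uniform convergence of $\hat Q$ for the remaining terms, arriving at $Q_0(\tilde\theta)\ge Q_0(\theta_0)+o_p(1)$. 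The well-separated-maximum argument is then applied to the \emph{original} $Q_0$, whose maximizer is unique by Assumption \ref{assn:unif_converge}(ii). If you insert the observation that $\tilde\theta$ is a fixed point of the estimated projection (so that $\tilde\theta=(I-\hat\Sigma^\dagger\hat\Sigma)$-projected version of itself, and hence $\tilde\theta-(I-\Sigma_0^\dagger\Sigma_0)$-projection of $\tilde\theta=o_p(1)$ by Lemma \ref{lemma:sigma_conv}), your route can be repaired; without it, the conclusion $\tilde\theta-\theta_0=o_p(1)$ does not follow. Part (b) of your argument is fine once (a) is established.
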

\begin{proof}
Note that 
\begin{eqnarray*}
Q_0(\tilde\theta)-Q_0(\theta_0)
&=&
Q_0(\tilde\theta)-\hat{Q}(\tilde\theta)
\\&&
+\hat{Q}(\left[\mathrm{vec}((I-\hat{\Sigma}^\dagger\hat{\Sigma})\tilde\gamma)^\intercal,\tilde\rho^\intercal\right]^\intercal)-\hat{Q}(\left[\mathrm{vec}((I-\hat{\Sigma}^\dagger\hat{\Sigma})\gamma_0)^\intercal,\rho_0^\intercal\right]^\intercal)
\\&&
+\hat{Q}(\left[\mathrm{vec}((I-\hat{\Sigma}^\dagger\hat{\Sigma})\gamma_0)^\intercal,\rho_0^\intercal\right]^\intercal)-\hat{Q}(\theta_0)
\\&&
+\hat{Q}(\theta_0)-Q_0(\theta_0)
\\
&\geq&
-2\sup_{\theta\in\Theta}|\hat{Q}(\theta)-Q_0(\theta)|+o_p(1)
\end{eqnarray*}
where the equality follows from $\tilde\gamma=(I-\hat{\Sigma}^\dagger\hat{\Sigma})\tilde\gamma$ and the inequality follows from Lemma \ref{lemma:max_transformed} and \ref{lemma:hatSigma_conv_inside_hatQ}. 
Then, by Assumption \ref{assn:unif_converge}, we have $Q_0(\tilde\theta)\geq Q_0(\theta_0)+o_p(1)$. 
By the compactness of $\Theta$ and the uniqueness of $\theta_0$, the first statement of this lemma holds. The second statement follows from the first statement and Assumption \ref{assn:unif_converge}(i).
\end{proof}

\begin{lemma}\label{lemma:Qhat_der_conv}
Under the assumptions in Theorem \ref{theorem:main}, $$
\|\hat{Q}^{(1)}(\tilde\theta)-{Q}_0^{(2)}(\theta_0)(\tilde\theta-\theta_0)\|
\leq
o_p(1)\|\tilde\theta-\theta_0\|+O_p(n^{-1/2}).
$$ 
\end{lemma}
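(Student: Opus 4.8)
The plan is to Taylor-expand $\hat{Q}^{(1)}$ about $\theta_0$ and then control the remainder using three ingredients: the rate bound on $\hat{Q}^{(1)}(\theta_0)$ from Assumption \ref{assn:unif_converge}(iv), the consistency of $\tilde\theta$ established in Lemma \ref{lemma:consiste_tildetheta}, and the uniform convergence of the second derivative in Assumption \ref{assn:unif_converge}(v).

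First I would note that, because $\theta_0$ is an interior point of $\Theta$ and uniquely maximizes the twice continuously differentiable function $Q_0$ (Assumption \ref{assn:unif_converge}(i)--(iii)), the first-order condition gives $Q_0^{(1)}(\theta_0)=0$; hence Assumption \ref{assn:unif_converge}(iv) yields $\|\hat{Q}^{(1)}(\theta_0)\|=\|\hat{Q}^{(1)}(\theta_0)-Q_0^{(1)}(\theta_0)\|=O_p(n^{-1/2})$. Next, by Lemma \ref{lemma:consiste_tildetheta}, $\tilde\theta-\theta_0=o_p(1)$ and $\tilde\theta$ is in the interior of $\Theta$ with probability approaching one; since the neighborhood $\mathcal{N}$ contains a ball centred at $\theta_0$, on an event of probability approaching one the entire segment $\{\theta_0+s(\tilde\theta-\theta_0):s\in[0,1]\}$ lies in $\mathcal{N}$, where $\hat{Q}$ is twice differentiable.

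On that event I would apply a componentwise mean-value expansion of $\hat{Q}^{(1)}$ along the segment, producing a (random) matrix $\bar H$ — whose $j$th row is the $j$th row of $\hat{Q}^{(2)}$ evaluated at some intermediate point $\bar\theta_j$ on that segment — such that $\hat{Q}^{(1)}(\tilde\theta)-\hat{Q}^{(1)}(\theta_0)=\bar H(\tilde\theta-\theta_0)$. Each $\bar\theta_j\to_p\theta_0$, so the bound $\|\bar H-Q_0^{(2)}(\theta_0)\|\le \sup_{\theta\in\mathcal{N}}\|\hat{Q}^{(2)}(\theta)-Q_0^{(2)}(\theta)\|+\max_j\|Q_0^{(2)}(\bar\theta_j)-Q_0^{(2)}(\theta_0)\|$ controls the first term by $o_p(1)$ via Assumption \ref{assn:unif_converge}(v) and the second by $o_p(1)$ via continuity of $Q_0^{(2)}$ (Assumption \ref{assn:unif_converge}(iii)), so $\|\bar H-Q_0^{(2)}(\theta_0)\|=o_p(1)$.

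Finally, I would combine these pieces through the identity
$$
\hat{Q}^{(1)}(\tilde\theta)-Q_0^{(2)}(\theta_0)(\tilde\theta-\theta_0)=\hat{Q}^{(1)}(\theta_0)+\bigl(\bar H-Q_0^{(2)}(\theta_0)\bigr)(\tilde\theta-\theta_0),
$$
take norms, and use $\|\hat{Q}^{(1)}(\theta_0)\|=O_p(n^{-1/2})$ together with $\|\bar H-Q_0^{(2)}(\theta_0)\|=o_p(1)$ to obtain the claimed bound $o_p(1)\|\tilde\theta-\theta_0\|+O_p(n^{-1/2})$. The only delicate point is the bookkeeping that keeps the mean-value evaluation points inside $\mathcal{N}$ with probability approaching one so that Assumption \ref{assn:unif_converge}(v) can be invoked; Lemma \ref{lemma:consiste_tildetheta} is exactly what supplies this, and the rest is routine.
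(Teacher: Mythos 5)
Your proposal is correct and follows essentially the same route as the paper: a mean-value/Taylor expansion of the gradient around $\theta_0$, with the three assumptions (the $O_p(n^{-1/2})$ rate for $\hat{Q}^{(1)}(\theta_0)$, consistency of $\tilde\theta$ from Lemma \ref{lemma:consiste_tildetheta}, and uniform convergence plus continuity of the second derivative) supplying exactly the bounds you use. The only cosmetic difference is that the paper splits the remainder into two separate terms (a mean-value bound on $\hat{Q}^{(1)}-Q_0^{(1)}$ and a Taylor remainder for $Q_0^{(1)}$) whereas you absorb both into the single matrix $\bar H - Q_0^{(2)}(\theta_0)$; your explicit componentwise treatment of the vector-valued mean-value step is, if anything, slightly more careful.
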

\begin{proof}
Since ${Q}_0^{(1)}(\theta_0)=0$ from the first-order condition for $\theta_0$, we have 
\begin{align*}
\hat{Q}^{(1)}(\tilde\theta)-{Q}_0^{(2)}(\theta_0)(\tilde\theta-\theta_0)
&=
((\hat{Q}^{(1)}(\tilde\theta)-{Q}_0^{(1)}(\tilde\theta))-(\hat{Q}^{(1)}(\theta_0)-{Q}_0^{(1)}(\theta_0)))
\\&\quad+({Q}_0^{(1)}(\tilde\theta)-{Q}_0^{(1)}(\theta_0)-{Q}_0^{(2)}(\theta_0)(\tilde\theta-\theta_0))
\\&\quad+(\hat{Q}^{(1)}(\theta_0)-{Q}_0^{(1)}(\theta_0)).
\end{align*}
The first term $((\hat{Q}^{(1)}(\tilde\theta)-{Q}_0^{(1)}(\tilde\theta))-(\hat{Q}^{(1)}(\theta_0)-{Q}_0^{(1)}(\theta_0)))$ is $o_p(1)\|\tilde\theta-\theta_0\|$ because the mean value theorem and Assumption \ref{assn:unif_converge} imply  
$$
\|((\hat{Q}^{(1)}(\tilde\theta)-{Q}_0^{(1)}(\tilde\theta))-(\hat{Q}^{(1)}(\theta_0)-{Q}_0^{(1)}(\theta_0)))\|
\leq
\sup_{\theta\in\mathcal{N}}\|\hat{Q}^{(2)}(\theta)-{Q}_0^{(2)}(\theta)\|\|\tilde\theta-\theta_0\|.
=
o_p(1)\|\tilde\theta-\theta_0\|.
$$
The second term $({Q}_0^{(1)}(\tilde\theta)-{Q}_0^{(1)}(\theta_0)-{Q}_0^{(2)}(\theta_0)(\tilde\theta-\theta_0))$ is $o_p(1)\|\tilde\theta-\theta_0\|$ because the first-order Taylor expansion and Lemma \ref{lemma:consiste_tildetheta} imply  
$$
\|{Q}_0^{(1)}(\tilde\theta)-{Q}_0^{(1)}(\theta_0)-{Q}_0^{(2)}(\theta_0)(\tilde\theta-\theta_0)\|
\leq o_p(1)\|\tilde\theta-\theta_0\|.
$$
The third term $\hat{Q}^{(1)}(\theta_0)-{Q}_0^{(1)}(\theta_0)$ is $O_p(n^{-1/2})$ by Assumption \ref{assn:unif_converge}. 
\end{proof}

\begin{lemma}\label{lemma:tilde_consist}
Under the assumptions in Theorem \ref{theorem:main}, $\tilde\theta-\theta_0=O_p(1)\max\{\|\hat{\Sigma}-{\Sigma_0}\|,n^{-1/2}\}$.
\end{lemma}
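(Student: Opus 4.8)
The plan is to use the first–order condition satisfied by $\tilde\theta$ in the reparametrized problem of Lemma \ref{lemma:max_transformed}, to linearize it via Lemma \ref{lemma:Qhat_der_conv}, and then to invert a compressed version of $Q_0^{(2)}(\theta_0)$.

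First I would set up notation. Let $\hat{P}$ denote the block matrix $\mathrm{diag}\bigl(I_{J_1}\otimes(I-\hat{\Sigma}^\dagger\hat{\Sigma}),\,I\bigr)$ acting on $\theta=(\mathrm{vec}(\gamma)^\intercal,\rho^\intercal)^\intercal$, so that $\mathrm{vec}\bigl((I-\hat{\Sigma}^\dagger\hat{\Sigma})\gamma\bigr)=\bigl(I_{J_1}\otimes(I-\hat{\Sigma}^\dagger\hat{\Sigma})\bigr)\mathrm{vec}(\gamma)$ and the objective in Lemma \ref{lemma:max_transformed} is $\theta\mapsto\hat{Q}(\hat{P}\theta)$. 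Since $\hat{\Sigma}$ is symmetric (both the kernel estimator of Section \ref{sec:const_tildeSigma} and the low-rank approximation in equation \eqref{eq:hatSigma_define} are symmetric), $\hat{\Sigma}^\dagger\hat{\Sigma}$ is a symmetric idempotent matrix and $\hat{P}$ is an orthogonal projection with $\|\hat{P}\|\le 1$. Moreover $\hat{\Sigma}\tilde\gamma=0$ implies $\hat{\Sigma}^\dagger\hat{\Sigma}\tilde\gamma=0$, so $\hat{P}\tilde\theta=\tilde\theta$. By Lemma \ref{lemma:consiste_tildetheta}, $\tilde\theta$ is consistent and, with probability approaching one, lies in the interior of $\Theta$ and in the neighborhood $\mathcal{N}$ of Assumption \ref{assn:unif_converge}(v); hence with probability approaching one $\tilde\theta$ maximizes $\theta\mapsto\hat{Q}(\hat{P}\theta)$ at an interior point, which gives the first-order condition $\hat{P}\,\hat{Q}^{(1)}(\hat{P}\tilde\theta)=\hat{P}\,\hat{Q}^{(1)}(\tilde\theta)=0$.

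Next I would linearize and decompose. Applying $\hat P$ to the expansion of Lemma \ref{lemma:Qhat_der_conv} and using the first-order condition and $\|\hat{P}\|\le1$,
$$
\bigl\|\hat{P}\,Q_0^{(2)}(\theta_0)(\tilde\theta-\theta_0)\bigr\|=\bigl\|\hat{P}\bigl(\hat{Q}^{(1)}(\tilde\theta)-Q_0^{(2)}(\theta_0)(\tilde\theta-\theta_0)\bigr)\bigr\|\le o_p(1)\|\tilde\theta-\theta_0\|+O_p(n^{-1/2}).
$$
Write $\tilde\theta-\theta_0=\hat{P}(\tilde\theta-\theta_0)+(I-\hat{P})(\tilde\theta-\theta_0)$. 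Since $\hat{P}\tilde\theta=\tilde\theta$, the second term equals $-(I-\hat{P})\theta_0$, whose only non-trivial block is $\mathrm{vec}(\hat{\Sigma}^\dagger\hat{\Sigma}\gamma_0)$; because $\Sigma_0\gamma_0=0$ (Theorem \ref{theorem:pair_diff}) implies $\Sigma_0^\dagger\Sigma_0\gamma_0=0$, Lemma \ref{lemma:sigma_conv} gives $\hat{\Sigma}^\dagger\hat{\Sigma}\gamma_0=(\hat{\Sigma}^\dagger\hat{\Sigma}-\Sigma_0^\dagger\Sigma_0)\gamma_0=O_p(\|\hat{\Sigma}-\Sigma_0\|)$, so $(I-\hat{P})(\tilde\theta-\theta_0)=O_p(\|\hat{\Sigma}-\Sigma_0\|)$. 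Substituting this into the display above and using boundedness of $Q_0^{(2)}(\theta_0)$,
$$
\bigl\|\hat{P}\,Q_0^{(2)}(\theta_0)\,\hat{P}(\tilde\theta-\theta_0)\bigr\|\le o_p(1)\|\tilde\theta-\theta_0\|+O_p\bigl(\max\{\|\hat{\Sigma}-\Sigma_0\|,n^{-1/2}\}\bigr).
$$

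The key step, which I expect to be the main obstacle, is to turn the left-hand side into a lower bound on $\|\hat{P}(\tilde\theta-\theta_0)\|$ that is uniform over the data-dependent projection $\hat{P}$. Here I would use that $Q_0^{(2)}(\theta_0)$ is negative definite: it is nonsingular by Assumption \ref{assn:unif_converge}(iii) and negative semidefinite because $\theta_0$ is an interior maximizer of the twice continuously differentiable $Q_0$ by Assumption \ref{assn:unif_converge}(i)--(iii). Thus there is $c>0$ with $v^\intercal Q_0^{(2)}(\theta_0)v\le-c\|v\|^2$ for all $v$, and for any $v$ in the range of $\hat{P}$ (so $\hat{P}v=v$), Cauchy--Schwarz gives $\|\hat{P}Q_0^{(2)}(\theta_0)\hat{P}v\|\ge\|v\|^{-1}\bigl|v^\intercal Q_0^{(2)}(\theta_0)v\bigr|\ge c\|v\|$; crucially this lower bound does not depend on which projection $\hat{P}$ is realized. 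Applying it with $v=\hat{P}(\tilde\theta-\theta_0)$ and combining with $\|\tilde\theta-\theta_0\|\le\|\hat{P}(\tilde\theta-\theta_0)\|+\|(I-\hat{P})(\tilde\theta-\theta_0)\|$ yields $\|\tilde\theta-\theta_0\|\le o_p(1)\|\tilde\theta-\theta_0\|+O_p\bigl(\max\{\|\hat{\Sigma}-\Sigma_0\|,n^{-1/2}\}\bigr)$; absorbing the $o_p(1)\|\tilde\theta-\theta_0\|$ term on the left (legitimate since $1-o_p(1)\ge1/2$ with probability approaching one) gives the claim.
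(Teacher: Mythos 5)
Your proof is correct, and while it uses the same ingredients as the paper (the reparametrized first-order condition from Lemma \ref{lemma:max_transformed}, the linearization in Lemma \ref{lemma:Qhat_der_conv}, the projection comparison in Lemma \ref{lemma:sigma_conv}, and interiority from Lemma \ref{lemma:consiste_tildetheta}), the inversion step is executed differently. The paper stacks the projected score equation with the constraint $\hat{\Sigma}\tilde\gamma=0$, passes to the population projection $\Sigma_0^\dagger\Sigma_0$ via Lemma \ref{lemma:sigma_conv}, and inverts the resulting fixed matrix $\mathbf{M}_3$ after asserting it has full column rank. You instead split $\tilde\theta-\theta_0$ along the range and kernel of the estimated projection $\hat{P}$, bound the kernel component directly by $O_p(\|\hat\Sigma-\Sigma_0\|)$ using $\Sigma_0^\dagger\Sigma_0\gamma_0=0$, and lower-bound the compressed Hessian $\hat{P}Q_0^{(2)}(\theta_0)\hat{P}$ on the range of $\hat P$ by a constant $c>0$ that is uniform over realizations of $\hat P$, exploiting negative definiteness of $Q_0^{(2)}(\theta_0)$ (which you correctly derive from interiority plus non-singularity). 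This buys two things: you never need to replace $\hat P$ by its population counterpart inside the Hessian term, and you make explicit the definiteness property that the paper's rank computation for $\mathbf{M}_3$ implicitly relies on --- the displayed rank identity in the paper, which drops $Q_0^{(2)}(\theta_0)$ entirely, would fail for a merely non-singular indefinite Hessian, so your treatment is in this respect more careful. One small remark: you justify $\hat P$ being an orthogonal projection via symmetry of $\hat\Sigma$, but this is automatic, since $A^\dagger A$ is symmetric idempotent for any matrix $A$ by the defining properties of the Moore--Penrose inverse.
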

\begin{proof}
By Lemmas \ref{lemma:max_transformed} and \ref{lemma:consiste_tildetheta}(b), the first-order condition for $\tilde\theta$ and constraint may be written as 
$$
\left(\begin{array}{c}
\left.\frac{\partial}{\partial\theta}\hat{Q}\left(\left[\mathrm{vec}((I-\hat{\Sigma}^\dagger\hat{\Sigma})\gamma)^\intercal,\rho^\intercal\right]^\intercal\right)\right|_{\theta=\tilde\theta}\\\hat{\Sigma}\mathrm{vec}(\tilde\gamma)
\end{array}\right)
=0.
$$
Define $$\mathbf{M}_3=
\left(\begin{array}{c}
\left(\begin{array}{cc}I_{J_1}\otimes(I_{\dim(Z)}-{\Sigma_0}^\dagger{\Sigma_0})&O\\ O&I_{\dim(\rho)}\end{array}\right){Q}_0^{(2)}(\theta_0)\\
(\begin{array}{cc}I_{J_1}\otimes{\Sigma_0}^\dagger{\Sigma_0}&O\end{array})
\end{array}\right).$$
By ${\Sigma_0}\theta_0=0$ and Lemmas \ref{lemma:sigma_conv} and \ref{lemma:consiste_tildetheta}, we have  
$$
\mathbf{M}_3
(\tilde\theta-\theta_0)
=
O(1)(
\hat{Q}^{(1)}(\tilde\theta)-{Q}_0^{(2)}(\theta_0)(\tilde\theta-\theta_0))
+
O_p(1)\|\hat{\Sigma}-{\Sigma_0}\|.
$$
Note that $\mathbf{M}_3$ has full column rank, because  
$$
\mathrm{rank}\left(\mathbf{M}_3\right)
=
\mathrm{rank}
\left(\begin{array}{cc}I_{J_1\dim(Z)}-I_{J_1}\otimes{\Sigma_0}^\dagger{\Sigma_0}&O\\ O&I_{\dim(\rho)}\\I_{J_1}\otimes{\Sigma_0}^\dagger{\Sigma_0}&O\end{array}\right)=\dim(\theta).
$$
Therefore, 
$$
\tilde\theta-\theta_0
=
O(1)(
\hat{Q}^{(1)}(\tilde\theta)-{Q}_0^{(2)}(\theta_0)(\tilde\theta-\theta_0))
+
O_p(1)\|\hat{\Sigma}-{\Sigma_0}\|.
$$
By Lemma \ref{lemma:Qhat_der_conv}, 
$$
\|\tilde\theta-\theta_0\|
\leq
o_p(1)\|\tilde\theta-\theta_0\|+
O_p(1)\max\{\|\hat{\Sigma}-{\Sigma_0}\|,n^{-1/2}\},
$$
which implies the statement of this lemma holds. 
\end{proof}
 
\begin{proof}[Proof of Theorem \ref{theorem:main}] 
By \textcite[Theorem 2.1 and 3.1]{newey1994large} and Assumption \ref{assn:unif_converge}, $\hat{\theta}^*=\theta_0+O_p(n^{-1/2})$. By Lemma \ref{lemma:tilde_consist}, 
$$
\tilde\theta-\theta_0=o_p(1)\mbox{ and }\|\tilde\theta-\hat{\theta}^*\|=O_p(1)\max\{\|\hat{\Sigma}-{\Sigma_0}\|,n^{-1/2}\}.
$$ 
Thus the statement of this theorem follows from \textcite[Theorem 2]{robinson1988stochastic}. Assumption  A1 in \textcite{robinson1988stochastic} follows from Assumption \ref{assn:unif_converge} and the consistency of $\hat{\theta}^*$. Assumption  A3 in \textcite{robinson1988stochastic} follows from Assumption \ref{assn:unif_converge}.
\end{proof}

\subsection{Proof of Lemma  \ref{lemma:Sigma_estimation}}
\begin{proof}
Since $\tilde{\Sigma}={\Sigma_0}+o_p(1)$, it suffices to show ${\mathrm{rank}}(\hat{\Sigma})={\mathrm{rank}}({\Sigma}_0)$ and $\|\hat{\Sigma}-{\Sigma_0}\|\leq 2\|\tilde{\Sigma}-{\Sigma_0}\|$.
By the assumption of this lemma, we have 
$Pr(\|\tilde{\Sigma}-{\Sigma}_0\|\leq\kappa\leq\min\{\lambda_k:\lambda_k>0\}-\|\tilde{\Sigma}-{\Sigma}_0\|)=1+o(1)$, where $\lambda_1\geq\cdots\geq\lambda_K$ are the eigenvalues of ${\Sigma}_0$.
As long as $\|\tilde{\Sigma}-{\Sigma}_0\|\leq\kappa\leq\min\{\lambda_k:\lambda_k>0\}-\|\tilde{\Sigma}-{\Sigma}_0\|$, by Weyl's inequality on the eigenvalue perturbations, we have 
$$
1\{\hat\lambda_k>\kappa\}=1\{\lambda_k>0\}
$$
for every $k=1,\ldots,K$. 
It implies ${\mathrm{rank}}(\hat{\Sigma})={\mathrm{rank}}({\Sigma}_0)$ with probability approaching one. Moreover, by the Eckart-Young-Mirsky theorem, $
\|\hat{\Sigma}-\tilde{\Sigma}\|\leq\|{\Sigma}_0-\tilde{\Sigma}\|$, which implies $\|\hat{\Sigma}-{\Sigma_0}\|\leq\|\hat{\Sigma}-\tilde{\Sigma}\|+\|\tilde{\Sigma}-{\Sigma_0}\|\leq 2\|\tilde{\Sigma}-{\Sigma_0}\|$.
\end{proof}

\section{Estimation of the constraint matrix}\label{sec:const_tildeSigma}

In this section we propose a consistent estimator for $\Sigma_{0}$ from an $n$ i.i.d. observations $Z_1,\ldots,Z_n$ and an estimator for $(\Pi_0,\delta_0)$. Our construction is related to the estimator of \textcite{ahn2018simple}.

As is relevant for dynamic discrete choice models, we allow some components of $Z$ to be discrete. In this section, we arrange $[\Pi_0(Z)^\intercal,(\delta_0^\intercal Z)^\intercal]^\intercal$ and write it as $[U^\intercal,V^\intercal]^\intercal$, where $U$ is a continuous random variable and $V$ is a random variable with finite support. With some abuse of notation, we use $\left[\Pi_0(Z)^\intercal,(\delta_0^\intercal Z)^\intercal\right]^\intercal$ and $[U^\intercal,V^\intercal]^\intercal$ interchangeably. The proposed estimator uses kernel smoothing, and therefore we require conditions on both the kernel function $\mathbf{K}$ and the bandwidth $h$:
\begin{assumption}\label{assn:kernl+band}
(i) $\mathbf{K}:\mathbb{R}^{\dim(U)+\dim(V)}\rightarrow\mathbb{R}$ has a bounded first derivative $\mathbf{K}^{(1)}$. 
(ii) $\mathbf{K}\left(\left[u^\intercal,v^\intercal\right]^\intercal\right)=0$ for every $(u,v)$ with $\|u\|\geq 1$ and $v\ne 0$.  
(iii) $\int \mathbf{K}\left(\left[u^\intercal,0^\intercal\right]^\intercal\right)du=1$ and $\int \mathbf{K}\left(\left[u^\intercal,0^\intercal\right]^\intercal\right)udu=0$.
(iv) $h\rightarrow 0$ and $nh^{\dim(U)/2}\rightarrow\infty$ as $n\rightarrow\infty$. 
\end{assumption} 

To construct an estimator for $\Sigma_{0}$, we assume that there is a consistent estimator $(\hat\delta,\hat\Pi)$ for $(\delta_0,\Pi_0)$. As in Section \ref{subsec:ddcm}, in dynamic discrete models $\delta_{0}$ may govern the state transition kernel, and is thus consistently estimable from data on the state transition. Similarly, the CCPs $\Pi_{0}$ are nonparametrically identified from the data.

\begin{assumption}\label{assn:prel_estimation}
$\max\{\sup_{z}\|\hat\Pi(z)-\Pi_0(z))\|,\|\hat\delta-\delta_0\|\}=o_p(h)$.
\end{assumption}

\begin{assumption}\label{assn:smoothness_regrefun}
(i) The functions $E[(Z_1-Z_2)(Z_1-Z_2)^\intercal\mid U_1-U_2=\cdot,V_1=V_2]f_{U_1-U_2\mid V_1=V_2}(\cdot)$ and $f_{U_1-U_2\mid V_1=V_2}(\cdot)$ are twice continuously differentiable near zero.
(ii) $f_{U_1-U_2}$, $f_{U_1-U_2\mid Z_1}$, $E\left[\|Z_2\|\mid U_1-U_2,Z_1\right]$, $E\left[\|Z_2\|^2\mid U_1-U_2,Z_1\right]$, and $E[\|Z_1-Z_2\|^4\mid U_1-U_2,V_1=V_2]$ are bounded. 
\end{assumption}

With these assumptions in hand, we define 
\begin{equation}\label{eq:tilde_sigma}
\tilde{\Sigma}\equiv \dfrac{\sum_{i_1,i_2}\mathbf{K}\left([(\hat\Pi(Z_{i_1})-\hat\Pi(Z_{i_2}))^\intercal,(\hat\delta^\intercal(Z_{i_1}-Z_{i_2}))^\intercal]^\intercal/h\right)(Z_{i_1}-Z_{i_2})(Z_{i_1}-Z_{i_2})^\intercal}{\sum_{i_1,i_2}\mathbf{K}\left([(\hat\Pi(Z_{i_1})-\hat\Pi(Z_{i_2}))^\intercal,(\hat\delta^\intercal(Z_{i_1}-Z_{i_2}))^\intercal]^\intercal/h\right)}.
\end{equation}
The following result shows the consistency for $\tilde{\Sigma}$.

\begin{theorem}\label{thm:tilde_sigma}
    If $Z_1,\ldots, Z_n$ are i.i.d. and Assumptions \ref{assn:kernl+band}-\ref{assn:smoothness_regrefun}  hold, then $\tilde{\Sigma}-\Sigma_{0}=o_{p}(1)$
\end{theorem}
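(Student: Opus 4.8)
The plan is to follow the standard two-step route for feasible pairwise-difference kernel estimators (in the spirit of \textcite{ahn2018simple}): first analyze the \emph{infeasible} version of $\tilde\Sigma$ obtained by plugging in the true $(\Pi_0,\delta_0)$, then argue that the feasible version is asymptotically the same. For the infeasible step, write $U=\Pi_0(Z)$ and $V=\delta_0^\intercal Z$, and let $\tilde\Sigma^*$ be the ratio in \eqref{eq:tilde_sigma} with $(\hat\Pi,\hat\delta)$ replaced by $(\Pi_0,\delta_0)$. Its numerator and denominator are second-order $U$-statistics over the off-diagonal pairs (diagonal terms are zero in the numerator and negligible or excluded in the denominator). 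Scaling each by $1/(n^2 h^{\dim(U)})$, I would compute its mean and variance. For the mean, since $\mathbf{K}$ vanishes when the discrete components differ (Assumption \ref{assn:kernl+band}(ii)), I change variables $u\mapsto(U_1-U_2)/h$ on each atom of $V$ and use that $w\mapsto E[(Z_1-Z_2)(Z_1-Z_2)^\intercal\mid U_1-U_2=w,V_1=V_2]\,f_{U_1-U_2\mid V_1=V_2}(w)$ and $f_{U_1-U_2\mid V_1=V_2}(w)$ are twice continuously differentiable near $w=0$ (Assumption \ref{assn:smoothness_regrefun}(i)), together with the kernel moment conditions $\int\mathbf{K}([u^\intercal,0^\intercal]^\intercal)\,du=1$ and $\int\mathbf{K}([u^\intercal,0^\intercal]^\intercal)u\,du=0$ (Assumption \ref{assn:kernl+band}(iii)); this yields mean limits $\Sigma_0\,f_{U_1-U_2\mid V_1=V_2}(0)\Pr(V_1=V_2)$ for the scaled numerator and $f_{U_1-U_2\mid V_1=V_2}(0)\Pr(V_1=V_2)$ for the scaled denominator. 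For the variance, I would use the Hoeffding decomposition: the degenerate second-order term has variance of order $1/(n^2 h^{\dim(U)})$, which vanishes precisely because $n h^{\dim(U)/2}\to\infty$ (Assumption \ref{assn:kernl+band}(iv)), using boundedness of $E[\|Z_1-Z_2\|^4\mid U_1-U_2,V_1=V_2]$; the linear projection term has variance of order $1/n$, which is where the remaining boundedness conditions of Assumption \ref{assn:smoothness_regrefun}(ii) enter. Consistency of $\tilde\Sigma^*$ for $\Sigma_0$ then follows from the continuous mapping theorem applied to the numerator/denominator ratio, the denominator limit being positive.

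For the feasibility step, I would expand $\mathbf{K}$ in its argument. Writing $\xi_{i_1 i_2}$ and $\xi^*_{i_1 i_2}$ for the feasible and infeasible kernel arguments (before dividing by $h$), Assumption \ref{assn:prel_estimation} gives $\|\xi_{i_1 i_2}-\xi^*_{i_1 i_2}\|\le 2\sup_z\|\hat\Pi(z)-\Pi_0(z)\|+\|\hat\delta-\delta_0\|\,\|Z_{i_1}-Z_{i_2}\|=o_p(h)\,(1+\|Z_{i_1}-Z_{i_2}\|)$. Since $\mathbf{K}^{(1)}$ is bounded (Assumption \ref{assn:kernl+band}(i)), a first-order expansion gives $|\mathbf{K}(\xi_{i_1 i_2}/h)-\mathbf{K}(\xi^*_{i_1 i_2}/h)|=o_p(1)\,(1+\|Z_{i_1}-Z_{i_2}\|)$ uniformly over pairs; the additional factors $(Z_{i_1}-Z_{i_2})(Z_{i_1}-Z_{i_2})^\intercal$ in the numerator are then absorbed using the fourth-moment bound. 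Pairs with $V_{i_1}\ne V_{i_2}$ continue to receive zero (or asymptotically negligible) weight, since $\hat\delta^\intercal(Z_{i_1}-Z_{i_2})$ remains bounded away from zero while the rescaled continuous-component argument is large, so Assumption \ref{assn:kernl+band}(ii) still applies. Summing and rescaling by $1/(n^2 h^{\dim(U)})$ then shows the feasible numerator and denominator differ from the infeasible ones by $o_p(1)$, whence $\tilde\Sigma-\tilde\Sigma^*=o_p(1)$ and therefore $\tilde\Sigma-\Sigma_0=o_p(1)$.

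I expect the main obstacle to be the feasibility step, i.e., propagating the first-stage estimation error $(\hat\Pi-\Pi_0,\hat\delta-\delta_0)$ through the kernel: because the kernel argument is divided by $h$, an $o_p(h)$ error becomes an $o_p(1)$ perturbation of the argument, and one has to verify---using the bounded-derivative condition together with the moment bounds on $Z_1-Z_2$---that this perturbation does not distort the numerator/denominator ratio in the limit, while at the same time checking that cross-atom pairs in the discrete component stay negligible after perturbation. The $U$-statistic mean and variance computations of the infeasible step, while the bulk of the write-up, are by comparison routine once Assumptions \ref{assn:kernl+band}--\ref{assn:smoothness_regrefun} are available.
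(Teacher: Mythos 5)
Your overall strategy coincides with the paper's: a bias computation for the infeasible pairwise-difference $U$-statistic via change of variables and the kernel moment conditions, a variance bound via the $U$-statistic variance formula exploiting $nh^{\dim(U)/2}\to\infty$, a feasibility step that propagates the $o_p(h)$ first-stage error through the kernel by a Taylor expansion with bounded $\mathbf{K}^{(1)}$, and finally the ratio of numerator to denominator. The limits you identify, $\Sigma_0\Pr(V_1=V_2)f_{U_1-U_2\mid V_1=V_2}(0)$ for the numerator and $\Pr(V_1=V_2)f_{U_1-U_2\mid V_1=V_2}(0)$ for the denominator, are exactly those of Lemma \ref{lemma_num3} and the paper's concluding argument.

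The one step that would fail as literally written is in your feasibility argument. You bound $|\mathbf{K}(\xi_{i_1i_2}/h)-\mathbf{K}(\xi^*_{i_1i_2}/h)|$ by $o_p(1)\,(1+\|Z_{i_1}-Z_{i_2}\|)$ \emph{uniformly over all pairs} and then ``sum and rescale by $1/(n^2h^{\dim(U)})$.'' Summing a uniform $o_p(1)$ bound over all $n^2$ pairs and dividing by $n^2h^{\dim(U)}$ yields only $o_p(h^{-\dim(U)})$, which diverges. What rescues the argument---and what the paper's Lemma \ref{lemma_num1} makes explicit---is localization: because $\mathbf{K}$ is compactly supported and the perturbation of its argument is $o_p(h)$ before division by $h$, the difference $\mathbf{K}(\xi_{i_1i_2}/h)-\mathbf{K}(\xi^*_{i_1i_2}/h)$ vanishes except on pairs with $\|\zeta_{i_1}-\zeta_{i_2}\|\le 2h$, and the empirical frequency of such pairs is $O_p(h^{\dim(U)})$; this is precisely the content of showing $\mathcal{U}_1,\mathcal{U}_2=O_p(1)$ in the paper, where the $h^{-\dim(U)}$ normalization is absorbed by a change of variables. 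You need to carry the indicator $1\{\|\zeta_{i_1}-\zeta_{i_2}\|\le 2h\}$ through the expansion before summing; with that insertion your argument closes. A smaller remark: your decomposition $\|\xi_{i_1i_2}-\xi^*_{i_1i_2}\|\le o_p(h)(1+\|Z_{i_1}-Z_{i_2}\|)$ is in fact more careful than the paper's treatment, which takes the whole discrepancy to be uniformly $o_p(h)$; but the extra $\|Z_{i_1}-Z_{i_2}\|$ factor must then be controlled inside the localized sum using the conditional moment bounds of Assumption \ref{assn:smoothness_regrefun}(ii), exactly as the paper does for $\mathcal{U}_1$.
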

\subsection{Proof of Theorem \ref{thm:tilde_sigma}} 
We use the following lemmas to prove this theorem. Define 
$\zeta_i\equiv[\Pi_0(Z_i)^\intercal,(\delta_0^\intercal Z_i)^\intercal]^\intercal$ and $\hat\zeta_i\equiv[\hat\Pi(Z_i)^\intercal,(\hat\delta^\intercal Z_i)^\intercal]^\intercal$. Define $\hat{W}_{i_1i_2}\equiv\frac{1}{h^{\dim(U)}}\mathbf{K}\left((\hat\zeta_{i_1}-\hat\zeta_{i_2})/h\right)(Z_{i_1}-Z_{i_2})(Z_{i_1}-Z_{i_2})^\intercal$ and ${W}_{i_1i_2}\equiv\frac{1}{h^{\dim(U)}}\mathbf{K}\left((\zeta_{i_1}-\zeta_{i_2})/h\right)(Z_{i_1}-Z_{i_2})(Z_{i_1}-Z_{i_2})^\intercal$.
\begin{lemma}\label{lemma_num1}
Under the assumptions in Theorem \ref{thm:tilde_sigma},
$\frac{1}{n^2}\sum_{i_1,i_2}(\hat{W}_{i_1i_2}-{W}_{i_1i_2})=o_p(1)$.
\end{lemma}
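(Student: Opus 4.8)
The plan is to linearize $\mathbf{K}$ around the true index differences and show the linearization error is negligible by playing the $o_p(h)$ rate in Assumption \ref{assn:prel_estimation} against the $h^{-\dim(U)-1}$ that a crude bound produces, the discrepancy being absorbed by the compact-support restriction on the kernel (Assumption \ref{assn:kernl+band}(ii)) together with the moment and smoothness conditions of Assumption \ref{assn:smoothness_regrefun}. Throughout I would treat $\epsilon_n\equiv\max\{\sup_z\|\hat\Pi(z)-\Pi_0(z)\|,\|\hat\delta-\delta_0\|\}$ as a deterministic $o(h)$ sequence; this is legitimate because, after linearization, the only random object besides $\epsilon_n$ is a U-statistic in $Z_1,\dots,Z_n$ that can be bounded in mean. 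The diagonal terms $i_1=i_2$ are identically zero (since $Z_{i_1}-Z_{i_2}=0$) and can be discarded.

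For the off-diagonal terms, write $\hat\zeta_i-\zeta_i=[(\hat\Pi(Z_i)-\Pi_0(Z_i))^\intercal,((\hat\delta-\delta_0)^\intercal Z_i)^\intercal]^\intercal$ and apply a mean-value expansion of $\mathbf{K}$, valid since $\mathbf{K}^{(1)}$ exists and is bounded (Assumption \ref{assn:kernl+band}(i)): there is a point $\xi_{i_1 i_2}$ on the segment between $(\hat\zeta_{i_1}-\hat\zeta_{i_2})/h$ and $(\zeta_{i_1}-\zeta_{i_2})/h$ with
$$
\hat W_{i_1 i_2}-W_{i_1 i_2}=\frac{1}{h^{\dim(U)}}\,\mathbf{K}^{(1)}(\xi_{i_1 i_2})^\intercal\,\frac{(\hat\zeta_{i_1}-\zeta_{i_1})-(\hat\zeta_{i_2}-\zeta_{i_2})}{h}\,(Z_{i_1}-Z_{i_2})(Z_{i_1}-Z_{i_2})^\intercal .
$$
Since $\|(\hat\zeta_{i_1}-\zeta_{i_1})-(\hat\zeta_{i_2}-\zeta_{i_2})\|\le 2\sup_z\|\hat\Pi(z)-\Pi_0(z)\|+\|\hat\delta-\delta_0\|\,\|Z_{i_1}-Z_{i_2}\|\le\epsilon_n(2+\|Z_{i_1}-Z_{i_2}\|)$ and $\mathbf{K}^{(1)}$ is bounded, this gives $\|\hat W_{i_1 i_2}-W_{i_1 i_2}\|\lesssim \frac{\epsilon_n}{h^{\dim(U)+1}}\,|\mathbf{K}^{(1)}(\xi_{i_1 i_2})|\,(1+\|Z_{i_1}-Z_{i_2}\|)\,\|Z_{i_1}-Z_{i_2}\|^2$. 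I would keep $\|Z_{i_1}-Z_{i_2}\|$ inside the sum rather than replacing it by a supremum, since $Z$ need not have bounded support; its moments are controlled later via Assumption \ref{assn:smoothness_regrefun}.

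The crux — and the step I expect to be the main obstacle — is to show $\frac{1}{n^2}\sum_{i_1,i_2}\frac{1}{h^{\dim(U)}}|\mathbf{K}^{(1)}(\xi_{i_1 i_2})|\,(1+\|Z_{i_1}-Z_{i_2}\|)\,\|Z_{i_1}-Z_{i_2}\|^2=O_p(1)$, so that the bound from the previous paragraph is $O_p(\epsilon_n/h)$. By Assumption \ref{assn:kernl+band}(ii), $\mathbf{K}$ — and, by continuity of $\mathbf{K}^{(1)}$, also $\mathbf{K}^{(1)}$ — vanishes once its continuous argument has norm at least one whenever its discrete argument is nonzero. Because $V$ has finite support and $\hat\delta\to\delta_0$, for $n$ large every pair with $V_{i_1}\ne V_{i_2}$ contributes only when $\|U_{i_1}-U_{i_2}\|\le 2h$ (the continuous-argument estimation error being $O(\epsilon_n)=o(h)$), so those pairs carry an $h^{\dim(U)}$ factor through $f_{U_1-U_2}$; the pairs with $V_{i_1}=V_{i_2}$, where the discrete component of $\xi_{i_1 i_2}$ vanishes, are handled by the same continuous-kernel averaging using boundedness and integrability of $\mathbf{K}^{(1)}(\cdot,0)$. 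This is the delicate point: one must argue that the kernel average is $O_p(1)$ rather than $O_p(h^{-\dim(U)})$.

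Finally I would bound the resulting $Z$-only kernel-weighted U-statistic by its expectation: conditioning on $U_1-U_2$ (and on $V_1=V_2$ where relevant), changing variables by $h$, and invoking that $f_{U_1-U_2}$ and $f_{U_1-U_2\mid V_1=V_2}$ are bounded and that $E[\|Z_1-Z_2\|^4\mid U_1-U_2,V_1=V_2]$ is bounded (Assumption \ref{assn:smoothness_regrefun}(ii), which by Jensen gives bounded conditional second and third moments of $\|Z_1-Z_2\|$, with the general-pair moments handled by the bounds on $E[\|Z_2\|\mid\cdot]$ and $E[\|Z_2\|^2\mid\cdot]$), so the mean of a generic summand is $O(1)$ and, by Markov's inequality, the average is $O_p(1)$. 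Combining, $\frac{1}{n^2}\sum_{i_1,i_2}(\hat W_{i_1 i_2}-W_{i_1 i_2})=O_p(\epsilon_n/h)=o_p(1)$ by Assumption \ref{assn:prel_estimation}, which is the claim. The two places where genuine care is needed are (i) the support and finite-support argument that keeps the kernel averages $O_p(1)$, and (ii) the bookkeeping that isolates $\epsilon_n$ as a deterministic factor so the remaining U-statistic can be treated purely in terms of the law of $Z$.
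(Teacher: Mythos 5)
Your proposal is correct and follows essentially the same route as the paper's proof: linearize the kernel in the estimated indices, pull out the estimation error as an $o_p(h)/h=o_p(1)$ factor, and bound the remaining kernel-weighted U-statistic by its expectation via the compact support of $\mathbf{K}$, a change of variables in $h$, and the conditional moment bounds of Assumption \ref{assn:smoothness_regrefun}. The only material difference is that you use an exact first-order mean-value form---so you must dominate $\mathbf{K}^{(1)}$ at the random intermediate point by an indicator of $\|\zeta_{i_1}-\zeta_{i_2}\|\le 2h$, which is precisely the paper's $\mathcal{U}_2$ device---whereas the paper expands around the true index with a quadratic remainder; your version needs only the bounded first derivative actually stated in Assumption \ref{assn:kernl+band}(i), while keeping $\|Z_{i_1}-Z_{i_2}\|$ inside the sum (rather than taking a supremum over the $\hat\delta$-induced error) costs one extra conditional moment, which Jensen recovers from the fourth-moment bound.
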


\begin{proof}
By Assumption \ref{assn:prel_estimation}, we can assume $\|(\hat\zeta_{i_1}-\hat\zeta_{i_2})-(\zeta_{i_1}-\zeta_{i_2})\|<h$ without loss of generality. 
Thus 
$\|\zeta_{i_1}-\zeta_{i_2}\|\geq 2h\implies \|\hat\zeta_{i_1}-\hat\zeta_{i_2}\|\geq h$, and therefore 
$$
\left|\mathbf{K}\left((\hat\zeta_{i_1}-\hat\zeta_{i_2})/h\right)-
\mathbf{K}\left((\zeta_{i_1}-\zeta_{i_2})/h\right)\right|
\leq
1\{\|\zeta_{i_1}-\zeta_{i_2}\|\leq 2h\}\left|\mathbf{K}\left((\hat\zeta_{i_1}-\hat\zeta_{i_2})/h\right)-
\mathbf{K}\left((\zeta_{i_1}-\zeta_{i_2})/h\right)\right|.
$$
By the second-order Taylor expansion, there is some constant $C$ such that 
$$
\mathbf{K}\left((\hat\zeta_{i_1}-\hat\zeta_{i_2})/h\right)-
\mathbf{K}\left((\zeta_{i_1}-\zeta_{i_2})/h\right)
=\frac{1}{h}\mathbf{K}^{(1)}\left((\zeta_{i_1}-\zeta_{i_2})/h\right)((\hat\zeta_{i_1}-\hat\zeta_{i_2})-(\zeta_{i_1}-\zeta_{i_2}))
+\frac{1}{h^2}R_{2,i_1i_2}
$$
with $\|R_{2,i_1i_2}\|\leq C\left\|(\hat\zeta_{i_1}-\hat\zeta_{i_2})-(\zeta_{i_1}-\zeta_{i_2})
\right\|^2$. 
Therefore, 
\begin{align*}
&\left|\mathbf{K}\left((\hat\zeta_{i_1}-\hat\zeta_{i_2})/h\right)-
\mathbf{K}\left((\zeta_{i_1}-\zeta_{i_2})/h\right)\right|
\\&\leq
\frac{1}{h}\left\|\mathbf{K}^{(1)}\left((\zeta_{i_1}-\zeta_{i_2})/h\right)\right\|\|(\hat\zeta_{i_1}-\hat\zeta_{i_2})-(\zeta_{i_1}-\zeta_{i_2})\|+\frac{1}{h^2}1\{\|\zeta_{i_1}-\zeta_{i_2}\|\leq 2h\}\left\|R_{2,i_1i_2}\right\|.
    \end{align*}
Since $\left\|\hat{W}_{i_1i_2}-{W}_{i_1i_2}\right\|
\leq
\frac{1}{h^{\dim(U)}}\left|\mathbf{K}\left((\hat\zeta_{i_1}-\hat\zeta_{i_2})/h\right)-
\mathbf{K}\left((\zeta_{i_1}-\zeta_{i_2})/h\right)\right|\|Z_{i_1}-Z_{i_2}\|^2$,
we have 
\begin{eqnarray*}
&&
\left\|\frac{1}{n^2}\sum_{i_1,i_2}\left(\hat{W}_{i_1i_2}-{W}_{i_1i_2}\right)\right\|
\\
&&\leq
\frac{1}{n^2}\sum_{i_1,i_2}\frac{1}{h^{\dim(U)+1}}\left\|\mathbf{K}^{(1)}\left((\zeta_{i_1}-\zeta_{i_2})/h\right)\right\|
\|Z_{i_1}-Z_{i_2}\|^2\left\|(\hat\zeta_{i_1}-\hat\zeta_{i_2})-(\zeta_{i_1}-\zeta_{i_2})\right\|
\\&&\quad+
\frac{1}{n^2}\sum_{i_1,i_2}\frac{1}{h^{\dim(U)+2}}
1\{\|\zeta_{i_1}-\zeta_{i_2}\|\leq 2h\}\|Z_{i_1}-Z_{i_2}\|^2\left\|R_{2,i_1i_2}\right\|
\\
&&\leq
\mathcal{U}_1\frac{1}{h}\sup_{(i_1,t_1,i_2,t_2): i_1\ne i_2}\left\|(\hat\zeta_{i_1}-\hat\zeta_{i_2})-(\zeta_{i_1}-\zeta_{i_2})\right\|
+C\mathcal{U}_2\frac{1}{h^{2}}
\sup_{(i_1,t_1,i_2,t_2): i_1\ne i_2}\left\|(\hat\zeta_{i_1}-\hat\zeta_{i_2})-(\zeta_{i_1}-\zeta_{i_2})
\right\|^2,
\end{eqnarray*}
where 
$\mathcal{U}_1\equiv\frac{1}{n^2}\sum_{i_1,i_2}\frac{1}{h^{\dim(U)}}\|\mathbf{K}^{(1)}((\zeta_{i_1}-\zeta_{i_2})/h)\|\|Z_{i_1}-Z_{i_2}\|^2$ and $\mathcal{U}_2\equiv\frac{1}{n^2}\sum_{i_1,i_2}\frac{1}{h^{\dim(U)}}1\{\|\zeta_{i_1}-\zeta_{i_2}\|\leq 2h\}\|Z_{i_1}-Z_{i_2}\|^2$.
To show this lemma, by Assumption \ref{assn:prel_estimation}, it suffices to show $\mathcal{U}_1=O_p(1)$
 and $\mathcal{U}_2=O_p(1)$.
Note that 
\begin{eqnarray*}
E[|\mathcal{U}_1|]
&\leq&
\frac{1}{n^2}\sum_{i_1,i_2}E[\frac{1}{h^{\dim(U)}}\left\|\mathbf{K}^{(1)}\left((\zeta_{i_1}-\zeta_{i_2})/h\right)\right\|
E[\|Z_{i_1}-Z_{i_2}\|^2\mid \zeta_{i_1}-\zeta_{i_2}]]\\
&\leq&
CE\left[\frac{1}{h^{\dim(U)}}\left\|\mathbf{K}^{(1)}\left([(U_1-U_2)^\intercal,(V_1-V_2)^\intercal]^\intercal/h\right)\right\|
\right]
\end{eqnarray*}
for some constant $C$. 
For sufficiently small $h$, 
\begin{eqnarray*}
E[|\mathcal{U}_1|]
&\leq&
CE\left[\frac{1}{h^{\dim(U)}}\left\|\mathbf{K}^{(1)}\left([(U_1-U_2)^\intercal/h,0^\intercal]^\intercal\right)\right\|
\right].
\end{eqnarray*}
Using the change of variables, 
\begin{eqnarray*}
E[|\mathcal{U}_1|]
&\leq&
C\int\frac{1}{h^{\dim(U)}}\left\|\mathbf{K}^{(1)}\left([u^\intercal/h,0^\intercal]^\intercal\right)\right\|f_{U_1-U_2}(u)du\\
&=&
C\int \left\|\mathbf{K}^{(1)}\left([u^\intercal,0^\intercal]^\intercal\right)\right\|f_{U_1-U_2}(uh)du\\
&=&
O(1).
\end{eqnarray*}
Similarly, we can show  $\mathcal{U}_2=O_p(1)$. 
\end{proof}

\begin{lemma}\label{lemma_num2}
Under the assumptions in Theorem \ref{thm:tilde_sigma},
$\frac{1}{n^2}\sum_{i_1,i_2}({W}_{i_1i_2}-E[{W}_{i_1i_2}])=o_p(1).$
\end{lemma}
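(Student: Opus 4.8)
The plan is to treat $\frac{1}{n^2}\sum_{i_1,i_2}(W_{i_1i_2}-E[W_{i_1i_2}])$ as a (harmlessly rescaled) second-order $U$-statistic with an $n$-dependent, matrix-valued kernel, and to show it converges to zero in mean square, after which the claim follows by Chebyshev's inequality. Write $g_n(z_1,z_2)\equiv\frac{1}{h^{\dim(U)}}\mathbf{K}((\zeta(z_1)-\zeta(z_2))/h)(z_1-z_2)(z_1-z_2)^\intercal$ where $\zeta(z)=[\Pi_0(z)^\intercal,(\delta_0^\intercal z)^\intercal]^\intercal$, so that $g_n$ is symmetric and $W_{i_1i_2}=g_n(Z_{i_1},Z_{i_2})$ for $i_1\ne i_2$ while $W_{ii}=0$. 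Then $\frac{1}{n^2}\sum_{i_1,i_2}(W_{i_1i_2}-E[W_{i_1i_2}])=\frac{n(n-1)}{n^2}\bigl(U_n-E[g_n(Z_1,Z_2)]\bigr)$ with $U_n=\binom{n}{2}^{-1}\sum_{i<j}g_n(Z_i,Z_j)$, and since $n(n-1)/n^2\to 1$ it suffices to show $E\|U_n-E[g_n(Z_1,Z_2)]\|^2\to 0$.

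I would expand $E\|U_n-E[g_n]\|^2=\binom{n}{2}^{-2}\sum_{i<j}\sum_{k<l}\mathrm{Cov}\bigl(g_n(Z_i,Z_j),g_n(Z_k,Z_l)\bigr)$ and partition the double sum by the number of indices shared between $\{i,j\}$ and $\{k,l\}$. Pairs with no common index contribute zero by independence. The $O(n^3)$ pairs sharing exactly one index each have covariance bounded by $\sigma_{1,n}^2\equiv E\|g_{1,n}(Z_1)\|^2$, where $g_{1,n}(z_1)\equiv E[g_n(z_1,Z_2)]$; the $O(n^2)$ pairs with $\{i,j\}=\{k,l\}$ each have covariance bounded by $\sigma_{2,n}^2\equiv E\|g_n(Z_1,Z_2)\|^2$. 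Hence $E\|U_n-E[g_n]\|^2=O(n^{-1})\,\sigma_{1,n}^2+O(n^{-2})\,\sigma_{2,n}^2$.

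The remaining work is to bound the two variances in terms of the bandwidth. For $\sigma_{1,n}^2$: Assumption \ref{assn:kernl+band}(ii) restricts attention to $Z_2$ with $V_2=V_1$ and $\|U_1-U_2\|\le h$, a change of variables in the $\dim(U)$ continuous coordinates absorbs the $h^{-\dim(U)}$ factor, and the boundedness of $f_{U_1-U_2\mid Z_1}$ and of $E[\|Z_2\|^2\mid U_1-U_2,Z_1]$ in Assumption \ref{assn:smoothness_regrefun}(ii) give $\sup_{z_1}\|g_{1,n}(z_1)\|=O(1)$, so $\sigma_{1,n}^2=O(1)$. For $\sigma_{2,n}^2$: bound $\|g_n(Z_1,Z_2)\|^2\le h^{-2\dim(U)}\mathbf{K}((\zeta_1-\zeta_2)/h)^2\|Z_1-Z_2\|^4$, condition on $\zeta_1-\zeta_2$, use $E[\|Z_1-Z_2\|^4\mid U_1-U_2,V_1=V_2]=O(1)$ (Assumption \ref{assn:smoothness_regrefun}(ii)), and change variables with $f_{U_1-U_2}$ bounded to get $\sigma_{2,n}^2=O(h^{-\dim(U)})$. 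Combining, $E\|U_n-E[g_n]\|^2=O(n^{-1})+O\bigl((nh^{\dim(U)/2})^{-2}\bigr)\to 0$ by Assumption \ref{assn:kernl+band}(iv); Chebyshev's inequality then yields $U_n-E[g_n(Z_1,Z_2)]=o_p(1)$, and the lemma follows.

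I expect the main obstacle to be the change-of-variables bookkeeping given that $\zeta=[U^\intercal,V^\intercal]^\intercal$ is part continuous and part discrete: one has to verify that $h^{-\dim(U)}$ is exactly the correct normalization because, by Assumption \ref{assn:kernl+band}(ii), the kernel concentrates only in the $\dim(U)$ continuous directions, and that each moment-boundedness condition in Assumption \ref{assn:smoothness_regrefun}(ii) is used precisely where needed---the $Z_1$-conditional moments to control the single projection $g_{1,n}$, and $E[\|Z_1-Z_2\|^4\mid U_1-U_2,V_1=V_2]$ together with boundedness of $f_{U_1-U_2}$ to control the full kernel $g_n$. It is worth noting that the relatively weak bandwidth condition $nh^{\dim(U)/2}\to\infty$ suffices here precisely because the potentially large quantity $\sigma_{2,n}^2$ enters divided by $n^2$ rather than $n$.
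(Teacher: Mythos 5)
Your proposal is correct and follows essentially the same route as the paper: both treat the sum as a second-order $U$-statistic, invoke the $U$-statistic variance decomposition, and establish exactly the two bounds $\sigma_{1,n}^2=O(1)$ and $\sigma_{2,n}^2=O(h^{-\dim(U)})$ via the kernel's support restriction, a change of variables in the continuous coordinates, and the moment conditions of Assumption \ref{assn:smoothness_regrefun}(ii), concluding with the bandwidth condition $nh^{\dim(U)/2}\to\infty$. The only cosmetic difference is that the paper bounds the scalar dominating quantity $h^{-\dim(U)}\mathbf{K}(\zeta_{12}/h)\|Z_1-Z_2\|^2$ while you work with the matrix-valued kernel directly.
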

\begin{proof}
Based on the variance formula for U-statistics, it suffices to show $Var(\frac{1}{h^{\dim(U)}}\mathbf{K}(\zeta_{12}/h)\|Z_{1}-Z_{2}\|^2)=O(h^{-\dim(U)})$ and $Var(E[\frac{1}{h^{\dim(U)}}\mathbf{K}(\zeta_{12}/h)\|Z_{1}-Z_{2}\|^2\mid Z_{1}])=O(1)$.

First, we are going to show $Var\left(\frac{1}{h^{\dim(U)}}\mathbf{K}\left(\zeta_{12}/h\right)
\|Z_{1}-Z_{2}\|^2\right)=O(h^{-\dim(U)})$. 
Note that 
\begin{eqnarray*}
Var\left(\frac{1}{h^{\dim(U)}}\mathbf{K}\left(\zeta_{12}/h\right)
\|Z_{1}-Z_{2}\|^2\right)
&\leq&
E\left[\frac{1}{h^{2\dim(U)}}\mathbf{K}\left(\zeta_{12}/h\right)^2
E\left[\|Z_{1}-Z_{2}\|^4\mid\zeta_{12}\right]\right]
\\
&=&
O(1)
E\left[\frac{1}{h^{2\dim(U)}}\mathbf{K}\left(\zeta_{12}/h\right)^2
\right].
\end{eqnarray*}
For sufficiently small $h$, 
$$
Var\left(\frac{1}{h^{\dim(U)}}\mathbf{K}\left(\zeta_{12}/h\right)
\|Z_{1}-Z_{2}\|^2\right)
=O(1)
E\left[\frac{1}{h^{2\dim(U)}}\mathbf{K}\left(\left[(U_{1}-U_{2})^\intercal/h,0^\intercal\right]^\intercal\right)^2.
\right]
$$
Using the change of variables, 
\begin{eqnarray*}
Var\left(\frac{1}{h^{\dim(U)}}\mathbf{K}\left(\zeta_{12}/h\right)
\|Z_{1}-Z_{2}\|^2\right)
&=&
O(1)
\int \frac{1}{h^{2\dim(U)}}\mathbf{K}\left(\left[u^\intercal/h,0^\intercal\right]^\intercal\right)^2
f_{U_1-U_2}(u)du
\\
&=&
O(1)
\int \frac{1}{h^{\dim(U)}}\mathbf{K}\left(\left[u^\intercal,0^\intercal\right]^\intercal\right)^2
f_{U_1-U_2}(uh)du
\\
&=&
O(h^{-(\dim(U))}).
\end{eqnarray*}

Second, we are going to show $Var\left(E\left[\frac{1}{h^{\dim(U)}}\mathbf{K}\left(\zeta_{12}/h\right)
\|Z_{1}-Z_{2}\|^2\mid Z_{1}\right]\right)=O(1)$. 
For sufficiently small $h$, 
$$
E\left[\mathbf{K}\left(\zeta_{12}/h\right)
\|Z_{1}-Z_{2}\|^2\mid Z_{1}\right]\leq E\left[\mathbf{K}\left(\left[(U_{1}-U_{2})^\intercal/h,0^\intercal\right]^\intercal\right)
\|Z_{1}-Z_{2}\|^2\mid Z_{1}\right].
$$
Since $E\left[\|Z_{2}\|^2\mid U_{1}-U_{2},Z_{1}\right]$ and $E\left[\|Z_{2}\|^2\mid U_{1}-U_{2},Z_{1}\right]$ are bounded, there are some constants $C_0,C_1,C_2$ such that 
$$
E\left[\mathbf{K}\left(\zeta_{12}/h\right)
\|Z_{1}-Z_{2}\|^2\mid Z_{1}\right]\leq E\left[\mathbf{K}\left(\left[(U_{1}-U_{2})^\intercal/h,0^\intercal\right]^\intercal\right)
(C_0+C_1\|Z_{1}\|+C_2\|Z_{1}\|^2)
\mid Z_{1}\right].
$$
Using the change of variables, 
\begin{eqnarray*}
E\left[\mathbf{K}\left(\zeta_{12}/h\right)
\|Z_{1}-Z_{2}\|^2\mid Z_{1}\right]
&\leq&
\int\mathbf{K}\left(\left[u^\intercal/h,0^\intercal\right]^\intercal\right)
f_{U_{1}-U_{2}\mid Z_{1}}(u)du
(C_0+C_1\|Z_{1}\|+C_2\|Z_{1}\|^2)
\\
&=&
h^{\dim(U)}\int \mathbf{K}\left(\left[u^\intercal,0^\intercal\right]^\intercal\right)
f_{U_{1}-U_{2}\mid Z_{1}}(uh)du
(C_0+C_1\|Z_{1}\|+C_2\|Z_{1}\|^2).
\end{eqnarray*}
Therefore, 
$Var\left(E\left[\frac{1}{h^{\dim(U)}}\mathbf{K}\left(\zeta_{12}/h\right)
\|Z_{1}-Z_{2}\|^2\mid Z_{1}\right]\right)=O(1)$. 
\end{proof}

\begin{lemma}\label{lemma_num3}
Under the assumptions in Theorem \ref{thm:tilde_sigma},
$\frac{1}{n^2}\sum_{i_1,i_2}E[{W}_{i_1i_2}]={\Sigma}_0Pr(V_1=V_2)f_{U_1-U_2\mid V_1=V_2}(0)+o_p(1)$.
\end{lemma}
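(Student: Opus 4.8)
The plan is to reduce the double average to a single expectation and then carry out a change-of-variables plus dominated-convergence argument, treating the continuous block $U=\Pi_0(Z)$ and the finitely-supported block $V=\delta_0^\intercal Z$ of $\zeta$ separately. Since the quantity in question is deterministic, it suffices to establish the displayed equality with $o(1)$ in place of $o_p(1)$. First I would observe that the diagonal terms ($i_1=i_2$) are identically zero because $Z_{i_1}-Z_{i_2}=0$ there, and that by the i.i.d.\ assumption $E[{W}_{i_1i_2}]$ is the same for every ordered pair with $i_1\ne i_2$; hence $\frac{1}{n^2}\sum_{i_1,i_2}E[{W}_{i_1i_2}]=(1-\tfrac1n)E[{W}_{12}]$, so once we show $E[{W}_{12}]=O(1)$ along the way it is enough to prove $E[{W}_{12}]=\Sigma_0\Pr(V_1=V_2)f_{U_1-U_2\mid V_1=V_2}(0)+o(1)$. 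I would then split $E[{W}_{12}]=E[{W}_{12}1\{V_1=V_2\}]+E[{W}_{12}1\{V_1\ne V_2\}]$.

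For the leading term I would condition on $(U_1-U_2,V_1,V_2)$ and write $E[{W}_{12}1\{V_1=V_2\}]=\Pr(V_1=V_2)\int \frac{1}{h^{\dim(U)}}\mathbf{K}([u^\intercal/h,0^\intercal]^\intercal)\, m(u)\, f_{U_1-U_2\mid V_1=V_2}(u)\,du$, where $m(u)=E[(Z_1-Z_2)(Z_1-Z_2)^\intercal\mid U_1-U_2=u,V_1=V_2]$; then substituting $u=hs$ cancels the $h^{-\dim(U)}$ against the Jacobian. Using $\int\mathbf{K}([s^\intercal,0^\intercal]^\intercal)\,ds=1$ (Assumption \ref{assn:kernl+band}(iii)), continuity of $m$ and of $f_{U_1-U_2\mid V_1=V_2}$ near zero (Assumption \ref{assn:smoothness_regrefun}(i)), and their global boundedness together with integrability of $\mathbf{K}$ (Assumption \ref{assn:smoothness_regrefun}(ii) gives a uniform domination by $C\,|\mathbf{K}([s^\intercal,0^\intercal]^\intercal)|$), dominated convergence yields the limit $\Pr(V_1=V_2)f_{U_1-U_2\mid V_1=V_2}(0)\,m(0)$. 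The final step is the identification $m(0)=\Sigma_0$: since $U=\Pi_0(Z)$ and $V=\delta_0^\intercal Z$, the event $\{U_1=U_2,\ V_1=V_2\}$ is exactly $\{\Pi_0(Z_1)=\Pi_0(Z_2),\ \delta_0^\intercal Z_1=\delta_0^\intercal Z_2\}$, so $m(0)$ coincides with $\Sigma_0$ as defined in \eqref{eq:sigma0}.

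For the cross term I would use that $V$ has finite support, so $\|V_1-V_2\|$ is bounded below by a positive constant on $\{V_1\ne V_2\}$ and $\|(V_1-V_2)/h\|\to\infty$; by the support restriction in Assumption \ref{assn:kernl+band}(ii) the kernel $\mathbf{K}((\zeta_1-\zeta_2)/h)$ then vanishes unless $\|U_1-U_2\|<h$ for $h$ small, and bounding $\mathbf{K}$ and the relevant moments via Assumption \ref{assn:smoothness_regrefun}(ii) shows this piece is $o(1)$ (and in particular $O(1)$, as needed above). I expect this cross term to be the main obstacle: making ``pairs with $V_1\ne V_2$ are negligible'' fully rigorous requires some care with the mixed continuous–discrete kernel --- essentially that $\mathbf{K}(s,w)$ decays as $\|w\|\to\infty$ uniformly over the bounded $s$-region left open by Assumption \ref{assn:kernl+band}(ii), so that the integral over $\{\|U_1-U_2\|<h,\ V_1\ne V_2\}$ is not merely $O(1)$ but genuinely vanishing. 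The leading-term computation, by contrast, is the routine kernel-regression limit once the change of variables and the $U=\Pi_0(Z)$, $V=\delta_0^\intercal Z$ reparametrization are in place.
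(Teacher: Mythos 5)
Your argument is correct and follows essentially the same route as the paper: reduce to $E[W_{12}]$, condition on $V_1=V_2$, change variables $u=hs$, and pass to the limit using $\int\mathbf{K}([s^\intercal,0^\intercal]^\intercal)ds=1$ together with the continuity and boundedness conditions in Assumptions \ref{assn:kernl+band} and \ref{assn:smoothness_regrefun} (the paper additionally extracts an $O(h^2)$ rate from the vanishing first moment of the kernel, but $o(1)$ suffices for the stated claim). The one place you overcomplicate matters is the cross term: Assumption \ref{assn:kernl+band}(ii) is meant to say that $\mathbf{K}([u^\intercal,v^\intercal]^\intercal)$ vanishes whenever $v\ne 0$ (as well as whenever $\|u\|\ge 1$), and since $V$ has finite support, $V_1\ne V_2$ implies $(V_1-V_2)/h\ne 0$ and hence $\mathbf{K}((\zeta_1-\zeta_2)/h)=0$ identically; the $\{V_1\ne V_2\}$ piece is therefore exactly zero, and no auxiliary decay condition in the discrete argument is needed --- this is how the paper dispenses with it in a single line.
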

\begin{proof}
By Assumption \ref{assn:kernl+band}, for sufficiently small $h$, we have 
$$
E[{W}_{12}]
=
E\left[\frac{1}{h^{\dim(U)}}\mathbf{K}\left(\left[(U_1-U_2)^\intercal/h,0^\intercal\right]^\intercal\right)(Z_1-Z_2)(Z_1-Z_2)^\intercal1\{ V_1=V_2\}\right].
$$
It suffices to show $E[{W}_{12}\mid V_1=V_2]={\Sigma}_0f_{U_1-U_2\mid V_1=V_2}(0)+O( h^2)$. 
Note that 
$$
E[{W}_{12}\mid V_1=V_2]=E\left[\frac{1}{h^{\dim(U)}}\mathbf{K}\left(\left[(U_1-U_2)^\intercal/h,0^\intercal\right]^\intercal\right)(Z_1-Z_2)(Z_1-Z_2)^\intercal\mid V_1=V_2\right].
$$
Using the law of iterated expectations and the change of variables, we have 
$$
E[{W}_{12}\mid V_1=V_2]=\int \mathbf{K}\left(\left[u^\intercal,0^\intercal\right]^\intercal\right)E[(Z_1-Z_2)(Z_1-Z_2)^\intercal\mid U_1-U_2=uh,V_1=V_2]f_{U_1-U_2\mid V_1=V_2}(uh)du.
$$
By Assumptions \ref{assn:kernl+band} and \ref{assn:smoothness_regrefun}, the conclusion of this lemma holds. 
\end{proof}

\begin{proof}[Proof of Theorem \ref{thm:tilde_sigma}]
By Lemmas \ref{lemma_num1}, \ref{lemma_num2}, and \ref{lemma_num3},
$$
\frac{1}{n^2}\sum_{i_1,i_2}\hat{W}_{i_1i_2}={\Sigma}_0Pr(V_1=V_2)f_{U_1-U_2\mid V_1=V_2}(0)+o_p(1).
$$
For the denominator, in a similar fashion, we can show that 
$$
\frac{1}{n^2}\sum_{i_1,i_2}\frac{1}{h^{\dim(U)}}\mathbf{K}\left((\hat\zeta_{i_1}-\hat\zeta_{i_2})/h\right)=Pr(V_1=V_2)f_{U_1-U_2\mid V_1=V_2}(0)+o_p(1).
$$
Combining these arguments, we have $\hat{\Sigma}={\Sigma_0}+o_p(1)$.
\end{proof}

\section{Appendix to Section \ref{sec:appl}}\label{app:mcmc}

In this appendix we provide additional details on the Monte Carlo design and our implementation.

\subsection{Estimation and imposition of the constraint matrix}\label{app:mcmcsigma}

In Section \ref{sec:appl-model} we described the form of the constraint matrix $\Sigma_0$ within the model of that section. We now describe the estimator for $\Sigma_{0}$ that we adopt for our simulation exercise. First, we estimate $\Pi_{0}(z)$ by the Nadaraya–Watson kernel estimator for $\Pr(N_{t+1}=0\mid W=w)$, where the kernel function is the product of one-dimensional Gaussian kernels and the common bandwidth is chosen by leave-one-out cross-validation. Then, noting that $\delta_{0}$ is known in this model, we construct a consistent estimator $\tilde{\Sigma}$ as in equation \eqref{eq:tilde_sigma}. As described around that equation, this construction is based on \textcite{ahn2018simple}. We choose the biweight kernel function and use the rule-of-thumb bandwidth $1.06(n(n-1)T(T-1))^{-1/5}$ (after scaling the smoothed CCP differences to have unit variance). To construct the low-rank approximation $\hat\Sigma$, as in \textcite{ahn2018simple}, we use \textit{a priori} knowledge to set the rank-deficiency. Namely, by the discussion in Section \ref{sec:appl-model}, $\mathrm{rank}(\Sigma_0)\ge 8$, so we set the rank-deficiency of $\Sigma_{0}$ as $2=\dim(Z_t)-8$.

We now describe a systematic approach to implementing the estimated constraints $\hat\Sigma\boldsymbol{\gamma}(\theta)=0$, which we adopt in our Monte Carlo experiment. As indicated in the main paper, because $\hat\Sigma\boldsymbol{\gamma}(\theta)=0$ is linear in the optimizing variable $\theta$, given $\hat\Sigma$, the constraint is simple to impose via a reparameterization of $\gamma=\boldsymbol{\gamma}(\theta)$. Namely, set $\gamma=(\gamma_{fixed}^{\intercal},\gamma_{short}^{\intercal})^{\intercal}$
where $\gamma_{short}\in\mathbb{R}^{\mathrm{rank} (\hat\Sigma)}$ is an unknown parameter to be estimated
and $\gamma_{fixed}\in\mathbb{R}^{\dim(Z)-\mathrm{rank}(\hat\Sigma)}$ is a
known linear function of $\gamma_{short}$. Specifically, by writing $\hat\Sigma=QR$ where $Q$ is orthonormal and $R$'s leading principal submatrix of size $\mathrm{rank}(\hat\Sigma)\times\mathrm{rank}(\hat\Sigma)$ is upper triangular, one has $\gamma_{fixed}=M\gamma_{short}$ for
$$
M=\left(\begin{pmatrix}
    I_{\mathrm{rank}(\hat\Sigma)} \\ 0
\end{pmatrix}^\intercal R \begin{pmatrix}
    I_{\mathrm{rank}(\hat\Sigma)} \\ 0
\end{pmatrix}\right)^{-1} \left(\begin{pmatrix} 0 \\
    I_{\dim(Z) - \mathrm{rank}(\hat\Sigma)} 
\end{pmatrix}^\intercal R \begin{pmatrix} 0 \\
    I_{\dim(Z) - \mathrm{rank}(\hat\Sigma)} 
\end{pmatrix}\right),
$$
so the constrained optimization problem can be solved as an unconstrained optimization problem using the reparameterized objective function. To conclude, we reiterate that the above reparameterization technique is not specific to our model, and can therefore be applied generically. We also note that the constraints can be imposed \textit{outside} the optimization algorithm, in the sense that the algorithm will be searching over $\gamma_{short}\in\mathbb{R}^{\dim(\hat\Sigma)}$.

\subsection{The target estimators}\label{app:mcmc-est}

In this section we provide details about the two target estimators based on \textcite{arcidiacono2011conditional} and \textcite{bajari2011simple}.\footnote{A referee suggests that, in the special case of time-invariant $W_i$, a different approach to reducing the computational cost may be available by interpolating model solutions across $\gamma_W^\intercal W_i$. We leave this direction to future research.}

\subsubsection{Estimator based upon \texorpdfstring{\textcite{arcidiacono2011conditional}}{Arcidiacono and Miller (2011)}}

Applied to the model of Section \ref{sec:appl}, \textcite{arcidiacono2011conditional}'s so-called EM algorithm parameterizes $F_\lambda$ as $\{v,\mu\}=\{(v_1,v_2,\ldots,v_R), (\mu_1,\mu_2,\ldots,\mu_{R-1})\}$ where $R$ is known, but $\mathrm{Supp}(\lambda)=\{v_1,v_2,\ldots,v_R\}$ and $\Pr(\lambda=v_r)=\mu_r$ for $r=1,\ldots,R-1$ are unknown. Then, our implementation of the estimator of \textcite{arcidiacono2011conditional} is defined as
\begin{equation}\label{eq:hatq2}
\{\hat\theta^*_{EM},\hat{v},\hat{\mu}\}=\argmax_{\theta,v,\mu}\sum_{i=1}^{n}\log \sum_{r=1}^R  \mu_r\prod_{t=1}^8 \ell_{i,t}(v_r,\theta).
\end{equation}

To implement the estimator, one must choose $R$, the number of support points of $\lambda$. To ensure correct specification, we set $R=2$. Therefore, imposing that $v$ is strictly increasing, the true values of these parameters are $v_0=(0.1,1.1)^\intercal$, $\mu_0=0.37$.

The remaining tuning parameter for the EM algorithm is the stopping criterion. We continue the EM steps until the average (over the parameter vector) percent change in the absolute value of the parameter is less than 0.025\%.

\subsubsection{Estimator based upon \texorpdfstring{\textcite{bajari2011simple}}{Fox, Kim, Ryan and Bajari (2011)}}

Here we outline our target estimator $\hat\theta^*_{H}$ which is based on the so-called `histogram method' as described in \textcite{bajari2011simple} and \textcite{fox2016simple}. The target estimator parameterizes $F_\lambda$ as $\{v,\mu\}=\{(v_1,v_2,\ldots,v_R), (\mu_1,\mu_2,\ldots,\mu_{R-1})\}$ where $R$ and $\mathrm{Supp}(\lambda)\subseteq\{v_1,v_2,\ldots,v_R\}$ are known, but $\Pr(\lambda=v_r)=\mu_r$ for $r=1,\ldots,R-1$ are unknown. Then, our implementation of this method is defined as
\begin{equation}\label{eq:hatq3}
\{\hat\theta^*_{H},\hat{\mu}\}=\argmax_{\theta,\mu}\sum_{i=1}^{n}\log \sum_{r=1}^R \mu_r \prod_{t=1}^8 \ell_{i,t}(v_r,\theta).
\end{equation}
Relative to the problem in equation \eqref{eq:hatq2}, the vector $v$ is taken to be known. As pointed out in \textcite{bajari2011simple} and elsewhere, the `inner' problem of finding
\begin{equation*}
\mu(\theta)=\argmax_{\mu}\sum_{i=1}^{n}\log \sum_{r=1}^R  \mu_r\prod_{t=1}^8 \ell_{i,t}(v_r,\theta)
\end{equation*}
is a convex programming problem that can be solved efficiently even for very large $R$. As indicated by \textcite{fox2016simple} Section 5.1, this fact motivates the following profiling approach to find $\{\hat\theta_H^*,\hat\mu\}$ as $\hat\mu=\mu(\hat\theta^*_{H})$ and
\begin{equation}\label{eq:hatq4}
\hat\theta^*_{H}=\argmax_{\theta}\sum_{i=1}^{n}\log \sum_{r=1}^R\mu_r(\theta) \prod_{t=1}^8 \ell_{i,t}(v_r,\theta) .
\end{equation}

To implement the estimator, one must choose the vector $v$. We explain our choice of $v$ in Section \ref{ssec:appl_est}, which was chosen to contain the support of $\lambda$, ensuring that the model is correctly specified. We do not vary $v$ or $R$ with sample size.

\subsection{Newton Raphson updates}

Step 2 of our estimator is a sequence of exact Newton-Raphson iterates \parencite[i.e.,][Theorem 2]{robinson1988stochastic} from the preliminary estimator $\tilde\theta$ towards the maximizer of $\hat{Q}(\theta)$. Thus, to implement Step 2 in our Monte Carlo design, we must compute the first and second order derivatives of the objective functions in equations \eqref{eq:hatq2} and \eqref{eq:hatq4}. With the exception of $\mu(\theta)$ (since it is defined as a maximizer), these functions can be differentiated using standard automatic differentiation methods. In our implementation, we use the Julia package Zygote. To differentiate $\mu(\theta)$, we use the results of \textcite[Section B.4.1]{NBERw32164}. For the smallest sample size (i.e., $n=100$) when targeting the EM estimator, we find that the Newton-Raphson iterates sometimes diverge such that the estimated $\Pr(\lambda_i=v)>1$ for some $v$. In this case, we perturb the startup value $\tilde\theta_{EM}$ and reinitialize the Newton-Raphson iterates. Having to find and reinitialize the iterates explains the disproportionate computation time for the $\hat\theta_{EM}$, $n=100$ entry in Table \ref{tab:runtime1}.

\subsection{Starting values}\label{app:startval}

As described in the main paper, our starting values are randomly chosen from $\times_{d}^{D}\{e_{d}-5\Delta,e_{d}-4\Delta,\ldots,e_{d}+4\Delta,e_{d}+5\Delta\}$, which contains $11^{D}$ points, where $D$ is the dimension of the optimization problem. We set $\Delta=1$. The center point of the grid $(e_{1},e_{2},\ldots,e_{D})^{\intercal}$ is chosen as follows. First, we solve
\begin{equation*}
\{\hat{e},\hat{v}\}=\argmax_{\theta,v}\sum_{i=1}^{n}\log \prod_{t=1}^8 \ell_{i,t}(v,\theta).
\end{equation*}
This is equivalent to parametric pseudo maximum likelihood estimation (that is, if $\lambda$ is presumed to be degenerate). For the unconstrained estimator $\hat\theta^*$ (for which $D=\dim(\hat{e})$), we set $e=\hat{e}$. For the constrained estimator $\tilde\theta$, so that the starting values satisfy the imposed constraint, we set $e$ to be the projection of $\hat{e}$ on the nullspace of $\hat\Sigma$. To the set of randomly chosen starting values, we add the center point of the grid.

The EM algorithm of \textcite{arcidiacono2011conditional} additionally requires initialization of the support of $\lambda$. Given $|\mathrm{Supp}(\lambda)|=2$, we set these as $\hat{v}\pm0.5$.

\subsection{General advice for implementation}

For a given structural model satisfying index invertibility and a programmed estimator, we suggest the following steps to implement our method:

\begin{enumerate}
\item Estimate $\Sigma_{0}$, possibly using our proposed estimator explained in Sections \ref{sec:const_tildeSigma} and \ref{app:mcmcsigma}.
\item Given the programmed target estimator, reparameterize the programmed objective function using the QR decomposition of $\hat\Sigma$ given in Section \ref{app:mcmcsigma}. Solve the reparameterized optimization problem to form the Step 1 estimator $\tilde\theta$.
\item Given the programmed target objective function, form first and second order derivatives and perform $L$ Newton-Raphson iterates.
\end{enumerate}

\subsection{Additional Monte Carlo results}\label{sec:additional_MC}

Table \ref{tab:conv2} displays the empirical root mean squared error times $\sqrt{n}$ of the difference between our first step estimator $\tilde\theta$ and the target estimator $\hat\theta^*$ for each approach.

% latex table generated in R 4.3.3 by xtable 1.8-4 package
% Thu Apr  3 13:18:40 2025
\begin{table}[h]
\centering
\begin{tabular}{l|rrrr|rrrr}
  \hline
   \rule{0pt}{2.5ex}  & \multicolumn{4}{c|}{$\sqrt{n}(\tilde\theta_{H}-\hat\theta^*_{H})$} & \multicolumn{4}{c}{$\sqrt{n}(\tilde\theta_{EM}-\hat\theta^*_{EM})$} \bigstrut[t] \\
 \multicolumn{1}{r|}{$n:$} & 100 & 200 & 350 & 500 & 100 & 200 & 350 & 500 \\ 
   \hline
$\theta_{W,1}$ & 3.017 & 2.943 & 3.587 & 3.369 & 3.032 & 2.816 & 3.546 & 3.387 \\ 
  $\theta_{W,2}$ & 2.620 & 3.261 & 3.457 & 3.576 & 2.639 & 3.206 & 3.429 & 3.515 \\ 
  $\theta_{W,3}$ & 2.633 & 2.615 & 3.168 & 3.440 & 2.767 & 2.561 & 3.121 & 3.432 \\ 
  $\theta_{W,4}$ & 2.614 & 2.698 & 3.250 & 3.303 & 2.634 & 2.725 & 3.249 & 3.299 \\ 
  $\theta_{W,5}$ & 2.657 & 3.246 & 3.495 & 3.642 & 2.682 & 3.279 & 3.434 & 3.604 \\ 
  $\theta_{W,6}$ & 2.797 & 3.263 & 3.231 & 3.952 & 2.768 & 3.281 & 3.248 & 3.919 \\ 
  $\theta_{W,7}$ & 3.219 & 3.685 & 3.285 & 4.018 & 3.240 & 3.651 & 3.223 & 3.971 \\ 
  $\theta_{W,8}$ & 3.728 & 3.744 & 3.768 & 4.029 & 3.686 & 3.739 & 3.759 & 3.959 \\ 
  $\theta_{W,9}$ & 3.878 & 4.468 & 4.237 & 4.101 & 3.907 & 4.366 & 4.242 & 4.083 \\ 
  $\theta_{EC}$ & 0.554 & 0.376 & 0.349 & 0.335 & 0.279 & 0.223 & 0.170 & 0.129 \\ 
  $\theta_{FC}$ & 1.064 & 0.912 & 0.719 & 0.733 & 0.511 & 0.554 & 0.354 & 0.328 \\ 
   \hline
\end{tabular}
\caption{Empirical mean squared error of $\sqrt{n}(\tilde\theta_{H}-\hat\theta^*_{H})$ and $\sqrt{n}(\tilde\theta_{EM}-\hat\theta^*_{EM})$, for each sample size and element of $\theta$, calculated over the 100 replications.  $H$ denotes the estimator based upon \textcite{bajari2011simple} and $EM$ denotes the estimator based upon \textcite{arcidiacono2011conditional}.} 
\label{tab:conv2}
\end{table}

\subsection{Equation \eqref{eq:assumption1_section5} in Section \ref{sec:appl-model}}\label{appendix:monotoinicityinSection5}

In this section, we show \eqref{eq:assumption1_section5} under the following assumption on the initial condition:
$$
Pr(N_1=0\mid \lambda,W)=Pr(N_1=0\mid \lambda,\theta_W^\intercal W),
$$
$$
Pr(N_1=0\mid \lambda,\theta_W^\intercal W)>0,\mbox{ and }
$$
$$\frac{\partial}{\partial u}Pr(N_1=0\mid \lambda,\theta_W^\intercal W=u)\leq 0.
$$
Let $P_u(n)$ be the conditional choice probability of $A_{i,t}=1$ conditional on $\lambda_i + \theta^\intercal_WW_i=u$ and $N_{i,t}=n$.
The $\implies$ part follows from the following equalities:  
\begin{align*}
\Pi_0(w,t)
&=
Pr(N_t=0\mid W=w)
\\
&=
E[
Pr(N_t=0\mid \lambda,W)\mid W=w]
\\
&=
E[Pr(N_1=0\mid \lambda,W)\prod_{s=1}^{t-1}Pr(A_s=0\mid N_s=0,\lambda,W)\mid W=w]
\\
&=
E[Pr(N_1=0\mid \lambda,\theta_W^\intercal W=\theta_W^\intercal w)\prod_{s=1}^{t-1}Pr(A_s=0\mid N_s=0,\lambda,\theta_W^\intercal W=\theta_W^\intercal w)\mid W=w]
\\
&=
E[Pr(N_1=0\mid \lambda,\theta_W^\intercal W=\theta_W^\intercal w)\prod_{s=1}^{t-1}(1-P_{\theta_W^\intercal w}(0))\mid W=w]
\\
&=
E[Pr(N_1=0\mid \lambda,\theta_W^\intercal W=\theta_W^\intercal w)\prod_{s=1}^{t-1}(1-P_{\theta_W^\intercal w}(0))],
\end{align*}
where the third equality follows from the independence  between $A_t$ and $(A_{t-1},N_{t-1},\ldots,A_{1},N_{1})$ given $(N_t,\lambda,W)$, the fourth equality follows from the fact the initial distribution of $N_1$ and the utility function depends on $(\lambda,W)$ only through $(\lambda,\theta_W^\intercal W)$, and the last equality follows from the independence  between $\lambda$ and $W$.
To show the $\Longleftarrow$ part of \eqref{eq:assumption1_section5}, we show show the mapping 
$$
u\mapsto E[Pr(N_1=0\mid \lambda,\theta_W^\intercal W=u)\prod_{s=1}^{t-1}(1-P_{u}(0))]
$$
is strictly decreasing in $u$. 
Since $\frac{\partial}{\partial u}Pr(N_1=0\mid \lambda,\theta_W^\intercal W=u)\leq 0$, it is sufficient to show $\frac{d}{du}P_u(n)>0$.
The equilibrium ex-ante value function satisfies 
$$
{v}_u(n)=\sum_{t=0}^{\infty}\beta^tE[A_{i,s+t}(u
- \left(\theta_{FC}N_{i,s+t} + \theta_{EC} \mathsf{1}(N_{i,s+t}=0)\right)+e_{N_{i,s+t}}(P_u(N_{i,s+t})))\mid N_{i,s}=n].
$$
where $e_n(P_u(n))=E[\epsilon_{i,t}\mid A_{i,t}=1,N_{i,t}=n]$ and we omit the conditioning event $\lambda_i + \theta^\intercal_WW_i=u$ for notational simplicity. By the zero Jacobian property (cf. Proposition 2 of \cite{aguirregabiria2002swapping}), 
$$
\frac{d}{du}{v}_u(n)=\sum_{t=0}^{\infty}\beta^tE[A_{i,s+t}\mid N_{i,s}=n].
$$
This equation implies the recursive relationship 
$$
\frac{d}{du}{v}_u(n+1)=\frac{(1-\beta (1-P_u(n)))}{\beta P_u(n)}\frac{d}{du}{v}_u(n)-\frac{1}{\beta}
\mbox{ for }n<3,
$$
because 
\begin{align*}
\frac{d}{du}{v}_u(n)
&=
\sum_{t=0}^{\infty}\beta^tE[A_{i,s+t}\mid N_{i,s}=n]\\
&=
P_u(n)
+\sum_{t=1}^{\infty}\beta^t(P_u(n)E[A_{i,s+t}\mid N_{i,s+1}=n+1]
+(1-P_u(n))E[A_{i,s+t}\mid N_{i,s+1}=n])
\\
&=
P_u(n)+
\beta P_u(n)\frac{d}{du}{v}_u(n+1)
+\beta (1-P_u(n))\frac{d}{du}{v}_u(n).
\end{align*}
Then the choice-specific value function satisfies
$$
\frac{d}{du}(v_u(1,n)-v_u(0,n))
=
\begin{cases}
1+\beta \frac{d}{du}({v}_u(n+1)-{v}_u(n))
=
 \frac{1-\beta}{ P_u(n)}\frac{d}{du}{v}_u(n)
>0&\mbox{ if }n<3\\
1&\mbox{ if }n=3,
\end{cases}
$$
which implies $\frac{d}{du}P_u(n)=\frac{d}{du}F_{-\epsilon}(v_u(1,n)-v_u(0,n))>0$.
\end{document}